\numberwithin{equation}{section}
\theoremstyle{plain}
\newtheorem{theorem}{Theorem}[section]
\newtheorem{corollary}[theorem]{Corollary}
\newtheorem{lemma}[theorem]{Lemma}
\newtheorem{proposition}[theorem]{Proposition}
\newtheorem{definition}{Definition}[section]
\newcommand{\R}{\mathbb{R}}
\newcommand{\A}{\mathcal{A}}
\newcommand{\Sp}{\mathbb{S}}
\DeclareMathOperator*{\argmin}{\arg\!\min}
\title{Sufficient Dimension Reduction and Modeling Responses Conditioned on Covariates: An Integrated Approach via Convex Optimization}
\author{Armeen Taeb $^\dag$ and Venkat Chandrasekaran $^\dag, ^\ddag$
\thanks{Email: ataeb@caltech.edu, venkatc@caltech.edu} \vspace{.25in} \\ $^\dag$ Department of Electrical Engineering \\  $^\ddag$ Department of Computing and Mathematical Sciences \\ California Institute of Technology \\ Pasadena, Ca 91125}
\begin{document}
\maketitle

\begin{abstract}
Given observations of a collection of covariates and responses $(Y, X) \in \R^p \times \R^q$, sufficient dimension reduction (SDR) techniques aim to identify a mapping $f: \R^q \rightarrow \R^k$ with $k \ll q$ such that $Y|f(X)$ is independent of $X$. The image $f(X)$ summarizes the relevant information in a potentially large number of covariates $X$ that influence the responses $Y$. In many contemporary settings, the number of responses $p$ is also quite large, in addition to a large number $q$ of covariates. This leads to the challenge of fitting a succinctly parameterized statistical model to $Y|f(X)$, which is a problem that is usually not addressed in a traditional SDR framework. In this paper, we present a computationally tractable convex relaxation based estimator for simultaneously (a) identifying a linear dimension reduction $f(X)$ of the covariates that is sufficient with respect to the responses, and (b) fitting several types of structured low-dimensional models -- factor models, graphical models, latent-variable graphical models -- to the conditional distribution of $Y|f(X)$. We analyze the consistency properties of our estimator in a high-dimensional scaling regime. We also illustrate the performance of our approach on a newsgroup dataset and on a dataset consisting of financial asset prices.\end{abstract}

\begin{keywords}
$\ell_1$ norm regularization; nuclear norm regularization; Graphical Lasso;\\ high-dimensional inference; algebraic statistics.
\end{keywords}

\section{Introduction} \label{sec:intro}
Sufficient dimension reduction (SDR) is a framework for identifying a low-dimensional approximation of a large collection of covariates that is sufficient for predicting a set of responses \citep{LiSDR1991, Duan1991, CookSAVE1991}. Given covariates $X \in \R^q$ and responses $Y \in \R^p$, the objective of SDR is to obtain a mapping $f: \R^q \rightarrow \R^k$ with $k \ll q$ such that the responses $Y$ are independent of the covariates $X$ conditioned on $f(X)$; equivalently, the conditional distribution of $Y | f(X)$ is the same as that of $Y | X$. The image $f(X)$ is called a dimension reduction of the covariates $X$ that is sufficient with respect to the responses $Y$, and it summarizes the relevant information in $X$ that influences $Y$.

In many contemporary settings, the number of responses $Y$ is also quite large (in addition to a potentially large number of covariates $X$). For example, in financial modeling applications the responses are the prices of financial assets (numbering in the several hundreds) and the covariates may be macroeconomic indicators (see the numerical experiment in Section~\ref{section:experiments}). In gene microarray analysis, the responses correspond to the expression levels of a large number of genes (on the order of tens of thousands), and the covariates could be a collection of physiological attributes \citep{Cheung2002,Brem2005}. In these problem domains, the conditional distribution of $Y | f(X)$ is specified using a large number of parameters, and as a result the estimation of this conditional distribution leads to several statistical and computational challenges -- e.g., problems with overfitting when given a modest number o
f observations, and difficulties with developing algorithmic procedures that operate within a reasonable computational budget. These challenges in high-dimensional inference are by now well-recognized \citep{BuhV2011, Wai2014}, and they have been addressed in several settings based on approximations of high-dimensional distributions consisting of many degrees of freedom by elements from structured classes of models specified using a small number of parameters; examples of such structured families include models described by sparse or banded covariance matrices \citep{Bicla2008, Biclb2008, Elk(2008)}, graphical models \citep{YuanLin2006, Friedman2008,RavWRY2008,RotBLZ2008}, and factor models \citep{Fan2008}; see \citet{Wai2014} for a more extensive list of references.
\subsection{Our Contributions}
\label{section:contribution}
Building on this prior literature, we describe a new methodology based on convex optimization that integrates sufficient dimension reduction with techniques to fit succinctly parameterized models to the high-dimensional conditional distribution of the responses given the covariates. In particular, given observations of a set of responses $Y$ and covariates $X$, we fit a linear Gaussian model with the following two properties: $(a)$ there exists a low-dimensional linear dimension reduction\footnote{If the covariates and responses are jointly Gaussian, it suffices to consider dimension reductions specified by linear mappings of the covariates \citep{LiSDR1991,ChechikInform2005}. Even in more general settings, many approaches to SDR construct linear dimension reductions of the covariates for computational reasons \citep{LiSDR1991,Duan1991,CookSAVE1991,LiSDR1992,CookNi2005}} $f(X)$ of the covariates that is sufficient with respect to the responses, and $(b)$ the conditional distribution of $Y | f(X)$ is specified by a concisely parameterized statistical model -- the three concrete examples that we consider are a factor model, a graphical model, and a latent-variable graphical model.

If $(Y,X) \in \R^p \times \R^q$ is a jointly Gaussian random vector with covariance matrix \\\allowbreak{$\Sigma = \begin{pmatrix} \Sigma_Y & \Sigma_{YX} \\ \Sigma_{YX}' & \Sigma_{X} \end{pmatrix}$}, the existence of a dimension-$k$ linear projection of the covariates $X$ that is sufficient with respect to the responses $Y$ is equivalent to the rank of the cross-covariance matrix $\Sigma_{YX}$ being at most $k$ (of interest here is the setting in which $k < \min\{p,q\}$). In particular, the sufficient dimension reduction $f(X)$ in such cases is given by the $k$-dimensional row-space of the matrix $\Sigma_{YX} \cdot \Sigma_X^{-1}$, which is the mapping that specifies the best linear estimator of $Y$ based on $X$. On the other hand, the conditional statistics of the responses given the covariates are specified by the submatrix $[\Sigma^{-1}]_{Y}$ of the joint precision matrix $\Sigma^{-1}$. As a result, the problem of fitting a concisely parameterized model to the conditional distribution of the responses given the covariates is more conveniently specified in terms of structured approximations of a submatrix of the precision matrix $\Sigma^{-1}$; for example, fitting a graphical model to the conditional distribution of the responses given the covariates corresponds to approximating the submatrix $[\Sigma^{-1}]_{Y}$ by a sparse matrix. The different parameterizations in which these two modeling tasks are most naturally described -- SDR in terms of covariance matrices and conditional modeling of the responses given the covariates in terms of precision matrices -- poses an obstruction to their integration into a single framework.

We overcome this difficulty by making the observation that $\mathrm{rank}(\Sigma_{YX}) = \mathrm{rank}([\Sigma^{-1}]_{YX})$ in non-degenerate models (i.e., the joint covariance matrix $\Sigma$ is positive definite) based on the following relation between the two alternative forms -- in terms of the precision matrix and in terms of the covariance matrix -- of the mapping that specified the best linear estimate of $Y$ based on $X$:
\begin{equation}
\Sigma_{YX} \cdot \Sigma_{X}^{-1} = -[\Sigma^{-1}]_{Y} \cdot [\Sigma^{-1}]_{YX}
\label{eqn:relation}
\end{equation}
Hence, our approach for integrating SDR and conditional modeling of the responses given the covariates is to fit a Gaussian model specified by a precision matrix $\Theta = \begin{pmatrix} \Theta_Y & \Theta_{YX} \\ \Theta_{YX}' & \Theta_{X} \end{pmatrix}$ to a collection of observations of covariates and responses such that $(a)$ the submatrix $\Theta_{YX}$ is low-rank, and $(b)$ the submatrix $\Theta_{Y}$ (which specifies the conditional distribution of the responses given the covariates) has a concise parameterization that describes a structured model. This reformulation of our modeling framework in terms of precision matrices leads naturally to the following general form of an estimator, given joint observations $\{Y^{(i)}, X^{(i)}\}_{i = 1}^n \subset \R^{p+q}$ of the responses and covariates:
\begin{equation}
\begin{aligned}
\argmin_{\Theta \in \Sp^{p+q}, ~\Theta \succ 0} & -\ell(\Theta; \{Y^{(i)},X^{(i)}\}_{i=1}^n) + \lambda_n [\gamma \|\Theta_{YX}\|_\star + R(\Theta_Y)]
\end{aligned}
\label{eqn:Generic}
\end{equation}
The set $\Sp^{k}$ denotes the space of $k \times k$ symmetric matrices, the function $\ell(\Theta; \allowbreak \{X^{(i)}, \allowbreak Y^{(i)}\}_{i=1}^n)$ denotes the log-likelihood of the observations $\{Y^{(i)},X^{(i)}\}_{i=1}^n$ with respect to a Gaussian distribution parameterized by the precision matrix $\Theta$, the function $\|\cdot\|_\star$ denotes the nuclear norm (sum of the singular values of a matrix), and the function $R: \Sp^p \to \R$ is suitably chosen to promote a desired structure in the submatrix $\Theta_Y$ (i.e., the conditional distribution of the responses given the covariates). The role of the nuclear norm penalty is to promote low rank structure in the submatrix $\Theta_{YX}$; this regularizer has been successfully employed in many settings for fitting structured low-rank models to high-dimensional data \citep{Faz2002,RecFP2009}. Here $\lambda_n > 0$ and $\gamma > 0$ are regularization parameters. By virtue of the convexity of norms and of the negative of the log-likelihood function $\ell(\Theta;\{Y^{(i)},X^{(i)}\}_{i=1}^n)$, the program \eqref{eqn:Generic} is a convex optimization problem if the function $R(\Theta_Y)$ is convex. In the following discussion, we provide three concrete approaches to fit structured low-dimensional models to the conditional distribution of $Y|f(X)$ via suitable choices of the function $R(\Theta_Y)$. \\

\par{\bf SDR + Factor Modeling (SDR-FM)} Fitting a factor model to the conditional distribution of $Y | f(X)$ corresponds to approximating the covariance matrix of $Y | f(X)$ as the sum of a diagonal matrix and a low-rank matrix. By appealing to the Sherman-Morrison-Woodbury formula \citep{Horn1990}, the covariance matrix of $Y | f(X)$ being decomposable as the sum of a diagonal matrix and a low-rank matrix is equivalent to the precision matrix of $Y | f(X)$ being decomposable as the \emph{difference} between a diagonal matrix and a low-rank matrix. Thus, a natural choice for the regularizer $R(\Theta_Y)$ is:
\begin{eqnarray}
\label{eqn:FMReg}
\begin{aligned}
R(\Theta_Y) = \inf_{D_Y,L_Y \in \Sp^p} & \mathrm{trace}(L_Y) & \text{subject to} \hspace{.1in}& \Theta_Y = D_Y - L_Y, ~ L_Y \succeq 0, ~ D_Y ~\mathrm{is~diagonal.}
\end{aligned}
\end{eqnarray}
Here $D_Y,\Theta_Y$ represent the diagonal and low-rank components of $\Theta_Y$. As before, the role of the nuclear norm penalty (the nuclear norm for positive semidefinite matrices reduces to the trace) is to enforce low-rank structure in the $L_{Y}$ component.\\

\par{\bf SDR + Graphical Modeling (SDR-GM)} Fitting a sparse graphical model to the conditional distribution of $Y | f(X)$ corresponds to approximating the submatrix $\Theta_Y$ by a sparse matrix; the sparsity pattern of $\Theta_Y$ specifies the graphical model structure underlying the conditional distribution of $Y | f(X)$. Based on prior work on the Graphical Lasso \citep{YuanLin2006,Friedman2008}, an appropriate choice for the regularizer $R(\Theta_Y)$ is:
\begin{eqnarray}
\begin{aligned}
R(\Theta_Y) = \|\Theta_Y\|_{\ell_1}.
\end{aligned}
\label{eqn:GMReg}
\end{eqnarray}
The function $\|\cdot\|_{\ell_1}$ denotes the $\ell_1$ norm (sum of the magnitudes of the entries of a matrix), and its role is to induce sparsity in the submatrix $\Theta_{Y}$. Regularizers based on the $\ell_1$ norm have been widely and successfully employed in many settings for fitting sparse models to data in high dimensions \citep{Tib1996,Chen,CanRT2006,Donb2006}. The convex program \eqref{eqn:Generic} with $R(\Theta_Y) = \|\Theta_Y\|_{\ell_1}$ is a natural extension of the Graphical Lasso \citep{YuanLin2006,Friedman2008} in which an $\ell_1$ norm penalty is employed to induce sparsity in the precision matrix (although the Graphical Lasso operates purely on observations of a set of responses and it does not consist of a nuclear norm penalty corresponding to an SDR objective). \\

\par{\bf SDR + Latent-Variable Graphical Modeling (SDR-LVGM)} We also describe a generalization of the SDR-GM approach. In practice, it may be expensive or infeasible for a data analyst to gather observations of all the relevant covariates that may potentially impact the responses. As a result, the responses $Y$ could be affected by unobserved latent variables in addition to being influenced by the covariates $X$. These latent variables can lead to confounding dependencies among the responses, which in turn complicates the task of fitting a graphical model to the conditional distribution of $Y | f(X)$. Motivated by these considerations, it is of interest to fit a graphical model to the distribution of $Y | f(X), \zeta$, where $\zeta \in \R^h$ (here $h \ll p$) represent a small number of latent variables that are statistically independent of $f(X)$. As described by \citet{Chand2012}, fitting such a latent-variable graphical model corresponds to approximating the submatrix $\Theta_{Y}$ by the sum of a sparse matrix and a low-rank matrix, rather than just a sparse matrix as in the case of a pure graphical model approximation. Here the low-rank component of $\Theta_{Y}$ accounts for the effect of the latent variables $\zeta \in \R^h$ on the responses $Y$, and the rank of this component is equal to the dimension $h$ of $\zeta$. The sparse component of $\Theta_{Y}$ specifies the graphical model structure underlying $Y | f(X), \zeta$. Building on the insights in \citet{Chand2012}, the regularizer $R_{\delta}(\Theta_Y)$ in this setting chosen as:
\begin{eqnarray}
\begin{aligned}
R_{\delta}(\Theta_Y) = \inf_{S_Y,L_Y \in \Sp^p} & \mathrm{trace}(L_Y) + \delta \|S_Y\|_{\ell_1} & \text{subject to} && \Theta_Y = S_Y - L_Y, ~ L_Y \succeq 0.
\end{aligned}
\label{eqn:LVGMReg}
\end{eqnarray}
The matrices $S_Y, L_Y \in \Sp^p$ in \eqref{eqn:LVGMReg} correspond to the sparse and low-rank components of $\Theta_{Y}$, respectively, and $\delta > 0$ is a regularization parameter. As before, the $\ell_1$ norm and the nuclear norm penalties promote the type of structure that we desire in our model. Plugging in the regularizer $R_{\delta}(\Theta_Y)$ into \eqref{eqn:Generic}, we obtain the following optimization program for joint SDR and latent-variable graphical modeling given observations $\{Y^{(i)}, X^{(i)}\}$ of the responses and the covariates:
\begin{eqnarray} \label{eqn:originalproblem_NS}
(\hat{\Theta}, \hat{S}_Y, \hat{L}_Y) = \argmin_{\substack{\Theta \in \Sp^{p+q}, ~\Theta \succ 0 \\ S_Y,L_Y \in \Sp^p}} & -\ell(\Theta; \{Y^{(i)},X^{(i)}\}_{i=1}^n) + \lambda_n [\gamma\|\Theta_{YX}\|_{\star} + \mathrm{trace}(L_Y) + \delta \|S_Y\|_{\ell_1}] \nonumber \\ \mathrm{s.t.} & \Theta_{Y} = S_Y - L_Y, ~ L_Y \succeq 0. &
\end{eqnarray}
The convex program \eqref{eqn:originalproblem_NS} is a natural extension of the estimator proposed by \citet{Chand2012} in which a trace penalty and an $\ell_1$ norm penalty are employed to fit a latent-variable graphical model to a set of responses. To be clear, the estimator \eqref{eqn:originalproblem_NS} also incorporates SDR to identify a low-dimensional projection of a set of covariates that are sufficient for predicting the responses, which is in contrast to the estimator proposed by \citet{Chand2012}.

{\par}The regularizers defined in \eqref{eqn:FMReg}, \eqref{eqn:GMReg}, and \eqref{eqn:LVGMReg} are convex with respect to the submatrix $\Theta_Y$. As a result, the estimators corresponding the the SDR-FM, SDR-GM, and SDR-LVGM approaches are convex optimization programs. The estimator \eqref{eqn:originalproblem_NS} corresponding to the SDR-LVGM approach is in some sense a generalization of the estimators corresponding to the SDR-GM and SDR-FM approaches, as latent-variable graphical modeling may be viewed as a blend of factor modeling and graphical modeling. As a result, we focus in Section~\ref{section:consist} on analyzing the consistency properties of the estimator \eqref{eqn:originalproblem_NS} in a high-dimensional scaling regime. Specifically, suppose we observe samples $\{Y^{(i)},X^{(i)}\}_{i=1}^n \allowbreak \subset \R^{p+q}$ of a collection of jointly Gaussian responses and covariates $(Y,X) \in \R^{p+q}$, with population precision matrix $\Theta^\star = \begin{pmatrix} S_Y^\star - L_Y^\star & \Theta_{YX}^\star \\ {\Theta_{YX}^\star}' & \Theta_{X}^\star \end{pmatrix}\in \Sp^{p+q}$. Supplying these observations as input to the convex program \eqref{eqn:originalproblem_NS} and obtaining estimates $(\hat{\Theta},\hat{S}_{Y},\hat{L}_{Y}) \in \Sp^{p+q} \times \Sp^p \times \Sp^p$, we prove in Theorem~\ref{theorem:main} that (under certain conditions on $\Theta^\star$ and with high probability) the rank of the submatrix $\hat{\Theta}_{YX}$ is equal to the rank of $\Theta_{YX}^\star$, the rank of $\hat{L}_Y$ is equal to the rank of $L_Y^\star$, and the sparsity pattern of $\hat{S}_Y$ is the same as that of $S_Y^\star$; these recovery guarantees imply that we obtain the correct dimension of the image $f(X)$ specifying the sufficient dimension reduction of the covariates, the correct number of unobserved latent variables, and the correct conditional graphical model structure underlying the population. Informally, the assumptions on the population precision matrix $\Theta^\star$ are that: $(a)$ the submatrix $\Theta^\star_{YX}$ is sufficiently low-rank; $(b)$ the submatrix $\Theta^\star_{Y} = S_Y^\star - L_Y^\star$ is such that $S_Y^\star$ is sufficiently sparse and $L_Y^\star$ is sufficiently low-rank; and $(c)$ the population Fisher information ${\Theta^\star}^{-1} \otimes {\Theta^\star}^{-1}$ obeys certain irrepresentability-type conditions; see Assumptions 1 and 2 in Section~\ref{section:Fishercond}, Theorem~\ref{theorem:main} and the subsequent discussion in Appendix~\ref{sec:Parameterchoices} for a precise formulation of these conditions. The first assumption above on $\Theta^\star$ states that there exists a low-dimensional linear projection $f(X)$ of the covariates $X$ that is sufficient with respect to $Y$. The second condition requires that $Y | f(X)$ is specified by a latent-variable graphical model with a small number of latent variables and a sparse graphical model. The third assumption is analogous to the irrepresentability conditions that play a role in the analysis of the consistency of the Lasso \citep{MeiB2006,ZhaY2006,Wai2014}, the Graphical Lasso \citep{RavWRY2008}, the convex relaxation proposed by Chandrasekaran et al. (2012) for latent-variable graphical modeling, as well as other estimators in high-dimensional inference problems \citep{BuhV2011,Wai2014}.

In Section~\ref{section:experiments} we illustrate the performance of the estimators corresponding to the SDR-FM, SDR-GM, and SDR-LVGM approaches on two datasets. First, we consider a financial asset modeling problem in which the responses are a collection of stock returns of $67$ companies from the Standard and Poor index, and the covariates are the following $7$ macroeconomic indicators: the industrial production index, the inflation rate, the amount of oil exports, the population growth rate, the unemployment rate, the consumer credit score, and the EUR to USD exchange rate. In the second experiment, we analyze the 20newsgroup dataset that consists of $16,242$ samples in $\mathbb{R}^{100}$, with each observation corresponding to a news document. The coordinates of these observations are indexed by a collection of $100$ words, and each observation is a binary vector specifying whether a word appears in the document. Of those $100$ words, the following $9$ words are chosen to be the covariates as they appear to be useful in categorizing newsgroup documents: government, religion, science, technology, war, medicine, world, food, and games. The remaining $91$ words are used as response variables.\\

\par{\bf Adapting to Alternative Forms of SDR} In many settings, one is interested in identifying a \emph{subset} of the covariates $X$ that is useful for predicting the responses $Y$, rather than a generic dimension reduction of $X$. In such cases, it is more natural to fit a linear Gaussian model with a precision matrix $\Theta \in \mathbb{S}^{p+q}$ in which the submatrix $\Theta_{YX}$ is column-sparse (i.e, only a subset of the columns of this matrix are nonzero) instead of being low-rank. This point follows from the relation \eqref{eqn:relation} by noting that the map $-[\Sigma^{-1}]_{Y} \cdot [\Sigma^{-1}]_{YX}$ specifying the best linear estimator of $Y$ based on $X$ must be column-sparse if a subset of the covariates is sufficient for predicting the responses (the indices of the nonzero columns of $\Theta_{YX}$ correspond to the subset of the covariates that are relevant for predicting the responses). To fit a model with a column-sparse submatrix $\Theta_{YX}$, one can modify the family of estimators \eqref{eqn:Generic} by replacing the nuclear norm penalty $\|\Theta_{YX}\|_\star$ with a group norm penalty $\|\Theta_{YX}\|_{2,1} = \sum_{i=1}^q \|(\Theta_{YX})_{:,i}\|_{\ell_2}$, where $(\Theta_{YX})_{:,i}$ represents the $i$'th column of $\Theta_{YX}$. Such group norm penalties are useful for inducing sparsity in entire columns of $\Theta_{YX}$ \citep{YuanLin2006}. As an illustration, given observations $\{Y^{(i)},X^{(i)}\}_{i=1}^n \subset \R^{p+q}$ the estimator \eqref{eqn:originalproblem_NS} can be modified as follows:
\begin{eqnarray} \label{eqn:CovariateSelection}
(\hat{\Theta}, \hat{S}_Y, \hat{L}_Y) = \argmin_{\substack{\Theta \in \Sp^{p+q}, ~\Theta \succ 0 \\ S_Y,L_Y \in \Sp^p}} & \hspace{-.07in}-\ell(\Theta; \{Y^{(i)},X^{(i)}\}_{i=1}^n) + \lambda_n [\gamma\|\Theta_{YX}\|_{2,1} + \mathrm{trace}(L_Y) + \delta\|S_Y\|_{\ell_1}] \nonumber \\ \mathrm{s.t.} & \Theta_{Y} = S_Y - L_Y, ~ L_Y \succeq 0. &
\end{eqnarray}
This estimator simultaneously identifies a subset of the covariates that are relevant for predicting the responses and also fits a latent-variable graphical model to the conditional distribution of the responses given the covariates. The analysis of the statistical consistency of this estimator is similar in spirit to that of the estimator \eqref{eqn:originalproblem_NS}; see Appendix~\ref{section:Covariate} for more details.

\subsection{Related Work}

Many researchers have developed techniques for computing sufficient dimension reductions (see the survey \citep{AdragniCook2009} and the references therein), with Sliced Inverse Regression being a prominent example \citep{LiSDR1991}. In a jointly Gaussian setting, classical approaches such as Canonical Correlations Analysis (CCA) or Partial Least Squares (PLS) may also be employed to compute linear sufficient dimension reductions \citep{Fung2002}, although the objectives of CCA and PLS are somewhat different than that of SDR. More recently, \citet{NegWain2011} employed a nuclear norm penalty in a multivariate linear regression setup to identify a low-dimensional projection of a set of covariates that best predicts a set of responses. However, none of these papers consider the additional challenge of modeling the conditional distribution of the responses given the covariates.

A number of researchers have developed methods for simultaneously obtaining a concise model of the predictive relationship of the covariates on the responses, while also fitting a sparse graphical model to the conditional distribution of the responses given the covariates. For example, the techniques introduced by \citet{RotBLZ2008}, \citet{Cai2010}, and \citet{Yin2011} may be interpreted as seeking a sparse approximation to the matrix $\Sigma_{YX} \cdot \Sigma_{X}^{-1}$ (or equivalently, $-\Theta_Y^{-1} \cdot \Theta_{YX}$) that specifies the best linear estimator of $Y$ based on $X$, in addition to computing a sparse approximation to the submatrix $\Theta_Y$. The algorithms proposed in these papers are either non-convex or involve multi-step procedures consisting of several convex programs. In a different direction, \citet{SohnKim2012} and \citet{Tong2014} consider a regularized log-likelihood convex program with $\ell_1$ norm penalties on the submatrices $\Theta_Y$ and $\Theta_{YX}$ of the joint precision matrix. In contrast to these papers, our approach for modeling the predictive relationship of the covariates on the responses is based on SDR, where we seek a low-dimensional projection of the covariates that is sufficient for predicting the responses. Further, our framework integrates SDR and conditional modeling of the responses given the covariates via a single convex program.

In the same work referenced above, \citet{Tong2014} also consider the problem of selecting a subset of a collection of covariates that are most relevant for predicting a set of responses, while additionally fitting a sparse graphical model to the conditional distribution of the responses given the selected covariates. They address this problem by proposing the following regularized log-likelihood convex program consisting of an $\|\cdot\|_{2,1}$ norm penalty on $\Theta_{YX}$ and an $\ell_1$ norm penalty on $\Theta_Y$:
\begin{eqnarray} \label{eqn:CovariateSelectionGM}
\hat{\Theta} = \argmin_{\substack{\Theta \in \Sp^{p+q}, ~\Theta \succ 0}} -\ell(\Theta; \{Y^{(i)},X^{(i)}\}_{i=1}^n) + \lambda_n [\gamma\|\Theta_{YX}\|_{2,1} + \|\Theta_Y\|_{\ell_1}]
\end{eqnarray}
In Appendix ~\ref{section:Covariate}, we analyze the high-dimensional consistency of the estimator \eqref{eqn:CovariateSelection} -- which fits a latent-variable graphical model to the conditional distribution of the responses given the selected covariates rather than just a graphical model -- and this analysis can be specialized to obtain consistency results for the estimator \eqref{eqn:CovariateSelectionGM}. More broadly, one of the key distinctions between our work and that of \citet{Tong2014} is that our approach \eqref{eqn:Generic} for simultaneous SDR and conditional modeling of responses given covariates is useful for obtaining general linear sufficient dimension reductions of the covariates rather than just selecting subsets of relevant covariates. Moreover, our framework can be adapted to fit three types of models to the conditional distribution of the responses conditioned on the covariates.

\subsection{Notation}\label{section:notation} Given a matrix $U \in \mathbb{R}^{p_1 \times p_2}$, the norm $\|U\|_{\ell_\infty}$ denotes the largest entry in magnitude of $U$, and the norm $\|U\|_2$ denotes the spectral norm (the largest singular value of $U$). The norm $\|U\|_{2,\infty}$ denotes the maximum of the $\ell_2$ norms of the columns of $U$: $\|U\|_{2,\infty} = \max_{i = 1,2,\dots p_2} \|U_{:,i}\|_{\ell_2}$ . We denote the set of $k \times k$ positive semidefinite and positive-definite matrices by $\Sp^k_{+}$ and $\Sp^k_{++}$, respectively. Finally, the linear operators $\mathcal{A}: \Sp^p \times \Sp^p \times \mathbb{R}^{p{\times}q} \times \Sp^q \rightarrow \Sp^{(p+q)}$ and its adjoint $\mathcal{A}^{\dagger}: \Sp^{(p+q)} \rightarrow \Sp^p \times \Sp^p \times \mathbb{R}^{p{\times}q} \times \Sp^q$ are defined as follows:
\vspace{.1in}
\begin{equation}
\mathcal{A}(M, N, K, O) \triangleq \left( \begin{array}{cc}
M - N & K \\
K^T & O \end{array} \right), \qquad \mathcal{A}^{\dagger}\left( \begin{array}{cc}
Q & K \\
K^{T} & O \end{array} \right)\triangleq (Q,Q,K,O)
\label{eqn:OperatorDefs}
\end{equation}
\section{Model Selection Consistency}
\label{section:consist}

As described in the introduction, we investigate the consistency properties of the SDR-LVGM estimator \eqref{eqn:originalproblem_NS}, which integrates SDR and latent-variable graphical modeling of the conditional distribution of the responses given the covariates. The main result is stated in Section~\ref{sec:Result}, and it is based on assumptions on the population precision matrix (discussed in Section~\ref{sec:techsetup}) and irrepresentability-type conditions on the population Fisher information (discussed in Section~\ref{section:Fishercond}).



\subsection{Technical Setup}
\label{sec:techsetup}
Let $\Theta^\star = \begin{pmatrix} \Theta_Y^\star & \Theta_{YX}^\star \\ {\Theta_{YX}^\star}' & \Theta_{X}^\star \end{pmatrix} \in \mathbb{S}^{p+q}_{++}$ denote the precision matrix of a jointly Gaussian random vector $(Y,X) \in \R^{p+q}$ of covariates and responses, and let $\Sigma^\star = {\Theta^\star}^{-1}$ denote the corresponding covariance matrix. The submatrix $\Theta_{YX}^\star \in \R^{p \times q}$ is a rank-$k$ matrix, where $k \ll \min\{p,q\}$ is the size of the smallest dimension reduction $f(X)$ of the covariates $X$ that is sufficient with respect to the responses $Y$. The submatrix $\Theta^\star_Y \in \mathbb{S}^p$ is the precision matrix of the conditional distribution of $Y|f(X)$, and it specifies a latent-variable graphical model \citep{Chand2012}. That is, the matrix $\Theta^\star_Y$ is decomposed as $\Theta^\star_Y = S_Y^\star - L_Y^\star$ -- the component $S_{Y}^\star$ is a sparse matrix representing the precision matrix of the distribution of $Y$ conditioned on $f(X)$ as well as a small number of additional unobserved latent variables $\zeta \in \R^h$ (here $h \ll p$), and the component $L_{Y}^\star$ is a low-rank matrix representing the effect of the latent variables $\zeta$ ($\mathrm{rank}(L_Y^\star) = h$). These structural attributes of our model lead to the following definition:

\begin{definition} An estimate $(\hat{\Theta}, \hat{S}_Y, \hat{L}_Y) \in \mathbb{S}^{q+p}_{++} \times \mathbb{S}^p \times \mathbb{S}^p_{+}$ with $\hat{\Theta}_Y = \hat{S}_Y - \hat{L}_Y$ is a \emph{structurally correct} estimate of the model specified by the matrices $(\Theta^\star, S_{Y}^\star, L_{Y}^\star) \in \mathbb{S}^{q+p}_{++} \times \mathbb{S}^p \times \mathbb{S}^p_{+}$ with $\Theta^\star_Y = S^\star_Y - L^\star_Y$ if $(1)$ $\mathrm{rank}(\hat{\Theta}_{YX}) = \mathrm{rank}(\Theta_{YX}^\star)$ and $(2)$ $\mathrm{sign}(\hat{S}_Y) = \mathrm{sign}(S_{Y}^\star)$ $(here\hspace{.05in}\mathrm{sign}(0) = 0)$; $\mathrm{rank}(\hat{L}_Y) = \mathrm{rank}(L_{Y}^\star)$.
\label{definition:1}
\end{definition}
Condition $(1)$ ensures that the size of the smallest dimension reduction $f(X)$ of the covariates $X$ that is sufficient with respect to the responses $Y$ is estimated correctly. Condition $(2)$ ensures that the latent-variable graphical model specifying the conditional distribution of $Y | f(X)$ is estimated correctly, which corresponds to accurately identifying the two components composing the precision matrix $\Theta^\star_Y$ of $Y | f(X)$. In particular, this condition ensures that $(a)$ $\hat{S}_Y$ provides a structurally correct estimate of the graphical model specifying the conditional distribution of $Y | f(X), \zeta$; that is, positive, negative and zero entries in ${S}_Y^\star$ are estimated correctly as positive, negative, and zero entries, respectively, in $\hat{S}_Y$, and $(b)$ the dimension of the latent variables $\zeta$ is estimated correctly.

Following the literature on high-dimensional estimation (see the surveys \citep{BuhV2011,Wai2014} and the references therein), a natural set of conditions for obtaining consistent and structurally correct parameter estimates is to assume that the curvature of the likelihood function at $\Theta^\star$ is bounded in certain directions. This curvature is governed by the Fisher information at $\Theta^\star$:
\begin{eqnarray*}
\mathbb{I}({\Theta}^\star) \triangleq {{{{\Theta}}}^\star}^{-1} \otimes {{\Theta}^\star}^{-1} = \Sigma^\star \otimes \Sigma^\star.
\end{eqnarray*}
Here $ \otimes$ denotes a tensor product between matrices and $\mathbb{I}({\Theta}^\star)$ may be viewed as a map from $\mathbb{S}^{(p+q)}$ to $\mathbb{S}^{(p+q)}$. We impose conditions requiring that $\mathbb{I}({\Theta}^\star)$ is well-behaved when applied to matrices of the form $\Theta - \Theta^\star = \begin{pmatrix} (S_Y-S_Y^\star)-(L_Y-L_Y^\star) & \Theta_{YX} - \Theta_{YX}^\star \\ {\Theta_{YX}}'-{\Theta_{YX}^\star}' & \Theta_X-\Theta_{X}^\star \end{pmatrix}$, where $S_Y$ is in a neighborhood of $S_Y^\star$ restricted to the set of sparse matrices and $(L_Y,\Theta_{YX})$ are in a neighborhood of $(L_Y^\star,\Theta_{YX}^\star)$ restricted to sets of low-rank matrices. As formally described in Section~\ref{section:Fishercond}, these local properties of $I(\Theta^\star)$ around $\Theta^\star$ are conveniently stated in terms of \emph{tangent spaces} to the algebraic varieties of sparse matrices and of low-rank matrices.

Let $M \in \mathbb{S}^{p}$ be a symmetric matrix with $k$ nonzero entries. The tangent space at $M$ with respect to the algebraic variety of matrices in $\mathbb{S}^p$ with at most $k$ nonzero entries is given by:
\begin{eqnarray*}
\Omega(M) \triangleq \{J \in \mathbb{S}^{p}| \hspace{.05in}\text{support}(J) \subseteq \text{support}(M)\}.
\end{eqnarray*}
Here `support' denotes the set of locations of the nonzero entries. Next, consider a rank-$r$ matrix $N \in \mathbb{R}^{p_1\times{p_2}}$ with reduced singular value decomposition (SVD) given by $N = U D V'$, where $U \in \mathbb{R}^{p_1{\times}r}$, $D \in \mathbb{R}^{r{\times}r}$, and $V \in \mathbb{R}^{p_2{\times}r}$. The tangent space at $N$ with respect to the algebraic variety of ${p_1{\times}p_2}$ matrices with rank less than or equal to $r$ is given by\footnote{We also consider the tangent space at a symmetric low-rank matrix with respect to the algebraic variety of symmetric low-rank matrices. We use the same notation `$T$' to denote tangent spaces in both the symmetric and non-symmetric cases, and the appropriate tangent space is clear from the context.}:
\begin{eqnarray*}
{T}(N) \triangleq \{UY_1' + Y_2 V'| Y_1 \in \mathbb{R}^{p_2 \times r}, Y_2 \in \mathbb{R}^{p_1 \times r}\}.
\end{eqnarray*}
In the next section, we describe irrepresentability conditions on the population Fisher information $\mathbb{I}(\Theta^\star)$ in terms of the tangent spaces $\Omega(S_{Y}^\star)$, $T(L_{Y}^\star)$, and $T(\Theta_{YX}^\star)$; under these conditions, we prove in Appendix~\ref{section:proof} that the regularized maximum-likelihood convex program \eqref{eqn:originalproblem_NS} provides structurally correct and consistent estimates.

\subsection{Fisher Information Conditions For Consistency}
\label{section:Fishercond}
Let $\mathbb{I}^\star = \mathbb{I}(\Theta^\star)$ denote the population Fisher information at $\Theta^\star$. Given a norm $\|.\|_\Upsilon$ on $\mathbb{S}^p \times \mathbb{S}^p \times \mathbb{R}^{p \times q} \times \mathbb{S}^{q}$, the first condition we consider is to bound the minimum gain of $\mathbb{I}^\star$ restricted to a subspace $\mathbb{H} \subset \mathbb{S}^p \times \mathbb{S}^p \times \mathbb{R}^{p \times q} \times \mathbb{S}^{q}$ as follows:
\begin{eqnarray}
\chi({\mathbb{H}}, \|.\|_{\Upsilon}) \triangleq \min_{\substack{(S_{Y}, L_{Y}, \Theta_{YX}, \Theta_{X}) \in {\mathbb{H}}\\ \|(S_{Y}, L_{Y}, \Theta_{YX}, \Theta_{X})\|_\Upsilon= 1}} \|\mathcal{P}_{\mathbb{H}} \A^{\dagger} \mathbb{I}^\star \A \mathcal{P}_{\mathbb{H}}(S_{Y}, L_{Y}, \Theta_{YX}, \Theta_{X})\|_{\Upsilon},
\label{eqn:Chieq}
\end{eqnarray}
where $\mathcal{P}_{\mathbb{H}}$ denotes the projection operator onto the subspace $\mathbb{H}$ and the linear maps $\A$ and $\A^{\dagger}$ are defined in \eqref{eqn:OperatorDefs}. The quantity $\chi({\mathbb{H}}, \|.\|_{\Upsilon})$ being large ensures that the Fisher information $\mathbb{I}^\star$ is well-conditioned restricted to image $\A \mathbb{H} \subseteq \mathbb{S}^{p+q}$. The second condition that we impose on $\mathbb{I}^\star$ is in the spirit of irrepresentibility-type conditions \citep{MeiB2006,ZhaY2006,Wai2009,RavWRY2008,Chand2012} that are frequently employed in high-dimensional estimation. Specifically, we require that the inner-product between elements in $\A \mathbb{H}$ and $\A \mathbb{H}^\perp$, as quantified by the metric induced by $\mathbb{I}^\star$, is bounded above:
\begin{equation}
\begin{aligned}
\varphi(\mathbb{H}, \|.\|_{\Upsilon}) \triangleq \max_{\substack{Z \in \mathbb{H}\\ \|Z\|_{\Upsilon} = 1}} \|\mathcal{P}_{{{\mathbb{H}}^{\perp}}}\A^{\dagger} \mathbb{I}^{\star}\A\mathcal{P}_{{{\mathbb{H}}}} (\mathcal{P}_{{{\mathbb{H}}}} \A^{\dagger}\mathbb{I}^{\star}\A\mathcal{P}_{{{\mathbb{H}}}})^{-1}(Z)\|_{\Upsilon}.
\end{aligned}
\label{eqn:varphidef}
\end{equation}
The operator $(\mathcal{P}_{{{\mathbb{H}}}} \A^{\dagger}\mathbb{I}^{\star}\A\mathcal{P}_{{{\mathbb{H}}}})^{-1}$ in \eqref{eqn:varphidef} is well-defined if $\chi(\mathbb{H},\|\cdot\|_\Upsilon) > 0$, since this latter condition implies that $\mathbb{I}^\star$ is injective restricted to $\A \mathbb{H}$. The quantity $\varphi(\mathbb{H}, \|.\|_{\Upsilon})$ being small implies that any element of $\A \mathbb{H}$ and any element of $\A {\mathbb{H}}^{\perp}$ have a small inner-product (in the metric induced by $\mathbb{I}^\star$).

A natural approach to controlling the conditioning of the Fisher information around $\Theta^\star$ is to bound the quantities $\chi(\mathbb{H}^\star,\|\cdot\|_\Upsilon)$ and $\varphi(\mathbb{H}, \|.\|_{\Upsilon})$ for $\mathbb{H}^\star = \Omega(S_Y^\star) \times T(L_Y^\star) \times T(\Theta_{YX}^\star) \times \Sp^q$. However, a complication that arises with this approach is that the varieties of low-rank matrices are locally curved around $L_Y^\star$ and around $\Theta_{YX}^\star$. Consequently, the tangent spaces at points in neighborhoods around $L_Y^\star$ and around $\Theta_{YX}^\star$ are not the same as $T(L_Y^\star)$ and $T(\Theta_{YX}^\star)$. (A similar difficulty does not arise with sparse matrices, as the variety of sparse matrices is locally flat around $S_Y^\star$; hence, the tangent spaces at all points in a neighborhood of $S_Y^\star$ are the same.) In order to account for this curvature underlying the varieties of low-rank matrices, we bound the distance between nearby tangent spaces via the following induced norm:
\begin{eqnarray*}
\rho(T_1,T_2) \triangleq \max_{\|N\|_2 \leq 1} \|(\mathcal{P}_{T_1} - \mathcal{P}_{T_2})(N)\|_2.
\label{eqn:distortion}
\end{eqnarray*}
Using this approach for bounding nearby tangent spaces, we consider subspaces $\mathbb{H}' = \Omega(S_Y^\star) \times T'_Y \times T'_{YX} \times \mathbb{S}^{q}$ for all $T_Y'$ close to $T(L_Y^\star)$ and for all $T_{YX}'$ close to $T(\Theta_{YX}^\star)$, as measured by $\rho$ \citep{Chand2012}. For $\omega_{Y}, \omega_{YX} \in (0,1)$, we bound $\chi({\mathbb{H}}', \|.\|_{\Upsilon})$ and $\varphi({\mathbb{H}}', \|.\|_{\Upsilon})$ in the sequel for all subspaces $\mathbb{H}'$ in the following set:
\begin{equation}
\begin{aligned}
U{(\omega_{Y},\omega_{YX})} \triangleq \Big\{\Omega(S_Y^\star) \times T'_Y \times T'_{YX} \times \mathbb{S}^{q} ~|~ &\rho({{T_{Y}'}}, T({L_{Y}^\star})) \leq \omega_{Y} \\ & \rho({{T_{YX}'}}, T({\Theta_{YX}^\star})) \leq \omega_{YX}\Big\}.
\end{aligned}
\label{eqn:Udef}
\end{equation}

We control the quantities $\chi({\mathbb{H}'}, \|.\|_{\Upsilon})$ and $\varphi({\mathbb{H}}', \|.\|_{\Upsilon})$ using a slight variant of the dual norm of the regularizer $\delta \|S_Y\|_{\ell_1} + \mathrm{trace}(L_Y) + \gamma \|\Theta_{X,Y}\|_\star$ in \eqref{eqn:originalproblem_NS}:
\begin{equation}
\Phi_{\delta,\gamma}(S_Y, L_Y, \Theta_{YX}, \Theta_{X}) \triangleq \max\left\{\frac{\|S_Y\|_{\ell_\infty}}{\delta}, \|L_Y\|_{2}, \frac{\|\Theta_{YX}\|_{2}}{\gamma}, \|\Theta_{X}\|_2 \right\}.
\label{eqn:Phidef}
\end{equation}
As the dual norm $\max\left\{\frac{\|S_Y\|_{\ell_\infty}}{\delta}, \|L_Y\|_{2}, \frac{\|\Theta_{YX}\|_{2}}{\gamma}\right\}$ of the regularizer in \eqref{eqn:originalproblem_NS} plays a central role in the optimality conditions of \eqref{eqn:originalproblem_NS}, controlling the quantities $\chi({\mathbb{H}'},\Phi_{\delta,\gamma})$ and $\varphi({\mathbb{H}}', \Phi_{\delta,\gamma})$ leads to a natural set of conditions that guarantee the structural correctness and consistency of the estimates produced by
\eqref{eqn:originalproblem_NS}. In summary, given a fixed set of parameters $(\delta,\gamma,\omega_{Y}, \omega_{YX}) \in \R_+ \times \R_+ \times (0,1) \times (0,1)$, we assume that $\mathbb{I}^\star$ satisfies the following conditions:
\begin{eqnarray}
\mathrm{Assumption~1}&:& \inf_{\mathbb{H}' \in U{(\omega_{Y}, \omega_{YX})}}\chi({\mathbb{H}}', {{\Phi}_{\delta,\gamma}}) \geq \alpha, ~~~ \mathrm{for~some~} \alpha > 0 \label{eqn:FirstFisherCond} \\[.1in]
\mathrm{Assumption~2}&:& \sup_{\mathbb{H}' \in U{(\omega_{Y}, \omega_{YX})}}\varphi({\mathbb{H}}', {{\Phi}_{\delta,\gamma}}) \leq 1-\nu ~~~ \mathrm{for~some~} \nu \in (0,1/3).
\label{eqn:SecondFisherCond}
\end{eqnarray}
For fixed $(\delta, \allowbreak \gamma, \allowbreak \omega_{Y}, \allowbreak \omega_{YX})$, larger values of $\alpha$ and $\nu$ in these assumptions lead to a better conditioned $\mathbb{I}^\star$.

Assumptions 1 and 2 are analogous to conditions that play an important role in the analysis of the Lasso for sparse linear regression \citep{MeiB2006,ZhaY2006,Wai2009}, graphical model selection via the Graphical Lasso \citep{RavWRY2008}, and in several other approaches for high-dimensional estimation \citep{BuhV2011,Wai2014}. As a point of comparison with respect to analyses of the Lasso, the role of the Fisher information $\mathbb{I}^\star$ in \eqref{eqn:FirstFisherCond} and in \eqref{eqn:SecondFisherCond} is played by $A^TA$, where $A$ is the underlying design matrix \citep{MeiB2006,ZhaY2006,Wai2009}. In analyses of both the Lasso and the Graphical Lasso in the papers referenced above, the analog of the subspace $\mathbb{H}$ is the set of models with support contained inside the support of the underlying sparse population model. Assumptions 1 and 2 are also similar in spirit to conditions employed in the analysis of convex relaxation methods for latent-variable graphical model selection \citep{Chand2012}. However, as emphasized previously in the introduction, an important distinction between the present paper and prior literature on graphical model selection is that the methods and results in previous work are not directly applicable to the problem of simultaneous SDR and (latent-variable) graphical modeling.


\subsection{High-Dimensional Consistency Result}
\label{sec:Result}

In this section, we describe the performance of the regularized maximum-likelihood program \eqref{eqn:originalproblem_NS}. Before formally stating our main result, we introduce some notation. Recalling that \\ $\Theta^\star = \begin{pmatrix} S_Y^\star-L_Y^\star & \Theta_{YX}^\star \\ {\Theta_{YX}^\star}' & \Theta_{X}^\star \end{pmatrix}$ is the population precision matrix, let $\tau_{Y}$ denote the minimum nonzero entry in magnitude of ${S_{Y}^{\star}}$, let $\mathrm{deg}(S_{Y}^\star)$ denote the maximal number of nonzeros per row/column of $S_{Y}^\star$ (i.e., the degree of the graphical model underlying the conditional distribution $Y | f(X),\zeta$, which is specified by the precision matrix $S_Y^\star$), let $\sigma_{Y}$ denote the minimum nonzero singular value of ${L_{Y}^{\star}}$, and let $\sigma_{YX}$ denote the minimum nonzero singular value of ${\Theta_{YX}^{\star}}$.

\begin{theorem}
\label{theorem:main}
Suppose we are given i.i.d observations $\{Y^{(i)}, X^{(i)}\}_{i = 1}^n \subset \mathbb{R}^{p+q}$ of a collection of jointly Gaussian covariates/responses with population precision matrix $\Theta^\star \in \mathbb{S}^{p+q}_{++}$. Fix $\alpha > 0, \nu \in (0,1/3), \omega_{Y} \in (0,1), \omega_{YX} \in (0,1)$. Suppose the parameters $\delta$ and $\gamma$ are chosen such that the population Fisher information $\mathbb{I}^\star$ satisfies Assumptions 1 and 2.\\[.05in]

Let $m \triangleq \max\{\frac{1}{\delta}, 1, \frac{1}{\gamma}\}$, $\bar{m} \triangleq \max\{{\delta}, 1, {\gamma}\}$, $\beta \triangleq \frac{3-\nu}{\nu}$, and $\psi \triangleq \|(\Theta^\star)^{-1}\|_2$. Further, $C_1 = \frac{24}{\alpha} + \frac{1}{\psi^2}$, $C_2 = \frac{4}{\alpha} (\frac{1}{3\beta} + 1)$, $C_{\sigma_Y} = C_1^2\psi^2\max \{ 12\beta+1, \frac{2}{{C_2}\psi^2}+1\}$, $C_{\sigma_{YX}} = C_1^2\psi^2\max\{18\beta, \frac{2}{C_2\psi^2} + 6\beta\}$, $C_{samp} = \max\{\frac{1}{48\psi\beta},{48{\beta}{\psi^3}C_1^2}, 8\psi{C_2}, \frac{128{\psi^3}C_2}{\alpha}\}$, and $\lambda_{\text{upper}} = \frac{1}{m\bar{m}^2\mathrm{deg}(S_Y^\star)C_{samp}}$. Suppose that the following conditions hold:
\begin{enumerate}
\item $n \geq \frac{4608\psi^2\beta^2m^2(p+q)}{\lambda_{\text{upper}}^2}$; that is $n \gtrsim \Big[\frac{\beta^4}{\alpha^2} m^4\bar{m}^4\mathrm{deg}(S_{Y}^\star)^2\Big] (p+q) $
\item $\lambda_n \in \Big[\sqrt{\frac{4608\psi^2\beta^2m^2(p+q)}{n}}, \lambda_{\text{upper}}\Big]$; \hspace{.1in} e.g. $\lambda_n \sim \beta{m}\sqrt{\frac{p+q}{n}}$
\item $\tau_{Y} \geq 2C_1\delta\lambda_{n}$; \hspace{.1in} that is $\tau_{Y} \gtrsim \frac{\beta}{\alpha}\delta{m}\sqrt{\frac{p+q}{n}}$ \hspace{.05in} if \hspace{.05in} $\lambda_n \sim \beta{m}\sqrt{\frac{p+q}{n}}$
\item $\sigma_{Y} \geq \frac{m}{\omega_{Y}}C_{\sigma_{Y}} \lambda_n$; \hspace{.1in} that is $\sigma_{Y} \gtrsim \frac{\beta^2}{\alpha^2\omega_{Y}}{m}{}\sqrt{\frac{p+q}{n_{}}}$ \hspace{.05in} if \hspace{.05in} $\lambda_n \sim \beta{m}\sqrt{\frac{p+q}{n}}$
\item $\sigma_{YX} \geq \frac{m^2}{\omega_{YX}} C_{\sigma_{YX}}\gamma^2\lambda_n $; \hspace{.1in} that is $\sigma_{YX} \gtrsim \frac{\beta^2 \gamma^2}{\alpha^2\omega_{YX}}{m^2}{}\sqrt{\frac{p+q}{n}}$ \hspace{.05in} if \hspace{.05in} $\lambda_n \sim \beta{m}\sqrt{\frac{p+q}{n}}$
\end{enumerate}

Then with probability greater than $1-2\exp\{-\frac{n\lambda_n^2}{4608 \beta^2 m^2 \psi^2}\}$, the optimal solution $(\hat{\Theta},\hat{S}_Y,\hat{L}_Y)$ of \eqref{eqn:originalproblem_NS} with the observations $\{Y^{(i)}, X^{(i)}\}_{i = 1}^n$ satisfies the following properties:
\begin{enumerate}
\item sign($\hat{S}_Y$) = sign($S_Y^\star$), rank($\hat{L}_Y$) = rank(${L}_Y^\star$), rank($\hat{\Theta}_{YX}$) = rank(${{\Theta}^\star_{YX}}$)\\[.005in]
\item $\Phi_{\delta,\gamma}(\hat{S}_Y - {S_{Y}^\star}, \hat{L}_Y - {L_{Y}^\star}, \hat{\Theta}_{YX} - {\Theta_{YX}^\star}, \hat{\Theta}_{X} - {\Theta_{X}^\star}) \leq C_1\lambda_{n}$; that is $\|\hat{S}_Y - S_Y^\star\|_{\ell_\infty} \lesssim \frac{\beta}{\alpha}{m}\delta\sqrt{\frac{p+q}{n}}$, $\|\hat{L}_Y - L_Y^\star\|_{2} \lesssim \frac{\beta}{\alpha}{m}\sqrt{\frac{p+q}{n}}$, $\|\hat{\Theta}_{YX} - \Theta_{YX}^\star\|_{2} \lesssim \frac{\beta}{\alpha}\gamma{m}\sqrt{\frac{p+q}{n}}$, $\|\hat{\Theta}_{X} - \Theta_{X}^\star\|_{2} \lesssim \frac{\beta}{\alpha}{m}\sqrt{\frac{p+q}{n}}$ \hspace{.05in} if \hspace{.05in} $\lambda_n \sim \beta{m}\sqrt{\frac{p+q}{n}}$.
\end{enumerate}
\end{theorem}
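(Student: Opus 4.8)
The plan is to follow the now-standard primal-dual witness (or variational) strategy used for convex relaxations of structured matrix estimation problems \citep{Chand2012,RavWRY2008}, adapted to the combined SDR + latent-variable graphical structure. First I would set up the problem on the tangent-space model: define the model subspace $\mathbb{H}^\star = \Omega(S_Y^\star) \times T(L_Y^\star) \times T(\Theta_{YX}^\star) \times \Sp^q$ and, because the low-rank varieties are curved, work instead with a generic $\mathbb{H}' \in U(\omega_Y,\omega_{YX})$ so that Assumptions 1 and 2 apply uniformly. The first main step is a deterministic analysis: assuming a bound on the sampling error $\|\A^\dagger(\hat\Sigma_n - \Sigma^\star)\|$ in the dual norm $\Phi_{\delta,\gamma}$ (where $\hat\Sigma_n$ is the sample covariance restricted appropriately), show that the optimality conditions of \eqref{eqn:originalproblem_NS} — stationarity, the complementary slackness for $L_Y \succeq 0$, and the subgradient inclusions for the $\ell_1$, trace and nuclear norm terms — can be satisfied by a candidate solution lying on the true variety (sparse $+$ low-rank for $\Theta_Y$, low-rank for $\Theta_{YX}$). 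This is the step where Assumption 1 (the minimum-gain bound $\chi \geq \alpha$) guarantees invertibility of $\mathcal{P}_{\mathbb{H}'}\A^\dagger\mathbb{I}^\star\A\mathcal{P}_{\mathbb{H}'}$, so a Newton-type correction is well-defined, and Assumption 2 (the irrepresentability bound $\varphi \leq 1-\nu$) guarantees that the residual in the orthogonal complement $\A\mathbb{H}'^\perp$ stays strictly inside the dual-norm ball — giving strict feasibility of the dual certificate and hence, via standard arguments, uniqueness of the primal optimum and its membership in the correct variety.

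The second main step is to control the nonlinear (higher-order) remainder terms. Because the log-likelihood is $-\log\det\Theta + \langle \Theta, \hat\Sigma_n\rangle$, its gradient is $\hat\Sigma_n - \Theta^{-1}$ and linearizing $\Theta \mapsto \Theta^{-1}$ around $\Theta^\star$ produces a second-order remainder. I would bound this remainder in operator norm by something like $\|(\hat\Theta-\Theta^\star)\|_2^2/\,\mathrm{(smallest eigenvalue)}$, using $\psi = \|{\Theta^\star}^{-1}\|_2$, and absorb it into the certificate construction provided the error is below a threshold proportional to $\lambda_n$ — this is exactly where the constants $C_1, C_2, C_{samp}$ and the requirement $\lambda_n \leq \lambda_{\text{upper}}$ (which scales like $1/(m\bar m^2 \deg(S_Y^\star))$) enter: $\deg(S_Y^\star)$ appears because controlling $\|\cdot\|_{\ell_\infty}$ of a sparse matrix under the Fisher-information map costs a factor of the degree. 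A fixed-point / contraction argument (Brouwer, as in \citep{Chand2012}) on the map that sends a candidate error to the Newton-corrected error then shows a solution exists with $\Phi_{\delta,\gamma}$-error at most $C_1\lambda_n$, which is conclusion (2).

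The third step is probabilistic: show that with the stated probability $1 - 2\exp\{-n\lambda_n^2/(4608\beta^2 m^2\psi^2)\}$, the sample covariance concentrates so that $\|\A^\dagger(\hat\Sigma_n - \Sigma^\star)\|$ in $\Phi_{\delta,\gamma}$ is at most $\tfrac12\lambda_n$ (say). For Gaussian data this follows from standard spectral-norm deviation bounds for sample covariance matrices, which give a tail of the form $\exp\{-cn t^2\}$ for deviation $t$ once $n \gtrsim (p+q)$; matching $t \sim \lambda_n/(\beta m)$ against the condition $\lambda_n \geq \sqrt{4608\psi^2\beta^2 m^2(p+q)/n}$ produces exactly the sample-complexity requirement $n \gtrsim (\beta^4/\alpha^2) m^4\bar m^4 \deg(S_Y^\star)^2 (p+q)$. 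Finally, the sign-consistency and exact-rank conclusions in part (1) follow by combining the $\Phi_{\delta,\gamma}$-error bound with the lower-bound conditions on $\tau_Y$, $\sigma_Y$, $\sigma_{YX}$: since $\|\hat S_Y - S_Y^\star\|_{\ell_\infty} \lesssim C_1\delta\lambda_n < \tau_Y/2$, no nonzero entry can cross zero and no zero entry can become nonzero (the latter from the strict dual feasibility), and since the perturbations of $L_Y^\star$, $\Theta_{YX}^\star$ in spectral norm are below half their smallest nonzero singular values, Weyl's inequality forbids any change of rank; the $\omega_Y, \omega_{YX}$ factors in conditions 4–5 are what keep the estimated tangent spaces inside $U(\omega_Y,\omega_{YX})$ so that the uniform versions of Assumptions 1–2 remain valid along the whole argument.

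I expect the main obstacle to be the simultaneous handling of the three different regularizers with genuinely different geometry — $\ell_1$ (flat variety, controlled in $\ell_\infty$) versus two nuclear-norm terms on matrices that are curved and of different shapes ($p\times p$ symmetric PSD for $L_Y$, $p\times q$ rectangular for $\Theta_{YX}$) — which forces the "transversality"-type bookkeeping: one must ensure the tangent spaces $\Omega(S_Y^\star)$, $T(L_Y^\star)$, $T(\Theta_{YX}^\star)$ interact benignly under $\A^\dagger\mathbb{I}^\star\A$, track the curvature via $\rho$ uniformly over $U(\omega_Y,\omega_{YX})$, and juggle the scale parameters $\delta,\gamma$ (through $m,\bar m$) so that the single dual norm $\Phi_{\delta,\gamma}$ simultaneously certifies all three structures. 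Getting the constants to propagate consistently through the contraction argument — rather than the qualitative statement — is the delicate part; everything else is an adaptation of \citep{Chand2012} with the extra block $\Theta_{YX}$ carried along.
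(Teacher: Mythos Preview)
Your proposal is correct and follows essentially the same primal-dual witness strategy as the paper --- Brouwer fixed-point on the tangent-space-restricted Newton map, quadratic-remainder control for the expansion of $\Theta^{-1}$ around $\Theta^\star$, spectral-norm concentration of the sample covariance, and structural recovery from the $\Phi_{\delta,\gamma}$ error bound combined with the gap conditions on $\tau_Y,\sigma_Y,\sigma_{YX}$. The only organizational nuances you gloss over are that the paper first relaxes the constraint $L_Y \succeq 0$ (replacing $\mathrm{trace}$ by $\|\cdot\|_\star$ and recovering positive semidefiniteness a posteriori via an inertia-preservation argument) and explicitly threads the analysis through a non-convex \emph{variety}-constrained auxiliary program before linearizing to tangent-space constraints at its optimum, rather than constructing the witness directly in a fixed $\mathbb{H}'\in U(\omega_Y,\omega_{YX})$.
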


Notice that condition 1 of Theorem~\ref{theorem:main} ensures that the interval in condition 2 is non-empty. We give a proof of Theorem~\ref{theorem:main} in Appendix~\ref{section:proof}. Under the assumptions of the theorem, we construct appropriate primal feasible variables $(\tilde{\Theta},\tilde{S}_Y,\tilde{L}_Y)$ that satisfy the conclusions of the theorem -- i.e., $\tilde{\Theta}_{YX}, \tilde{L}_Y$ are low-rank (with the same ranks as the underlying population quantities $\Theta_{YX}^\star$ and $L_Y^\star$) and $\tilde{S}_Y$ is sparse (with the same support as the underlying population quantity $S_Y^\star$) -- and for which there exists a corresponding dual variable certifying optimality. This proof technique is sometimes also referred to as a primal-dual witness or certificate approach \citep{Wai2009}. The quantities $\alpha, \beta$ (related to $\nu$, as stated in the theorem, via $\beta=\tfrac{3-\nu}{\nu}$), $\omega_Y, \omega_{YX}, \mathrm{deg}(S_Y^\star)$ as well as the choices of the parameters $\delta, \gamma$ play a prominent role in our result. Indeed, larger values of $\alpha$ and $\nu$ (leading to a better conditioned Fisher information, from Assumptions 1 and 2 in \eqref{eqn:FirstFisherCond} and \eqref{eqn:SecondFisherCond}) lead to less stringent requirements on the sample complexity, on the minimum magnitude nonzero entry $\tau_Y$ of $S_{Y}^\star$, on the minimum nonzero singular value $\sigma_Y$ of $L_{Y}^\star$, and on the minimum nonzero singular value $\sigma_{YX}$ of $\Theta_{YX}^\star$. In a similar vein, larger values of the quantities $\omega_{Y}$ and $\omega_{YX}$ in Assumptions 1 and 2 imply that the Fisher information is well-conditioned even for large distortions of the tangent spaces $T(L_Y^\star)$ and $T(\Theta_{YX}^\star)$ (see \eqref{eqn:Udef}), which in turn lead to less stringent requirements on the minimum nonzero singular values $\sigma_{Y}$ and $\sigma_{YX}$.

{\par}As is clear from Theorem~\ref{theorem:main}, the tradeoff parameters $\delta, \gamma$ must be chosen such that the population Fisher information $\mathbb{I}^\star$ satisfies Assumptions 1 and 2 (see \eqref{eqn:FirstFisherCond} and \eqref{eqn:SecondFisherCond}) for some $\alpha > 0, \nu \in (0, 1/3), \omega_Y > 0, \omega_{YX} > 0$. Recall that these assumptions are stated in terms of the norm $\Phi_{\delta,\gamma}$, as defined in \eqref{eqn:Phidef}. The key complication that arises in characterizing values of $\delta,\gamma$ for which Assumptions 1 and 2 hold is that these parameters appear as multiplicative factors on norms imposed on different sub-blocks of matrices in $\Sp^{p+q}$ (see the definition of $\Phi_{\delta, \gamma}$ in \eqref{eqn:Phidef}), which lead to conditions on gains of the Fisher information that are coupled across the different sub-blocks of $\Sp^{p+q}$. In order to overcome this difficulty, we describe in Appendix~\ref{sec:Parameterchoices} a set of conditions on gains of the Fisher information restricted to the tangent spaces $\Omega(S_Y^\star),T^\star_Y,T^\star_{YX}$ \emph{separately}; under these separable conditions, we explicitly characterize a non-empty polyhedral set $V({\alpha, \nu, \omega_Y, \omega_{YX}}) \subset \mathbb{R}^2$ such that Assumptions 1 and 2 hold for all $(\delta,\gamma) \in V({\alpha, \nu, \omega_Y, \omega_{YX}})$. These conditions are interpretable and are stated in terms of the degree of the graphical model structure underlying the conditional distribution of $S_Y^\star$ (this quantity, $\text{deg}(S_Y^\star)$, makes an appearance in Theorem ~\ref{theorem:main}) and an incoherence parameter associated with the low-rank matrix $L_Y^\star$.

\section{Experiments}
\label{section:experiments}

We illustrate the performance of the estimators corresponding to the SDR-FM, SDR-GM, and SDR-LVGM approaches (recall that these estimators are given by  \eqref{eqn:Generic} with the choices \eqref{eqn:FMReg}, \eqref{eqn:GMReg}, and \eqref{eqn:LVGMReg} for the regularizer $R(\Theta_Y)$) and the estimator \eqref{eqn:CovariateSelection} in statistical modeling tasks involving financial asset data and newsgroup data. We solve these convex programs numerically using the LogdetPPA package developed for log-determinant semidefinite programs \citep{TohTT2002}. Below we discuss the details of each dataset:\\

\par{\bf{financial}}: We consider a financial asset modeling problem in which the responses $Y$ are a collection of monthly stock returns of $67$ companies from the Standard and Poor index from 1990 to 2005 and the covariates $X$ are the following $7$ variables -- $X_1$: EUR to USD exchange rate, $X_2$: inflation rate, $X_3$: oil exports, $X_4$: industrial production index, $X_5$: population growth rate, $X_6$: consumer price index, and $X_7$: unemployment rate. Thus, we observe $n=188$ samples jointly of $(Y,X) \in \R^{67} \times \R^{7}$.\\

\par{\bf{newsgroup}}: This dataset consists of $n = 16242$ observations in $\R^{100}$, with each observation corresponding to a news document. The coordinates of these observations are indexed by a collection of $100$ words, and each observation is a binary vector specifying whether a word appears in a document. Of these $100$ words, the following $9$ words are designated as covariates $X \in \R^9$ as they are useful in categorizing newsgroup documents: $X_1$: government, $X_2$: religion, $X_3$: science, $X_4$: technology, $X_5$: war, $X_6$: medicine, $X_7$: world, $X_8$: food, and $X_9$: games. The remaining $91$ words specify the response $Y \in \R^{91}$. \\

\subsection{Sufficient Dimension Reduction and Conditional Modeling}

{\noindent}{\noindent}
In this section, we investigate the performance of SDR-GM and SDR-LVGM (with the estimator \eqref{eqn:Generic} and choices \eqref{eqn:GMReg} and \eqref{eqn:LVGMReg} for the regularizer $R(\Theta_Y)$) on the financial dataset . To illustrate the utility of incorporating information about the covariates, we also contrast these methods with two modeling approaches that do not account for the impact of the covariates $X$ on the responses $Y$ -- we fit a sparse graphical model (denoted GM) as well as a latent-variable graphical model (denoted LVGM) to the responses $Y$ using the Graphical Lasso technique \citep{YuanLin2006, Friedman2008} and the approach described by \citet{Chand2012}, respectively.

To properly compare the performance of these different techniques, we ensure that the complexity of each of the resulting models (in terms of the number of parameters required for their specification) is approximately the same.\footnote{The number of parameters required to specify a sparse graphical model with a precision matrix $N \in \Sp^{p}$ is equal to $p$ plus one-half the number of nonzero off-diagonal entries of $N$ (as $N$ is symmetric). The number of parameters required to specify a $p \times q$ matrix with rank $r \leq \min\{p,q\}$ is equal to $r(p + q) - r^2$. Finally, the number of parameters required to specify a matrix in $\Sp^{p}$ with rank $r \leq p$ is equal to $rp - r(r-1)/2$.} We begin by choosing a regularization parameter for the Graphical Lasso method such that small changes in the value of this parameter do not lead to substantial structural changes in the estimated graphical model, i.e., the estimated model is stable with respect to small changes in the regularization parameter. Following this approach, we use the GM approach on the financial dataset (without the covariates), and the resulting graphical model consists of $909$ parameters ($842$ edges plus $67$ node parameters). Next, we choose regularization parameters for the estimators corresponding to the LVGM (without covariates), SDR-GM (with both covariates and responses), and SDR-LVGM (with both covariates and responses) approaches such that the resulting models obtained via each of these techniques consist of approximately $909$ parameters. Specifically, the model obtained using LVGM consists of $10$ latent variables and a conditional graphical model (conditioned on the $11$ latent variables) with $221$ edges for a total of $913$ parameters. The model obtained using SDR-GM consists of a $4$-dimensional projection of the $7$ covariates (that is sufficient with respect to the responses) and a conditional graphical model over the responses (conditioned on the dimension-reduced covariates) that consists of $564$ edges for a total number of $908$ parameters. Finally, the model obtained using SDR-LVGM consists of a $3$-dimensional projection of the covariates, $7$ latent variables, and a conditional graphical model (conditioned on the dimension-reduced covariates and the $7$ latent variables) with $180$ edges for a total of $908$ parameters. Let $\hat{\Theta}_{\rm{GM}} \in \Sp_{++}^{67}$, $\hat{\Theta}_{\rm{LVGM}} \in \Sp_{++}^{67}$, $\hat{\Theta}_{\rm{SDR\text{-}GM}} \in \Sp_{++}^{67+7}$, and $\hat{\Theta}_{\rm{SDR\text{-}LVGM}} \in \Sp_{++}^{67+7}$ denote the precision matrices of the models corresponding to GM, LVGM, SDR-GM, and SDR-LVGM respectively. Although these models have similar complexities, they have different predictive performances, as described next.

{\par}We assess the predictive performance of each of these four models on $90$ monthly observations $\{(Y_{\text{test}}^{(j)}, X_{\text{test}}^{(j)}\}_{i=1}^{90} \in \R^{67} \times \R^7$ from $2006$ to $2013$ (recall that the training set based on which the four models were obtained consisted of $188$ monthly observations during the period $1990$ to $2006$). For the models obtained via GM and LVGM, we compute the average log-likelihood over the test samples using the distributions specified by the precision matrices $\hat{\Theta}_{\rm{GM}}$ and $\hat{\Theta}_{\rm{LVGM}}$ respectively. For the model obtained via SDR-GM that accounts for the influence of the covariates on the responses, we compute the log-likelihood of each test sample $Y_{\text{test}}^{(j)}$ (and subsequently average over all test samples) with respect to the distribution of $Y|[f(X) = f(X_{\text{test}}^{(j)})]$ where $f(X)$ is the projection of $X$ into the row-space of the matrix $-(\hat{\Theta}_{\mathrm{SDR\text{-}GM}})_{Y}^{-1}(\hat{\Theta}_{\mathrm{SDR\text{-}GM}})_{YX}$ (recall that $-(\hat{\Theta}_{\mathrm{SDR\text{-}GM}})_{Y}^{-1}(\hat{\Theta}_{\mathrm{SDR\text{-}GM}})_{YX}$ denotes the map of best linear estimator of $Y$ based on $X$). We follow a similar approach to compute the predictive performance of the model obtained via SDR-LVGM. The average predictive log-likelihoods of models obtained via GM, LVGM, SDR-GM, and SDR-LVGM are $-127.55$, $-122.73$, $-121.12$, and $-120.28$ respectively. For comparison, a model obtained via FM (factor modeling) on the training set -- without incorporating the covariates -- consists of $14$ latent factors (for a total of $914$ parameters) and gives an average predictive log-likelihood of $-123.99$ over the test set. On the other hand, a model obtained via SDR-FM (using the estimator \eqref{eqn:Generic} with the regularizer $R(\Theta_Y)$ set as in \eqref{eqn:FMReg}) on the training set -- that incorporates observations of both the covariates and the responses -- consists of a $5$-dimensional projection of the covariates and $8$ latent factors in the conditional model of the responses given the dimension-reduced covariates (the total number of parameters equals $920$), and it provides an average predictive log-likelihood of $-121.35$ on the test set. As larger values of average log-likelihood are indicative of a better fit to the test samples, these results suggest that SDR-LVGM offers the best predictive performance of the different approaches considered in this experiment.

Focussing on the structural aspects of the models obtained via GM, LVGM, SDR-GM, and SDR-LVGM, Fig.~\ref{fig:graphicalS} shows the (conditional) graphical model structures over the responses corresponding to each of these approaches. The ten strongest edges in the conditional graphical model obtained via SDR-LVGM in Fig.~\ref{fig:graphicalS}(d) (in terms of the magnitude of the entries in the precision matrix) are between General Electric - American Express, Target - Bancorp, Hewlett Packard - Oracle, Texas Instruments - General Electric, Occidental Petroleum Corp. - Wells Fargo, American Insurance Group - Bank of New York, Merck $\&$ Co. - Walgreens, JP Morgan - Verizon,
Verizon - CVS Health, Pfizer Inc. - Colgate. The presence of some strong edges between entities in different industries suggests that dependencies between assets belonging to the same industry may be better modeled via the latent variables or the dimension-reduced covariates. Each of these ten edges in the conditional graphical model obtained via SDR-LVGM also appears as a strong edge in the conditional graphical model obtained via LVGM (in which the covariates are not incorporated). Examples of other strong edges in the conditional graphical model obtained via LVGM (which do not appear in the conditional graphical model obtained via SDR-LVGM) include IBM Corp. - American Insurance Group, Hewlett Packard - Southern Company, Hewlett Packard - Walmart, General Electric - Colgate, Fedex - Emerson. 
\vspace{-.14in}
\FloatBarrier
\begin{figure}[!http]
\centering
\subfigure[GM]{
\includegraphics[width=3cm, height = 3cm]{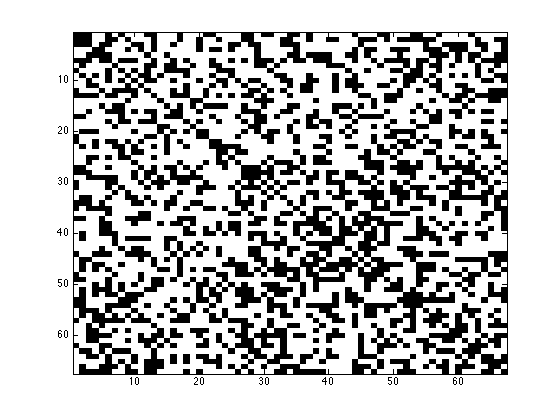}
}
\subfigure[LVGM]{
\includegraphics[width=3cm, height = 3cm]{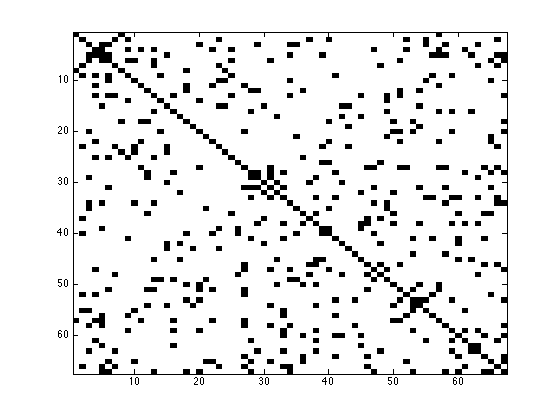}
}
\subfigure[SDR-GM]{
\includegraphics[width=3cm, height = 3cm]{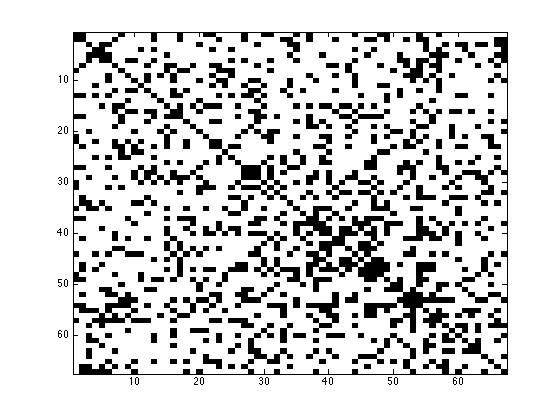}
}
\subfigure[SDR-LVGM]{
\includegraphics[width=3cm, height = 3cm]{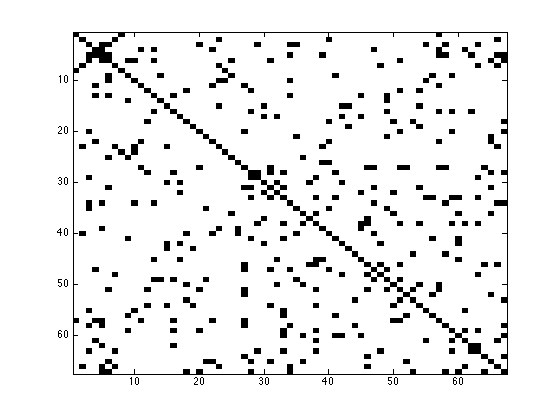}
}
\caption{These figures show the sparsity pattern (black denotes an edge, and white denotes no edge) of the graphical models associated with each modeling paradigm} 
\label{fig:graphicalS}
\end{figure}
\FloatBarrier
\vspace{-.2in}
{\par}Turning our attention to the latent components identified in the models obtained via LVGM and SDR-LVGM, we note that the number of latent variables in the model obtained using SDR-LVGM ($7$ variables) is smaller than that in the model obtained via LVGM ($10$ variables). This observation suggests that the $3$-dimensional projection of the covariates in the model obtained via SDR-LVGM accounts for some of the effect of the latent variables in the model obtained using LVGM (which does not incorporate the covariates). To illuminate this point quantitatively, let the matrix ${\Gamma} = -(\hat{\Theta}_{\mathrm{SDR\text{-}LVGM}})_{Y}^{-1}(\hat{\Theta}_{\mathrm{SDR\text{-}LVGM}})_{YX} \in \mathbb{R}^{67 \times 7}$ denote the map of the best linear estimator of $Y$ based on $X$ (specified by the model obtained via SDR-LVGM) and let the matrix $\Lambda \in \mathbb{R}^{67 \times 10}$ denote the map of best linear estimator of $Y$ based on the latent variables $\zeta \in \R^{10}$ (specified by the model obtained via LVGM\footnote{The matrix $\Lambda$ is only known up to right-multiplication by a non-singular linear transformation. As a result, the key invariants are the rank and the column space of $\Lambda$.}). The column space of $\Gamma$ corresponds to the 3-dimensional image of the mapping of the best linear estimator of $Y$ based on $X$, and it represents the component of the response $Y$ that is correlated with the covariates $X$ in the model obtained using the SDR-LVGM approach. Similarly, the column space of  $\Lambda$ represents the 10-dimensional component of Y that is correlated with the latent variables $\zeta$ in the model obtained using the LVGM approach.  As such, the closeness between these column spaces measures the degree to which the sufficient dimension reduction $f(X)$ accounts for some of the influence of the latent variables $\zeta$ on the covariates $Y$. The largest principal angle between the $3$-dimensional column space of $\Gamma$ and the $10$-dimensional column space of $\Lambda$ is $11.06$ degrees. This result indicates that the dimension reduced covariates in the model obtained using SDR-LVGM account for some of the effect of the latent components in the model obtained via LVGM.

\subsection{Combining Covariate Selection and Graphical Modeling}
In this section, we investigate the performance of the estimator \eqref{eqn:CovariateSelection} on the financial and the newsgroup datasets. Recall that this estimator selects a subset of the covariates $X$ that is most useful for predicting the responses $Y$, while simultaneously fitting a latent-variable graphical model to the conditional distribution of the responses given selected covariates. We denote this modeling approach as CS-LVGM.

We apply the estimator \eqref{eqn:CovariateSelection} to the financial and the newsgroup datasets with fixed choices for parameters $\lambda_n, \delta$ and different choices of the parameter $\gamma$. The objective of this experiment is to illustrate the different covariates selected in each dataset as $\gamma$ is varied. For each of the models obtained using this CS-LVGM approach, Table~\ref{table:financial} and Table~\ref{table:newsgroup} list the subset of covariates that were selected (recall that the selected covariates are represented by the indices of the nonzero columns of the submatrix $\hat{\Theta}_{YX} \in \mathbb{R}^{p \times q}$ of the estimated joint precision matrix $\hat{\Theta}_{++}^{p+q}$). As expected, larger values of $\gamma$ yield a smaller subset of relevant covariates as the regularization term $\gamma \|\Theta_{YX}\|_{1,2}$ in \eqref{eqn:CovariateSelection} is enforced more strongly. With the financial dataset, the covariates $X_1$ (population growth rate) and $X_5$ (EUR to USD exchange rate) persist as $\gamma$ increases, which suggests that they are the most relevant for predicting stock returns among the seven covariates considered in the experiment. On the other hand, the covariate $X_3$ (oil exports) appears to be the least influential. For the newsgroup dataset, the covariates $X_1$ (government) and $X_7$ (world) persist as $\gamma$ increases, suggesting that these are the most useful for predicting word occurrences in documents in the 20newsgroup dataset, while the covariates $X_6$ (medicine) and $X_8$ (food) do not seem as relevant.\\
\vspace{-.15in}
\begin{center}
\begin{minipage}{0.4\textwidth}
\begin{tabular}{|c | c| c|}
\hline
$\gamma$ & covariates identified\\
\hline
{1.50} & \{$X_1, X_2, X_4, X_5, X_6, X_7$\}\\
{1.99} & \{$X_1, X_2, X_4, X_5$\}\\
2.14 & \{$X_1, X_5, X_7$\} \\
2.57 & \{$X_1, X_5$\}\\
\hline
\end{tabular}
\captionof{table}{$\gamma$ vs selected covariates for financial dataset with $\lambda_n = 0.58$ and $\delta = 0.29$}
\label{table:financial}
\end{minipage}\qquad
\begin{minipage}{0.4\textwidth}
\begin{tabular}{|c | c| c|}
\hline
$\gamma$ & covariates identified\\
\hline
1.39 & \{$X_1, X_2, X_3, X_4, X_5, X_7, X_9$\} \\
2.32 &\{$X_1, X_2, X_3, X_5, X_9$\} \\
2.80 & \{$X_1, X_2, X_3, X_9$\} \\
3.02 & \{$X_1, X_7$\} \\
\hline
\end{tabular}
\captionof{table}{$\gamma$ vs selected covariates for newsgroup dataset with $\lambda_n = 0.38$ and $\delta = 0.41$}
\label{table:newsgroup}
\end{minipage}
\end{center}

\vspace{-.08in}
We inspect more closely the model obtained using the CS-LVGM approach on the newsgroup dataset with parameters $\lambda_n = 0.38, \delta = 0.41, \gamma = 2.32$ (this corresponds to the second line in Table~\ref{table:newsgroup}). This model consists of $5$ selected covariates, $6$ latent variables, and a conditional graphical model (conditioned on the $5$ covariates and the $6$ latent variables) with $10$ edges for a total number of $1087$ parameters. The $10$ edges in this conditional graphical model are between God-Jesus, Dos-Windows, Bible-God, card-video, email-phone, Christian-God, state-university, computer-university, disk-drive, and Israel-Jews. For comparison, we obtain a model using the LVGM approach (ignoring the covariates) of comparable complexity with a total of $1083$ parameters -- this model consists of $11$ latent variables and a conditional graphical model (conditioned on the $11$ latent variables) with $46$ edges. Each of the ten edges in the conditional graphical model obtained via CS-LVGM appear as stronger edges in the conditional graphical model obtained via LVGM. Examples of additional strong edges in the conditional graphical model obtained via LVGM include patients-disease, hockey-NHL, and players-baseball. The edge between patients-disease presents an interesting illustration as these two words appear together in $64$ documents. In $45$ of these $64$ documents, however, at least one of the $5$ covariates selected by the CS-LVGM model ($X_1, X_2, X_3, X_5, X_9$ from Table~\ref{table:newsgroup}) also appears. Thus, the lack of an edge between patients-disease in the conditional graphical model obtained using the CS-LVGM approach is perhaps explained by the inclusion of the the $5$ covariates. In a similar vein, the absence of the edges hockey-NHL and players-baseball in the conditional graphical model obtained via CS-LVGM may be attributed to the inclusion of the covariate $X_9$ (game). In contrast, a curious point arises when considering the presence of the edges God-Jesus, Bible-God, Christian-God, and Israel-Jews in the conditional graphical model obtained via CS-LVGM. From a preliminary inspection, one might expect that the inclusion of the covariates $X_2,X_7$ (religion, world) would account for these four pairs of interactions, resulting in their absence in the conditional graphical model. However, (as an example) the words God and Jesus appear together in $380$ documents, and only $180$ of these documents include at least one of the $5$ selected covariates. Thus, it is not surprising that the edge God-Jesus remains in the conditional graphical model obtained using the CS-LVGM approach despite this graphical model being conditioned on the $5$ selected covariates.

\section{Further Directions}
%
{\par} The selection of regularization parameters is a common practical challenge in high-dimensional estimation problems.  Approaches based on cross-validation are widely used, although these techniques optimize for prediction performance and do not always yield concise models.  To address this shortcoming, several methods have been proposed recently for the selection of regularization parameters for the Lasso \citep{MeiB2010}. It is of interest to extend these ideas to our estimator \eqref{eqn:Generic}, which requires the specification of multiple regularization parameters.  
Further, the convex programs proposed in this paper are computationally tractable (i.e., solvable to a desired accuracy in polynomial time), but it would be useful to develop special-purpose numerical schemes -- perhaps building on recent scalable algorithms for the Graphical Lasso \citep{Friedman2008, Ravi2013} -- for efficient solution in massive-scale problems. Finally, it is of interest to extend our techniques to non-Gaussian settings, in which one may be interested in identifying nonlinear dimension reductions of the covariates that are sufficient with respect to the responses.



\newpage
\section{Appendix}
\subsection{Proof of Theorem~\ref{theorem:main}}
\label{section:proof}
In this section, we prove the consistency results (stated in Theorem~\ref{theorem:main}) of the estimator \eqref{eqn:originalproblem_NS}. The high-level proof strategy is similar in spirit to the proof of the consistency results for sparse graphical model recovery \citep{RavWRY2008} and latent variable graphical model recovery \citep{Chand2012}. However, the estimator \eqref{eqn:originalproblem_NS} is different than the estimators proposed by \citep{RavWRY2008} and  \citep{Chand2012} due to the nuclear norm penalty corresponding to the SDR objective. \\

We begin by considering the following convex optimization program:

 \begin{eqnarray}
 \label{eqn:ConvexRelaxed4_N}
(\bar{\Theta}, \bar{S}_Y, \bar{L}_Y) = \argmin_{\substack{\Theta \in \Sp^{q+p}, ~\Theta \succ 0 \\ S_Y,{L}_Y \in \Sp^p}} & -\ell(\Theta; \{X^{(i)},Y^{(i)}\}_{i=1}^n) + \lambda_n [\delta \|S_Y\|_{\ell_1} + \|{L_Y}\|_\star + \gamma \|\Theta_{YX}\|_\star] \nonumber \\ \mathrm{s.t.} & \Theta_{Y} = S_Y - {L}_Y
\end{eqnarray}

{\noindent}Comparing \eqref{eqn:ConvexRelaxed4_N} with the convex program \eqref{eqn:originalproblem_NS}, the difference is that we no longer constrain ${L}_Y$ to be a positive semidefinite matrix. In particular, if ${L}_Y \succeq 0$, then the nuclear norm of the matrix ${L}_Y$ in the objective function of \eqref{eqn:ConvexRelaxed4_N} reduces to the trace of $L_Y$. We show that the unique optimum $(\bar{\Theta}, \bar{S}_Y, \bar{L}_Y)$ of \eqref{eqn:ConvexRelaxed4_N} has the property that with high probability, $\tilde{L}_Y$ is positive semidefinite. As a result, with high probability, the variables $(\bar{\Theta}, \bar{S}_Y, \bar{L}_Y)$ are also the optimum of \eqref{eqn:originalproblem_NS}. In the remainder of this section, we show that under the assumptions of Theorem ~\ref{theorem:main}, the primal feasible variables $(\bar{\Theta}, \bar{S}_Y, \bar{L}_Y)$ are structurally correct estimates of $(\Theta^\star, S_Y^\star, {L}_Y^\star)$ (see Definition~\ref{definition:1}). Below, we outline our proof strategy:

\begin{enumerate}
\item We proceed by analyzing \eqref{eqn:ConvexRelaxed4_N} with additional constraints that the variables $S_Y$, ${L}_Y$, and $\Theta_{YX}$ belong to the algebraic varieties of sparse and low-rank matrices (specified by the support of $S_Y^\star$ and rank of $L_Y^\star$ and $\Theta_{YX}^\star$) , and that the tangent spaces $\Omega(S_Y)$, $T(L_Y)$, $T(\Theta_{YX})$ are close to the nominal tangent spaces $\Omega(S_Y^\star)$, $T({L}_Y^\star)$, and $T(\Theta_{YX}^\star)$ respectively. We prove that under suitable conditions on the minimum magnitude nonzero entry of $S_Y^\star$, minimum nonzero singular value of ${L}_Y^\star$, and minimum nonzero singular value of $\Theta_{YX}^\star$, any optimum set of variables $(\Theta, S_Y, L_Y)$ of this non-convex program are smooth points of the underlying varieties; that is $\rm{sign}(S_Y) = \rm{sign}(S_Y^\star), \rm{rank}(L_Y) = \rm{rank}(L_Y^\star)$ and $\rm{rank}(\Theta_{YX}) = \rm{rank}(\Theta_{YX}^\star)$. Further, we show that $L_{Y}$ has the same inertia as $L_Y^\star$ so that $L_Y \succeq 0$. 

\item Conclusions of the previous step imply the the variety constraints can be ``linearized" at the global optimum of the non-convex program to obtain tangent-space constraints. Under suitable conditions on the regularization parameter $\lambda_n$, we prove that with high probability, the unique optimum of this ``linearized" program coincides with the global optimum of the non-convex program. 

\item Finally, we show that the tangent-space constraints of the linearized program are inactive at the optimum. Therefore, the optimal solution of \eqref{eqn:ConvexRelaxed4_N} has the property that with high probability: $\rm{sign}(\bar{S}_Y) = \rm{sign}(S_Y^\star)$, $\rm{rank}(\bar{L}_Y) = \rm{rank}(L_Y^\star)$, and $\rm{rank}(\bar{\Theta}_{YX}) = \rm{rank}(\Theta_{YX}^\star)$. Since $\bar{L}_Y\succeq 0$, we conclude that the variables $(\bar{\Theta}, \bar{S}_Y, \bar{L}_Y)$ are the unique optimum of \eqref{eqn:originalproblem_NS}. 
\end{enumerate}
In Section~\ref{section:proof1}, we prove the results of step 1. In Section~\ref{proof:VtoT}, we prove the results of step 2. Finally, in Section~\ref{sec:backOrig}, we prove the results of step 3.

\subsubsection{Variety Constrained Optimization Program}
\label{section:proof1}
Letting $m \triangleq \max\{ \frac{1}{\delta}, 1, \frac{1}{\gamma}\}$ and $\psi \triangleq \|(\Theta^\star)^{-1}\|_2$, we consider the following variety-constrained optimization program:
\begin{eqnarray}
({\Theta}^{\mathcal{M}}, {S}_Y^{\mathcal{M}}, {L}_Y^{\mathcal{M}}) = \argmin_{\substack{\Theta \in \Sp^{q+p}, ~\Theta \succ 0 \\ S_Y,{L}_Y \in \Sp^p}} & -\ell(\Theta; \{X^{(i)},Y^{(i)}\}_{i=1}^n) + \lambda_n [\delta \|S_Y\|_{\ell_1} + \|{L}_Y\|_{\star} + \gamma \|\Theta_{YX}\|_\star] \nonumber \\ \mathrm{s.t.} & \Theta_{Y} = S_Y - {L}_Y,  (\Theta, S_Y,  {L}_Y) \in \mathcal{M}. \label{eqn:NConveProblem_N}
\end{eqnarray}
{\noindent}Here, the set $\mathcal{M}$ is given by:
\begin{eqnarray*}
\mathcal{M} \triangleq \Big\{(\Theta, S_Y, {L}_Y) \in \Sp^{(p+q)} \times \Sp^p \times \Sp^{p} \Big | S_Y \in \Omega(S_Y^\star), \hspace{.1in} \text{rank}({L}_Y) \leq \text{rank}({L}_Y^\star) \\ \text{rank}(\Theta_{YX}) \leq \text{rank}(\Theta_{YX}^\star); \\
~~\|\mathcal{P}_{T({L}_Y^\star)^{\perp}}({L}_Y - {L}_Y^\star)\|_2 \leq \frac{\omega_Y\lambda_n}{2m\psi^2}, \hspace{.1in} \\
~\hspace{.5in}\|\mathcal{P}_{T(\Theta_{YX}^\star)^{\perp}}(\Theta_{YX} - \Theta_{YX}^\star)\|_2 \leq \frac{\omega_{YX}\lambda_n}{2m\psi^2} \\
\Phi_{\delta, \gamma}[\mathcal{A}^{\dagger}\mathbb{I}^{\star}\mathcal{A}\Delta] \leq 5\lambda_n \Big\}
\end{eqnarray*}

{\noindent}The optimization program \eqref{eqn:NConveProblem_N} is non-convex due to the rank constraints $\text{rank}(L_Y) \leq \text{rank}({L}_Y^\star)$ and $\text{rank}(\Theta_{YX}) \leq \text{rank}(\Theta_{YX}^\star)$ in the set $\mathcal{M}$. These constraints, in addition to the constraint $S_Y \in \Omega(S_Y^\star)$ ensure that the matrices $S_Y, {L}_Y$, and $\Theta_{YX}$ belong to appropriate varieties. The constraints in $\mathcal{M}$ along $T(L_Y^\star)^\perp$ and $T(\Theta_{YX}^\star)^\perp$ ensure that the tangent spaces $T({L}_Y)$ and $T(\Theta_{YX})$ are ``close'' to $T({L}_Y^\star)$ and $T(\Theta_{YX}^\star)$ respectively. Finally, the last condition roughly controls the error. In this section, we will prove that any feasible set of variables $(\Theta, S_Y, L_Y)$ -- and in particular an optimal set of variables $({\Theta}^{\mathcal{M}}, {S}_Y^{\mathcal{M}}, {L}_Y^{\mathcal{M}})$-- is structurally correct estimate of $(\Theta^\star, S_Y^\star, {L}_Y^\star)$. We begin by proving that any feasible set of variables $(\Theta, S_Y, {L}_Y)$ is ``close" in norm to the population quantities $(\Theta^\star, S_{Y}^\star, L_{Y}^\star)$.

\begin{proposition}
\label{prop:FirstResultCor}
Let $(\Theta, S_Y, {L}_Y)$ be a set of feasible variables of \eqref{eqn:NConveProblem_N}. Let $\Delta = (S_Y - S_Y^\star, {L}_Y - {L}_Y^\star, \Theta_{YX} - \Theta_{YX}^\star, \Theta_{X} - \Theta_{X}^\star)$ and $C_1 = \frac{12}{\alpha_{}} + \frac{1}{\psi^2}$. Then, $\Phi_{\delta, \gamma}[\Delta] \leq C_1\lambda_n$

\label{theorem:NonconvexImp}
\end{proposition}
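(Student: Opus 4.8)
The plan is to convert the last constraint in the set $\mathcal{M}$, namely $\Phi_{\delta,\gamma}[\mathcal{A}^\dagger\mathbb{I}^\star\mathcal{A}\Delta]\le 5\lambda_n$, into a bound on $\Delta$ itself by ``inverting'' the population Fisher information on the relevant tangent subspace via Assumption~1. The key is to split $\Delta$ along the \emph{nominal} subspace $\mathbb{H}^\star\triangleq\Omega(S_Y^\star)\times T(L_Y^\star)\times T(\Theta_{YX}^\star)\times\Sp^q$. Since $\rho(T(L_Y^\star),T(L_Y^\star))=\rho(T(\Theta_{YX}^\star),T(\Theta_{YX}^\star))=0<\min\{\omega_Y,\omega_{YX}\}$, we have $\mathbb{H}^\star\in U(\omega_Y,\omega_{YX})$, so Assumption~1 yields $\chi(\mathbb{H}^\star,\Phi_{\delta,\gamma})\ge\alpha$. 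Set $\Delta_{\mathbb{H}}\triangleq\mathcal{P}_{\mathbb{H}^\star}\Delta$ and $\Delta_\perp\triangleq\Delta-\Delta_{\mathbb{H}}$. Feasibility forces $S_Y\in\Omega(S_Y^\star)$, and the last block of $\mathbb{H}^\star$ is all of $\Sp^q$, so the only nonzero blocks of $\Delta_\perp$ are $\mathcal{P}_{T(L_Y^\star)^\perp}(L_Y-L_Y^\star)$ and $\mathcal{P}_{T(\Theta_{YX}^\star)^\perp}(\Theta_{YX}-\Theta_{YX}^\star)$, whose spectral norms the definition of $\mathcal{M}$ bounds by $\tfrac{\omega_Y\lambda_n}{2m\psi^2}$ and $\tfrac{\omega_{YX}\lambda_n}{2m\psi^2}$. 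Using $m\ge 1$, $m\gamma\ge 1$ and $\omega_Y,\omega_{YX}<1$ I would then read off the elementary estimates $\Phi_{\delta,\gamma}[\Delta_\perp]\le\tfrac{\lambda_n}{2\psi^2}$ and $\|\mathcal{A}\Delta_\perp\|_2\le\tfrac{\lambda_n}{m\psi^2}$.

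The core of the argument is the chain $\Phi_{\delta,\gamma}[\Delta_{\mathbb{H}}]\le\tfrac{1}{\alpha}\,\Phi_{\delta,\gamma}\big[\mathcal{P}_{\mathbb{H}^\star}\mathcal{A}^\dagger\mathbb{I}^\star\mathcal{A}\Delta_{\mathbb{H}}\big]$, which follows from the definition of $\chi$ applied to $\Delta_{\mathbb{H}}\in\mathbb{H}^\star$ together with $\chi(\mathbb{H}^\star,\Phi_{\delta,\gamma})\ge\alpha$. I would then substitute $\mathcal{A}\Delta_{\mathbb{H}}=\mathcal{A}\Delta-\mathcal{A}\Delta_\perp$ and bound the two resulting terms, using that $\mathcal{P}_{\mathbb{H}^\star}$ inflates $\Phi_{\delta,\gamma}$ by at most a factor $2$: on a product subspace it acts coordinatewise, namely as a support restriction on the sparse block (which does not increase $\|\cdot\|_{\ell_\infty}$), as projection onto a low-rank tangent space on the two low-rank blocks (which is not a spectral-norm contraction but inflates $\|\cdot\|_2$ by at most $2$), and as the identity on the $\Sp^q$ block. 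The $\Delta$-term is then at most $2\,\Phi_{\delta,\gamma}[\mathcal{A}^\dagger\mathbb{I}^\star\mathcal{A}\Delta]\le 10\lambda_n$ by the $\mathcal{M}$-constraint. For the $\Delta_\perp$-term I would invoke, in addition, the bound $\Phi_{\delta,\gamma}[\mathcal{A}^\dagger W]\le m\|W\|_2$ valid for all symmetric $W$ (every block norm appearing in $\Phi_{\delta,\gamma}$ is dominated by $\|W\|_2$, and $m\ge1$), the operator-norm identity $\|\mathbb{I}^\star\|_2=\|\Sigma^\star\|_2^2=\psi^2$, and the estimate $\|\mathcal{A}\Delta_\perp\|_2\le\tfrac{\lambda_n}{m\psi^2}$ from the previous step, which together give at most $2\,m\,\psi^2\cdot\tfrac{\lambda_n}{m\psi^2}=2\lambda_n$. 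Hence $\Phi_{\delta,\gamma}\big[\mathcal{P}_{\mathbb{H}^\star}\mathcal{A}^\dagger\mathbb{I}^\star\mathcal{A}\Delta_{\mathbb{H}}\big]\le 12\lambda_n$, so $\Phi_{\delta,\gamma}[\Delta_{\mathbb{H}}]\le\tfrac{12\lambda_n}{\alpha}$, and finally $\Phi_{\delta,\gamma}[\Delta]\le\Phi_{\delta,\gamma}[\Delta_{\mathbb{H}}]+\Phi_{\delta,\gamma}[\Delta_\perp]\le\big(\tfrac{12}{\alpha}+\tfrac{1}{\psi^2}\big)\lambda_n=C_1\lambda_n$, as claimed.

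I do not expect a genuine obstacle: this estimate is essentially the analogue, for our augmented problem, of the ``closeness'' bound in the analysis of latent-variable graphical model selection \citep{Chand2012}, and it uses only Assumption~1 together with norm bookkeeping. The two points that require care---and that I would isolate as small lemmas---are: first, that one must decompose $\Delta$ along the nominal subspace $\mathbb{H}^\star$ rather than along the tangent spaces at the feasible point $(\Theta,S_Y,L_Y)$, because only for $\mathbb{H}^\star$ is the perpendicular component $\Delta_\perp$ controlled \emph{directly} by the constraints defining $\mathcal{M}$ (for the feasible-point tangent spaces one would only get $\mathcal{P}_{T(L_Y)^\perp}(L_Y^\star)$ small relative to $\rho(T(L_Y),T(L_Y^\star))\,\|L_Y^\star\|_2$, not relative to $\lambda_n$); and second, the careful tracking of the factor-$2$ losses incurred each time one projects onto a low-rank tangent space and of the factor $m$ coming from $\mathcal{A}^\dagger$, which are precisely what turn the constants $5$ and $1$ built into $\mathcal{M}$ into the $\tfrac{12}{\alpha}$ and $\tfrac{1}{\psi^2}$ forming $C_1$.
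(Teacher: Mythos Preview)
Your proposal is correct and follows essentially the same approach as the paper's proof: split $\Delta$ along the nominal subspace $\mathbb{H}^\star$, use the $\mathcal{M}$-constraints to bound the perpendicular part directly, apply Assumption~1 to invert $\mathcal{P}_{\mathbb{H}^\star}\mathcal{A}^\dagger\mathbb{I}^\star\mathcal{A}\mathcal{P}_{\mathbb{H}^\star}$ on the in-subspace part, and recombine. The only cosmetic difference is that the paper first bounds $\Phi_{\delta,\gamma}[\mathcal{A}^\dagger\mathbb{I}^\star\mathcal{A}\mathcal{P}_{\mathbb{H}^\star}\Delta]\le 5\lambda_n+\lambda_n=6\lambda_n$ and then applies the factor-$2$ projection inflation once to obtain $12\lambda_n$, whereas you apply the factor-$2$ to each of the two summands separately to reach $10\lambda_n+2\lambda_n=12\lambda_n$; the arithmetic and the ingredients are identical.
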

\begin{proof}
Let ${\mathbb{H}}^\star = \Omega(S_Y^\star) \times T({L}_Y^\star) \times T(\Theta_{YX}^\star) \times \Sp^{q}$. Then, 
\begin{eqnarray*}
\Phi_{\delta, \gamma}[\mathcal{A}^{\dagger}\mathbb{I}^{\star}\mathcal{A}\mathcal{P}_{\mathbb{H}^\star}(\Delta)] &\leq& \Phi_{\delta, \gamma}[\mathcal{A}^{\dagger}\mathbb{I}^{\star}\mathcal{A}(\Delta_{})] + \Phi_{\delta, \gamma}[\mathcal{A}^{\dagger}\mathbb{I}^{\star}\mathcal{A}\mathcal{P}_{{\mathbb{H}^\star}^{\perp}}(\Delta)] \\
&\leq& 5\lambda_n + m\psi^2\Big(\frac{\omega_{Y}\lambda_n}{2m\psi^2} + \frac{\omega_{YX}\lambda_n}{2m\psi^2}\Big) \leq 6\lambda_n
\end{eqnarray*} 
Since $\Phi_{\delta, \gamma}[\mathcal{P}_{\mathbb{H}^\star}(\cdot)] \leq 2\Phi_{\delta, \gamma}(\cdot)$, we have that $\Phi_{\delta, \gamma}[\mathcal{P}_{\mathbb{H}^\star}\mathcal{A}^{\dagger}\mathbb{I}^{\star}\mathcal{A}\mathcal{P}_{\mathbb{H}^\star}(\Delta)] \leq 12\lambda_n$. Consequently, we appeal to the Fisher information Assumption 1 in \eqref{eqn:FirstFisherCond} to conclude that $\Phi_{\delta, \gamma}[\mathcal{P}_{\mathbb{H}^\star}(\Delta)] \leq \frac{12\lambda_n}{\alpha}$. Moreover:
\begin{eqnarray*}
\Phi_{\delta, \gamma}[\Delta]& \leq& \Phi_{\delta, \gamma}[\mathcal{P}_{\mathbb{H}^\star}(\Delta_{})] + \Phi_{\delta, \gamma}[\mathcal{P}_{{\mathbb{H}^\star}^{\perp}}(\Delta)] \leq \frac{12\lambda_n}{\alpha} + \frac{\lambda_n}{\psi^2} = C_1\lambda_n 
\end{eqnarray*}
\end{proof}

{\noindent}Proposition~\ref{theorem:NonconvexImp} leads to powerful implications. In particular, under additional conditions on the minimum magnitude nonzero entry of $S_Y^\star$, and minimum nonzero singular values of ${L}_Y^\star$ and $\Theta_{YX}^\star$, any feasible set of variables $(\Theta, S_Y, {L}_Y)$ of \eqref{eqn:NConveProblem_N} has two key properties: $(a)$ The variables $(\Theta_{YX}, S_Y, {L}_Y)$ are smooth points of the underlying varieties, $(b)$ The constraints in $\mathcal{M}$ along $T({L}_Y^\star)^{\perp}$ and $T(\Theta_{YX}^\star)^{\perp}$ are locally inactive at $\Theta_{YX}$ and $L_Y$. These properties, among others, are proved in the following corollary.

\begin{corollary}
\label{eqn:Corollary1}
Consider any feasible variables $( \Theta, S_Y, {L}_Y)$ of \eqref{eqn:NConveProblem_N}. Let $\sigma_{Y}$ be the smallest nonzero singular value of ${L}_Y^\star$, $\sigma_{YX}$ be the smallest nonzero singular value of $\Theta_{YX}^\star$, and $\tau_{Y}$ the minimum magnitude nonzero element of $S_Y^\star$. Let $\mathbb{H}' = \Omega(S_Y) \times T({L}_Y) \times T(\Theta_{YX}) \times \Sp^q$ and $C_{T'} = \mathcal{P}_{\mathbb{H}'^{\perp}} (0, {L}_Y^\star, \Theta_{YX}^\star ,0)$. Furthermore, let $C_1 = \frac{12}{\alpha_{}} + \frac{1}{\psi^2}$, $C_2 = \frac{4}{\alpha_{}} (1+\frac{1}{3\beta})$, $C_{\sigma_Y} = C_1^2\psi^2\max\{12\beta + 1, \frac{2}{C_2\psi^2} + 1\}$ and $C_{\sigma_{YX}}' = C_1^2\psi^2\max\{12\beta + \frac{6\beta}{\gamma}, \frac{2}{C_2\psi^2} + \frac{6\beta}{\gamma}\}$ . Suppose that the following inequalities are met: $\sigma_{Y} \geq  \frac{{m}}{\omega_Y}C_{\sigma_Y}\lambda_n$, $\sigma_{YX} \geq  \frac{m\gamma^2}{\omega_{YX}}C_{\sigma_{YX}}'\lambda_n$, and $\tau_{Y} \geq 2\delta{C}_1\lambda_n$.
Then,
\begin{enumerate}
\item ${L}_Y$ and $\Theta_{YX}$ are smooth points of their underlying varieties, i.e. $\rm{rank}({L}_Y) = \rm{rank}({L}_Y^\star)$, $\rm{rank}(\Theta_{YX}) = \rm{rank}(\Theta_{YX}^\star)$; Moreover ${L}_Y$ has the same inertia as ${L}_Y^\star$.
\item $\|\mathcal{P}_{T({L}_Y^\star)^{\perp}}({L}_Y - {L}_Y^\star)\|_2 \leq \frac{\lambda_n\omega_{Y}}{48m\psi^2}$ and $\|\mathcal{P}_{T(\Theta_{YX}^\star)^{\perp}}(\Theta_{YX} - \Theta_{YX}^\star)\|_2 \leq \frac{\lambda_n\omega_{YX}}{48m\psi^2}$
\item $\rho(T({L}_Y), T({L}_Y^\star)) \leq \omega_{Y}$, and $\rho(T(\Theta_{YX}), T(\Theta_{YX}^\star)) \leq \omega_{YX}$; that is, the tangent spaces at $L_Y$ and $\Theta_{YX}$ are ``close" to the tangent spaces at  $L_{Y}^\star$ and $\Theta_{YX}^\star$ respectively. 
\item $\Phi_{\delta, \gamma}[\mathcal{A}^{\dagger}\mathbb{I}^{\star}\mathcal{A} C_{T_{}'}] \leq \frac{\lambda_n}{6\beta}$
\item $\Phi_{\delta, \gamma}[C_{T_{}'}] \leq C_2\lambda_n$ 
\item $S$ is the smooth point of its underlying variety, i.e. $\rm{sign}(S_Y) = \rm{sign}(S_Y^\star)$\\
\end{enumerate}
\end{corollary}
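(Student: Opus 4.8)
The plan is to derive everything from the single quantitative input already available, namely Proposition~\ref{theorem:NonconvexImp}: it applies verbatim to any feasible tuple of \eqref{eqn:NConveProblem_N} and gives $\Phi_{\delta,\gamma}[\Delta]\le C_1\lambda_n$, which by the definition \eqref{eqn:Phidef} of $\Phi_{\delta,\gamma}$ unpacks into the four componentwise estimates $\|S_Y-S_Y^\star\|_{\ell_\infty}\le C_1\delta\lambda_n$, $\|L_Y-L_Y^\star\|_2\le C_1\lambda_n$, $\|\Theta_{YX}-\Theta_{YX}^\star\|_2\le C_1\gamma\lambda_n$, and $\|\Theta_X-\Theta_X^\star\|_2\le C_1\lambda_n$. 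I would treat conclusions (1) and (6) first, since (2)--(5) implicitly use that $T(L_Y)$ and $T(\Theta_{YX})$ are genuine low-rank tangent spaces of the same dimension as $T(L_Y^\star)$ and $T(\Theta_{YX}^\star)$. For (6), $S_Y\in\Omega(S_Y^\star)$ forces $\mathrm{support}(S_Y)\subseteq\mathrm{support}(S_Y^\star)$, and on that support every entry of $S_Y^\star$ has magnitude at least $\tau_Y\ge 2C_1\delta\lambda_n>\|S_Y-S_Y^\star\|_{\ell_\infty}$, so no nonzero entry can change sign or vanish, giving $\mathrm{sign}(S_Y)=\mathrm{sign}(S_Y^\star)$. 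For the rank parts of (1), $\mathcal{M}$ already supplies $\mathrm{rank}(L_Y)\le\mathrm{rank}(L_Y^\star)$ and $\mathrm{rank}(\Theta_{YX})\le\mathrm{rank}(\Theta_{YX}^\star)$; Weyl's inequality together with $\|L_Y-L_Y^\star\|_2\le C_1\lambda_n<\sigma_Y$ (which the assumed lower bound on $\sigma_Y$ guarantees, since $C_{\sigma_Y}$ dwarfs $C_1$) shows the $\mathrm{rank}(L_Y^\star)$-th singular value of $L_Y$ is strictly positive, forcing equality, and likewise for $\Theta_{YX}$. Applying Weyl to the eigenvalues of the symmetric pair $L_Y,L_Y^\star$ and invoking $L_Y^\star\succeq 0$ then shows $L_Y$ has exactly $\mathrm{rank}(L_Y^\star)$ strictly positive eigenvalues and the rest zero, i.e.\ $L_Y$ has the inertia of $L_Y^\star$ and in particular $L_Y\succeq 0$.

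Conclusions (2) and (3) are the heart of the matter and rely on the standard low-rank perturbation estimates (as in \citet{Chand2012}): if two matrices of equal rank differ by $\Delta$ with $\|\Delta\|_2$ small relative to the smallest nonzero singular value $\sigma$, then the component of $\Delta$ normal to either tangent space is \emph{quadratically} small, $\|\mathcal{P}_{T(\cdot)^\perp}(\Delta)\|_2\lesssim\|\Delta\|_2^2/\sigma$, and the two tangent spaces are $\rho$-close at the rate $\rho\lesssim\|\Delta\|_2/\sigma$. Taking $\Delta=L_Y-L_Y^\star$ and $\sigma=\sigma_Y$ gives $\|\mathcal{P}_{T(L_Y^\star)^\perp}(L_Y-L_Y^\star)\|_2\lesssim C_1^2\lambda_n^2/\sigma_Y$ and $\rho(T(L_Y),T(L_Y^\star))\lesssim C_1\lambda_n/\sigma_Y$; the lower bound $\sigma_Y\ge(m/\omega_Y)C_{\sigma_Y}\lambda_n$ with the $12\beta+1$ branch of the maximum in $C_{\sigma_Y}$ is exactly calibrated (using $\beta>8$, which follows from $\nu<1/3$) to make the first $\le\lambda_n\omega_Y/(48m\psi^2)$ and the second $\le\omega_Y$. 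The analogous choice $\Delta=\Theta_{YX}-\Theta_{YX}^\star$, $\sigma=\sigma_{YX}$, using the lower bound involving $C'_{\sigma_{YX}}$ and the inequality $m\gamma\ge 1$ (from $m\ge 1/\gamma$), yields the $\Theta_{YX}$ halves of (2) and (3).

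Finally, for (4) and (5) I would first observe that $\mathbb{H}'^\perp=\Omega(S_Y)^\perp\times T(L_Y)^\perp\times T(\Theta_{YX})^\perp\times\{0\}$, so $C_{T'}=(0,\mathcal{P}_{T(L_Y)^\perp}(L_Y^\star),\mathcal{P}_{T(\Theta_{YX})^\perp}(\Theta_{YX}^\star),0)$; since $L_Y\in T(L_Y)$ and $\Theta_{YX}\in T(\Theta_{YX})$, the occurrences of $L_Y^\star$ and $\Theta_{YX}^\star$ inside those projections may be replaced by $L_Y^\star-L_Y$ and $\Theta_{YX}^\star-\Theta_{YX}$, and the quadratic-smallness estimate applies again, now centered at $L_Y,\Theta_{YX}$, whose smallest nonzero singular values are $\ge\sigma_Y/2$ and $\ge\sigma_{YX}/2$ by conclusion (1) and Weyl. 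This produces $\|\mathcal{P}_{T(L_Y)^\perp}(L_Y^\star)\|_2\lesssim C_1^2\lambda_n^2/\sigma_Y$ and $\|\mathcal{P}_{T(\Theta_{YX})^\perp}(\Theta_{YX}^\star)\|_2\lesssim C_1^2\gamma^2\lambda_n^2/\sigma_{YX}$; feeding in the $\sigma$-bounds through the $2/(C_2\psi^2)+1$ branch of the maxima defining $C_{\sigma_Y}$ and $C'_{\sigma_{YX}}$ gives $\Phi_{\delta,\gamma}[C_{T'}]\le C_2\lambda_n$, which is (5). For (4) one additionally needs the elementary operator bound $\Phi_{\delta,\gamma}[\mathcal{A}^\dagger\mathbb{I}^\star\mathcal{A}(M)]\lesssim m\psi^2\,\Phi_{\delta,\gamma}[M]$ (because $\mathbb{I}^\star=\Sigma^\star\otimes\Sigma^\star$ has spectral operator norm $\psi^2$ and passing through $\mathcal{A},\mathcal{A}^\dagger$ costs at most a factor $m$, exactly as used inside the proof of Proposition~\ref{theorem:NonconvexImp}); applied to $M=C_{T'}$ and using the $12\beta+1$ branch once more, the two resulting terms are each $\le\lambda_n/(12\beta)$, summing to $\lambda_n/(6\beta)$.

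The conceptual content is confined to the quadratic-smallness lemma for low-rank perturbations and the clean componentwise consequences of Proposition~\ref{theorem:NonconvexImp}. The hard part — really the only delicate point — is the constant bookkeeping: one must check that the single pair of thresholds $C_{\sigma_Y},C'_{\sigma_{YX}}$, through the two branches of their respective maxima, is simultaneously strong enough to deliver the $\omega_Y/(48m\psi^2)$-level bound of (2), the $\omega_Y$-level bound of (3), the $C_2\lambda_n$ bound of (5), and the $\lambda_n/(6\beta)$ bound of (4), all while keeping $\|L_Y-L_Y^\star\|_2$ and $\|\Theta_{YX}-\Theta_{YX}^\star\|_2$ inside the radius where the low-rank perturbation lemma is valid.
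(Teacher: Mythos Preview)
Your proposal is correct and follows essentially the same route as the paper: start from Proposition~\ref{theorem:NonconvexImp} to extract the componentwise bounds, invoke Weyl for rank and inertia, then apply the quadratic low-rank perturbation estimates (Propositions~2.1 and~2.2 of \citet{Chand2012}) centered first at $L_Y^\star,\Theta_{YX}^\star$ for (2)--(3) and then at $L_Y,\Theta_{YX}$ for (4)--(5), with the two branches of the maxima in $C_{\sigma_Y},C'_{\sigma_{YX}}$ serving exactly the roles you describe. The only cosmetic difference is that the paper handles (4) before (5) and makes the factor-of-$8$ hypothesis of the perturbation lemmas explicit (showing $\sigma_Y\ge 8\|L_Y-L_Y^\star\|_2$ and $\sigma'_Y\ge 8\|L_Y-L_Y^\star\|_2$), whereas you subsume this under ``inside the radius where the low-rank perturbation lemma is valid.''
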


\begin{proof}
We note the following relations before proving each step: $C_1 \geq \frac{1}{\psi^2} \geq \frac{1}{m\psi^2}$, $\omega_Y, \omega_{YX} \in [0,1]$, and $\beta \triangleq \frac{3-\nu}{\nu} \geq 8$ for $\nu \in (0,1/3]$. We also appeal to the results of \citep{Kato1995,Bach2008,Chand2012} regarding perturbation analysis of the low-rank matrix variety.\\

1. Based on the assumptions regarding the minimum nonzero singular values of ${L}_Y^\star$ and $\Theta_{YX}^\star$, we have:
\begin{eqnarray*}
\sigma_{Y} &\geq& \frac{ C_1^2\lambda_n}{\omega_{Y}} m\psi^2({12\beta+1})
\geq \frac{ C_1\lambda_n}{\omega_{Y}}({12\beta+1})\geq (12\beta+1)C_1\lambda_n \geq 8C_1\lambda_n \geq 8\|L - L_{Y}^\star\|_2\\
\sigma_{YX} &\geq& \frac{C_1^2\lambda_n}{\omega_{YX}} \gamma^2m\psi^2{\Big(\frac{6\beta}{\gamma}+12\beta\Big)} 
\geq { C_1^2\lambda_n^{}} \gamma^2m_{}\psi^2\frac{6\beta}{\gamma} \geq 8\gamma{C}_1^{}\lambda_n^{} \geq 8\|{\Theta}_{YX} - \Theta_{YX}^\star\|_2
\end{eqnarray*}

{\noindent}Combing these results and Proposition~\ref{theorem:NonconvexImp}, we conclude that ${L}_Y$ and $\Theta_{YX}$ are smooth points of their respective varieties, i.e. $\text{rank}({L}_Y)= \text{rank}({L}_Y^\star)$, and $\text{rank}(\Theta_{YX}) = \text{rank}(\Theta_{YX}^\star)$. Furthermore, ${L}_Y$ has the same inertia as ${L}_Y^\star$.

2. Since $\sigma_{Y} \geq 8\|L_Y - L_{Y}^\star\|_2$, and $\sigma_{YX} \geq 8\|\Theta_{YX} - \Theta_{YX}^\star\|_2$, we can appeal to Proposition 2.2 in (Chandrasekaran et al., 2012) to conclude that the constraints in $\mathcal{M}$ along $\mathcal{P}_{T({L}_Y^\star)^{\perp}}$ and $\mathcal{P}_{T(\Theta_{YX}^\star)^{\perp}}$ are strictly feasible: 
\begin{eqnarray*}
\|\mathcal{P}_{T({L}_Y^\star)^{\perp}}({L}_Y - {L}_Y^\star)\|_2 &\leq& \frac{\|{L}_Y - {L}_Y^\star\|_2^2}{\sigma_{Y}} 
= \frac{C_1^2\lambda_n^2\omega_{Y}}{C_1^2\lambda_n{m}\psi^2(12\beta+1)}\leq \frac{\lambda_n}{48m\psi^2} \\
\|\mathcal{P}_{T(\Theta_{YX}^\star)^{\perp}}(\Theta_{YX} - \Theta_{YX}^\star)\|_2 &\leq& \frac{\|\Theta_{YX} - \Theta_{YX}^\star\|_2^2}{\sigma_{YX}} 
= \frac{C_1^2\gamma^2\lambda_n^2\omega_{YX}}{C_1^2\lambda_n{m}\psi^2\gamma^2\Big(\frac{6\beta}{\gamma}+12\beta\Big)} \leq \frac{\lambda_n}{48m\psi^2}
\end{eqnarray*}

3. Appealing to Proposition 2.1 in (Chandrasekaran et al., 2012), we prove that the tangent spaces $T(L_Y)$ and $T(\Theta_{YX})$ are close to $T(L_Y^\star)$ and $T(\Theta_{YX}^\star)$ respectively:
\begin{eqnarray*}
\rho(T({L}_Y), T({L}_Y^\star)) &\leq& \frac{2\| {L}_Y - {L}_Y^\star \|_2}{\sigma_{Y}}
\leq \frac{2C_1\lambda_n \omega_{Y}}{C_1^2\lambda_nm{\psi}^2(12\beta+1)} 
\leq \omega_{Y}\\
\rho(T(\Theta_{YX}), T(\Theta_{YX}^\star)) &\leq& \frac{2\| \Theta_{YX} - \Theta_{YX}^\star\|_2}{\sigma_{YX}}
\leq \frac{2\lambda_nC_1 \gamma\omega_{YX}}{C_1^2\lambda_n{m}\psi^2\gamma^2\frac{6\beta}{\gamma}} 
\leq \omega_{YX}
\end{eqnarray*}

4. Letting $\sigma_{Y}'$ and $\sigma_{YX}'$ be the minimum nonzero singular value of ${L}$ and $\Theta_{YX}$ respectively, we note that:
\begin{eqnarray*}
\sigma_{Y}' \geq \sigma_{Y} - \|{L}_Y - {L}_Y^\star\|_2 &\geq& \frac{C_1^2\lambda_n}{\omega_{Y}} m\psi^2(12\beta+1)-C_1\lambda_n \\ &\geq& \frac{C_1^2\lambda_n}{\omega_{Y}}12 m\psi^2\beta \geq 8C_1\lambda_n \geq 8\|{L}_Y - L_Y^\star\|_2 \\[.1in]
\sigma_{YX}' \geq \sigma_{YX} - \|\Theta_{YX} - \Theta_{YX}^\star\|_2 &\geq& \frac{C_1^2\lambda_n\gamma^2}{\omega_{YX}} m\psi^2\Big(\frac{6\beta}{\gamma}+12\beta\Big)-C_1\lambda_n\gamma \\ &\geq& \frac{6C_1^2\lambda_n\gamma^2m\psi^2\beta}{\omega_{YX}\gamma} - C_1\gamma\lambda_n \geq 8C_1\lambda_n\gamma \geq 8\|\Theta_{YX} - {\Theta}_{YX}^\star\|_2
\end{eqnarray*}
Once again appealing to Proposition 2.2 in (Chandrasekaran et al., 2012), we have:
\begin{eqnarray*}
\Phi_{\delta, \gamma}(C_{T'}) &\leq& m \|\mathcal{P}_{T({L_Y})^{\perp}} ({L}_Y - {L}_Y^\star)\|_2 + m \|\mathcal{P}_{T(\Theta_{YX})^{\perp}} (\Theta_{YX} - \Theta_{YX}^\star)\|_2 \\ &\leq& m\frac{\|{L}_Y - {L}_Y^\star\|_2^2}{\sigma_{Y}'} + m\frac{\|\Theta_{YX} - \Theta_{YX}^\star\|_2^2}{\sigma_{YX}'}\\
&\leq& \frac{mC_1^2\lambda_n^2}{\frac{C_1^2\lambda_nm\psi^2(12\beta+1)}{\omega_Y} - C_1\lambda_n} + \frac{mC_1^2\lambda_n^2\gamma^2}{\frac{C_1^2\lambda_n\gamma^2m\psi^2\Big(\frac{6\beta}{\gamma}+12\beta\Big)}{\omega_{YX}} - C_1\lambda_n\gamma}\\
&\leq& \frac{\lambda_n\omega_{Y}}{12\beta\psi^2} + \frac{\lambda_n\omega_{YX}}{12\beta\psi^2}  \leq\frac{\lambda_n}{6\beta\psi^2}
\end{eqnarray*}

This leads to the result that $\Phi_{\delta, \gamma}[\mathcal{A}^{\dagger}\mathbb{I}^{\star}\mathcal{A}C_{T'}] \leq \frac{\lambda_n}{6\beta}$.\\

5. Following the same reasoning as step 4, we conclude:
\begin{eqnarray*}
\Phi_{\delta, \gamma}(C_{T'}) &\leq& m\frac{\|{L}_Y - {L}_Y^\star\|_2^2}{\sigma_{Y}'} + m\frac{\|\Theta_{YX} - \Theta_{YX}^\star\|_2^2}{\sigma_{YX}'}\\
&\leq& \frac{mC_1^2\lambda_n^2}{\frac{C_1^2\lambda_nm\psi^2(\frac{2}{C_2\psi^2}+1)}{\omega_Y} - C_1\lambda_n} + \frac{mC_1^2\lambda_n^2\gamma^2}{\frac{C_1^2\lambda_n\gamma^2m\psi^2(\frac{6\beta}{\gamma}+\frac{2}{C_2\psi^2})}{\omega_{YX}} - C_1\lambda_n\gamma}\\
&\leq&   \frac{C_2\omega_{Y}}{2}\lambda_n +  \frac{C_2\omega_{YX}}{2}\lambda_n \leq C_2\lambda_n
\end{eqnarray*}

6. This fact follows immediately since $\|S_Y - S_Y^\star\|_{\infty} \leq \delta{C}_1\lambda_n$ and the smallest nonzero magnitude entry of $S_Y^\star$ is greater than $2\delta{C}_1\lambda_n$. 
\end{proof}

\subsubsection{From Variety Constraints to Tangent-Space Constraints}
\label{proof:VtoT}
Consider any optimal solution $(\Theta^{\mathcal{M}}, S_Y^{\mathcal{M}}, {L}_Y^{\mathcal{M}})$ of \eqref{eqn:NConveProblem_N}. In Corollary~\ref{eqn:Corollary1}, we concluded that the variables $(\Theta^{\mathcal{M}}_{YX}$, $S_Y^{\mathcal{M}}, {L}_Y^{\mathcal{M}})$ are smooth points of their respective varieties. As a result, the rank constraints  $\rm{rank}({L}_Y) \leq \rm{rank}({L}_Y^\star)$ and $\rm{rank}(\Theta_{YX}) \leq \rm{rank}(\Theta_{YX}^\star)$ can be ``linearized" to ${L}_Y \in T(L_Y^{\mathcal{M}})$ and $\Theta_{YX} \in T(\Theta^{\mathcal{M}}_{YX})$ respectively. Since all the remaining constraints are convex, the optimum of this linearized program is also the optimum of \eqref{eqn:NConveProblem_N}. Moreover, we once more appeal to Corollary~\ref{eqn:Corollary1} to conclude that the constraints in $\mathcal{M}$ along $\mathcal{P}_{T({L}_Y^\star)^{\perp}}$ and $\mathcal{P}_{T(\Theta_{YX}^\star)^{\perp}}$ are strictly feasible at $(\Theta^{\mathcal{M}}, S_Y^{\mathcal{M}}, {L}_Y^{\mathcal{M}})$. As a result, these constraints are locally inactive and can be removed without changing the optimum. Finally, we claim that the constraint $\Phi_{\delta, \gamma}[\mathcal{A}^{\dagger}\mathbb{I}^{\star}\mathcal{A}\Delta] \leq 5\lambda_n$ in \eqref{eqn:NConveProblem_N} can also removed in this ``linearized" convex program. In particular, letting $\mathbb{H}_{\mathcal{M}} \triangleq \Omega(S_Y^\star) \times T({L}_Y^{\mathcal{M}}) \times T(\Theta^{\mathcal{M}}_{YX}) \times \Sp^q$ (note $\mathbb{H}_{\mathcal{M}} \in U(\omega_Y, \omega_{YX})$ where the set $U(\omega_Y, \omega_{YX})$ is defined in \eqref{eqn:Udef}), consider the following convex optimization program with the constraint $\Phi_{\delta, \gamma}[\mathcal{A}^{\dagger}\mathbb{I}^{\star}\mathcal{A}\Delta] \leq 5\lambda_n$ removed :
\begin{eqnarray}
(\tilde{\Theta}, \tilde{S}_Y, \tilde{L}_Y) = \argmin_{\substack{\Theta \in \Sp^{q+p}, ~\Theta \succ 0 \\ S_Y,{L}_Y \in \Sp^p}} & -\ell(\Theta; \{X^{(i)},Y^{(i)}\}_{i=1}^n) + \lambda_n [\delta \|S_Y\|_{\ell_1} + \|{L_Y}\|_{\star} + \gamma \|\Theta_{YX}\|_\star] \nonumber \\ \mathrm{s.t.} & \Theta_{Y} = S_Y - {L}_Y, \hspace{.1in} (S_Y, {L}_Y, \Theta_{YX}, \Theta_{X}) \in \mathbb{H}_{\mathcal{M}} \label{eqn:ConvexRelaxed2_N}
\end{eqnarray}

{\noindent}In the following theorem, we prove that under additional conditions on the regularization parameter $\lambda_n$ of \eqref{eqn:ConvexRelaxed2_N}
 , the set of variables $(\Theta^{\mathcal{M}}, S_Y^{\mathcal{M}}, {L}_Y^{\mathcal{M}})$ is the unique optimum of \eqref{eqn:ConvexRelaxed2_N}. \begin{proposition} 
\label{proposition:lambdaBound}
Let $C' = (2+\delta\rm{deg}(S_Y^\star)+\gamma)\psi$, $C_1 = \frac{12}{\alpha_{}} + \frac{1}{\psi^2}$, $C_2 = \frac{4}{\alpha_{}} (1+\frac{1}{3\beta})$, and $C_{samp}'  = \max\Big\{ \frac{1}{48\beta\psi}, 4{C_2}{C'}, \frac{32m{\psi}C'^2C_2}{\alpha}, {12{\beta}m{\psi}C'^2{C_1^2} }\Big\}$. Suppose that the number of observed samples obeys $n \geq 4608 \beta^2m^2\psi^2C_{samp}'^2(p+q)$, and the regularization parameter $\lambda_n$ is chosen in the following range: $\lambda_n \in \Big[6\beta\sqrt{\frac{128(p+q)m^2\psi^2}{n}}, \frac{1}{C_{samp}'}\Big]$. Then, with probability greater than $1 - 2\text{exp}\Big\{-\frac{n\lambda_n^2}{4608\beta^2m^2\psi^2}\Big\}$,  $(\tilde{\Theta}, \tilde{S}_Y, \tilde{L}_Y) = ({\Theta}^{\mathcal{M}}, {S}_Y^{\mathcal{M}}, {L}_Y^{\mathcal{M}})$.
\end{proposition}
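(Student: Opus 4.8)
The plan is to show that the optimum $(\tilde{\Theta}, \tilde{S}_Y, \tilde{L}_Y)$ of the convex program \eqref{eqn:ConvexRelaxed2_N} already satisfies the single constraint that distinguishes it from the linearization of \eqref{eqn:NConveProblem_N}, namely $\Phi_{\delta,\gamma}[\mathcal{A}^\dagger \mathbb{I}^\star \mathcal{A}\tilde\Delta] \leq 5\lambda_n$, where $\tilde\Delta = (\tilde{S}_Y - S_Y^\star, \tilde{L}_Y - L_Y^\star, \tilde\Theta_{YX} - \Theta_{YX}^\star, \tilde\Theta_X - \Theta_X^\star)$. By the discussion preceding the statement --- which uses Corollary~\ref{eqn:Corollary1} to linearize the rank constraints at the smooth points $\Theta^{\mathcal{M}}_{YX}, L_Y^{\mathcal{M}}$ and to argue that the constraints of $\mathcal{M}$ along $T(L_Y^\star)^\perp$ and $T(\Theta_{YX}^\star)^\perp$ are strictly feasible, hence locally inactive --- the optimum $(\Theta^{\mathcal{M}}, S_Y^{\mathcal{M}}, L_Y^{\mathcal{M}})$ of \eqref{eqn:NConveProblem_N} coincides with the optimum of \eqref{eqn:ConvexRelaxed2_N} with the extra constraint $\Phi_{\delta,\gamma}[\mathcal{A}^\dagger \mathbb{I}^\star \mathcal{A}\Delta] \leq 5\lambda_n$ adjoined. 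Hence, once I verify that $(\tilde{\Theta}, \tilde{S}_Y, \tilde{L}_Y)$ satisfies this extra constraint, it is feasible --- and therefore optimal --- for the tighter problem; combined with uniqueness of the optimum of \eqref{eqn:ConvexRelaxed2_N} (the map $-\ell(\cdot)$ is strictly convex in $\Theta$, and $\mathcal{A}$ is injective on $\mathbb{H}_{\mathcal{M}}$ because $\chi(\mathbb{H}_{\mathcal{M}}, \Phi_{\delta,\gamma}) \geq \alpha > 0$ by Assumption~1, which applies since $\mathbb{H}_{\mathcal{M}} \in U(\omega_Y, \omega_{YX})$ by Corollary~\ref{eqn:Corollary1}), this forces $(\tilde{\Theta}, \tilde{S}_Y, \tilde{L}_Y) = (\Theta^{\mathcal{M}}, S_Y^{\mathcal{M}}, L_Y^{\mathcal{M}})$.

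To establish the bound $\Phi_{\delta,\gamma}[\mathcal{A}^\dagger \mathbb{I}^\star \mathcal{A}\tilde\Delta] \leq 5\lambda_n$, I would start from the stationarity conditions of \eqref{eqn:ConvexRelaxed2_N}. Eliminating $\Theta$ via $\Theta = \mathcal{A}(S_Y, L_Y, \Theta_{YX}, \Theta_X)$, these read $\mathcal{P}_{\mathbb{H}_{\mathcal{M}}}\mathcal{A}^\dagger(\hat\Sigma_n - \tilde\Theta^{-1}) + \lambda_n \mathcal{P}_{\mathbb{H}_{\mathcal{M}}} Z = 0$ for some subgradient $Z$ of the regularizer with $\Phi_{\delta,\gamma}[Z] \leq 1$, where $\hat\Sigma_n$ is the sample covariance. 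Writing $\hat\Sigma_n - \tilde\Theta^{-1} = (\hat\Sigma_n - \Sigma^\star) + (\Sigma^\star - \tilde\Theta^{-1})$ and Taylor-expanding the matrix inverse around $\Theta^\star$ gives $\Sigma^\star - \tilde\Theta^{-1} = \mathbb{I}^\star(\mathcal{A}\tilde\Delta) - \mathcal{R}$, where $\mathbb{I}^\star(M) = \Sigma^\star M \Sigma^\star$ and $\|\mathcal{R}\|_2 \leq \psi^3 \|\mathcal{A}\tilde\Delta\|_2^2 / (1 - \psi\|\mathcal{A}\tilde\Delta\|_2)$ whenever $\psi\|\mathcal{A}\tilde\Delta\|_2 < 1$. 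A second ingredient is that, since $(\tilde{S}_Y, \tilde{L}_Y, \tilde\Theta_{YX}, \tilde\Theta_X) \in \mathbb{H}_{\mathcal{M}}$, the part of $\tilde\Delta$ orthogonal to $\mathbb{H}_{\mathcal{M}}$ is pinned down: $\mathcal{P}_{\mathbb{H}_{\mathcal{M}}^\perp}\tilde\Delta = -\mathcal{P}_{\mathbb{H}_{\mathcal{M}}^\perp}(0, L_Y^\star, \Theta_{YX}^\star, 0) = -C_{T'}$, so Corollary~\ref{eqn:Corollary1}(4)--(5) supplies $\Phi_{\delta,\gamma}[\mathcal{A}^\dagger \mathbb{I}^\star \mathcal{A} C_{T'}] \leq \lambda_n/(6\beta)$ and $\Phi_{\delta,\gamma}[C_{T'}] \leq C_2\lambda_n$. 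Substituting and projecting the stationarity condition onto $\mathbb{H}_{\mathcal{M}}$ yields a fixed-point equation for $\Delta_{\mathbb{H}} := \mathcal{P}_{\mathbb{H}_{\mathcal{M}}}\tilde\Delta$, obtained by applying $(\mathcal{P}_{\mathbb{H}_{\mathcal{M}}}\mathcal{A}^\dagger \mathbb{I}^\star \mathcal{A}\mathcal{P}_{\mathbb{H}_{\mathcal{M}}})^{-1}$ --- well-defined with $\Phi_{\delta,\gamma}$-gain at least $\alpha$ by Assumption~1 --- to the sum of four terms: a curvature contribution from $\mathcal{A}^\dagger \mathbb{I}^\star \mathcal{A} C_{T'}$, the sampling term $\mathcal{P}_{\mathbb{H}_{\mathcal{M}}}\mathcal{A}^\dagger(\hat\Sigma_n - \Sigma^\star)$, the quadratic remainder $\mathcal{P}_{\mathbb{H}_{\mathcal{M}}}\mathcal{A}^\dagger \mathcal{R}$, and $\lambda_n \mathcal{P}_{\mathbb{H}_{\mathcal{M}}} Z$.

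The heart of the proof --- and the main obstacle --- is to close this into a self-contained estimate via a Brouwer fixed-point argument on a $\Phi_{\delta,\gamma}$-ball of radius $O(\lambda_n)$ around $\mathcal{P}_{\mathbb{H}_{\mathcal{M}}}(S_Y^\star, L_Y^\star, \Theta_{YX}^\star, \Theta_X^\star)$ inside $\mathbb{H}_{\mathcal{M}}$, in the spirit of \citet{RavWRY2008,Chand2012}. The sampling term is controlled on the stated high-probability event: a Davidson--Szarek-type deviation bound for the Gaussian sample covariance gives $\|\hat\Sigma_n - \Sigma^\star\|_2 \lesssim \psi\sqrt{(p+q)/n}$ with probability at least $1 - 2\exp\{-n\lambda_n^2/(4608\beta^2 m^2\psi^2)\}$ provided $\lambda_n \geq 6\beta\sqrt{128(p+q)m^2\psi^2/n}$, whence $\Phi_{\delta,\gamma}[\mathcal{A}^\dagger(\hat\Sigma_n - \Sigma^\star)] \leq m\|\hat\Sigma_n - \Sigma^\star\|_2$ is itself $O(\lambda_n)$. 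The quadratic remainder is controlled using the a priori bound $\Phi_{\delta,\gamma}[\tilde\Delta] = O(\lambda_n)$ maintained on the fixed-point ball, together with $\|\mathcal{A}\tilde\Delta\|_2 \leq (C'/\psi)\,\Phi_{\delta,\gamma}[\tilde\Delta]$ --- valid because $\tilde{S}_Y - S_Y^\star$ has at most $\mathrm{deg}(S_Y^\star)$ nonzeros per row and column, so $\|\tilde{S}_Y - S_Y^\star\|_2 \leq \mathrm{deg}(S_Y^\star)\,\delta\,\Phi_{\delta,\gamma}[\tilde\Delta]$; this is exactly where the \emph{upper} restriction $\lambda_n \leq 1/C_{samp}'$ is used, keeping $\psi\|\mathcal{A}\tilde\Delta\|_2$ bounded away from $1$ and forcing $\|\mathcal{R}\|_2 = O(m\psi C'^2\lambda_n^2)$ with a small constant. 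The term involving $Z$ contributes $O(\lambda_n)$ since $\Phi_{\delta,\gamma}[Z] \leq 1$. The explicit constants $C' = (2 + \delta\,\mathrm{deg}(S_Y^\star) + \gamma)\psi$ and $C_{samp}'$, and the sample-size requirement $n \geq 4608\beta^2 m^2\psi^2 C_{samp}'^2(p+q)$, are calibrated so that (i) after inversion the four contributions sum to at most $5\lambda_n$, (ii) the induced map is self-mapping so that Brouwer applies, and (iii) $\psi\|\mathcal{A}\tilde\Delta\|_2 < 1$ holds throughout. Having identified the fixed point with $(\tilde{\Theta}, \tilde{S}_Y, \tilde{L}_Y)$ and verified $\Phi_{\delta,\gamma}[\mathcal{A}^\dagger \mathbb{I}^\star \mathcal{A}\tilde\Delta] \leq 5\lambda_n$, the reduction of the first paragraph delivers $(\tilde{\Theta}, \tilde{S}_Y, \tilde{L}_Y) = (\Theta^{\mathcal{M}}, S_Y^{\mathcal{M}}, L_Y^{\mathcal{M}})$.

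I expect the delicate part to be the bookkeeping in the fixed-point step: one must simultaneously keep $\tilde\Delta$ in the ball where the inverse Taylor expansion of the log-determinant is controlled, absorb the off-subspace correction $C_{T'}$ (which is only small, not zero), and ensure the resulting bound on $\Phi_{\delta,\gamma}[\mathcal{A}^\dagger\mathbb{I}^\star\mathcal{A}\tilde\Delta]$ lands below $5\lambda_n$ rather than merely $O(\lambda_n)$ --- this is what pins down the particular numerical constants and the precise range of admissible $\lambda_n$.
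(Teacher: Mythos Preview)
Your proposal is correct and follows essentially the same route as the paper: reduce to showing the extra constraint $\Phi_{\delta,\gamma}[\mathcal{A}^\dagger\mathbb{I}^\star\mathcal{A}\tilde\Delta] < 5\lambda_n$ is inactive, expand $\tilde\Theta^{-1}$ around $\Theta^\star$ to isolate a sampling term $E_n$, a quadratic remainder $R_{\Sigma^\star}$, and the curvature correction $C_{T'}$, then run a Brouwer fixed-point argument in $\mathbb{H}_{\mathcal{M}}$ (using Assumption~1 for invertibility, Corollary~\ref{eqn:Corollary1}(4)--(5) for $C_{T'}$, and a Davidson--Szarek bound for $E_n$).

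One ingredient you do not name but will need in the last step: the fixed-point argument bounds $\Phi_{\delta,\gamma}[\tilde\Delta]$ and, via the optimality condition, $\Phi_{\delta,\gamma}[\mathcal{P}_{\mathbb{H}_{\mathcal{M}}}\mathcal{A}^\dagger\mathbb{I}^\star\mathcal{A}\mathcal{P}_{\mathbb{H}_{\mathcal{M}}}\tilde\Delta]$, but to pass to the \emph{unprojected} quantity $\Phi_{\delta,\gamma}[\mathcal{A}^\dagger\mathbb{I}^\star\mathcal{A}\tilde\Delta]$ the paper splits off the $\mathbb{H}_{\mathcal{M}}^\perp$ component and invokes Assumption~2 (the irrepresentability bound $\varphi(\mathbb{H}_{\mathcal{M}},\Phi_{\delta,\gamma}) \leq 1-\nu$) to control $\Phi_{\delta,\gamma}[\mathcal{P}_{\mathbb{H}_{\mathcal{M}}^\perp}\mathcal{A}^\dagger\mathbb{I}^\star\mathcal{A}\mathcal{P}_{\mathbb{H}_{\mathcal{M}}}\tilde\Delta]$; a crude operator-norm bound via $m\psi^2\|\mathcal{A}\tilde\Delta\|_2$ would not land below $5\lambda_n$ with these constants.
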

\begin{proof} 
The high-level proof strategy is to show that the constraint  $\Phi_{\delta, \gamma}[\mathcal{A}^{\dagger}\mathbb{I}^{\star}\mathcal{A}\Delta] \leq 5\lambda_n$ is inactive at the optimum of \eqref{eqn:ConvexRelaxed2_N}. That is, we show that $\Phi_{\delta, \gamma}[\mathcal{A}^{\dagger}\mathbb{I}^{\star}\mathcal{A}(\tilde{S}_Y - S_Y^\star, \tilde{L}_Y - {L}_Y^\star, \tilde{\Theta}_{YX} - \Theta_{YX}, \tilde{\Theta}_{X} - \Theta_{X}^\star)] < 5\lambda_n$. The proof of this fact relies on the results of the following lemmas:
\begin{lemma}
\label{lemma:Remainder}
Let $\Delta = (\tilde{S}_Y - S_Y^\star, \tilde{L}_Y - {L}_Y^\star, \tilde{\Theta}_{YX} - \Theta_{YX}^\star, \tilde{\Theta}_{X} - \Theta_{X}^\star)$. Denote \\$R_{\Sigma^\star}(\Delta) \triangleq \Sigma^\star\Big[\sum_{k = 2}^{\infty} (-\mathcal{A}(\Delta){\Sigma^\star}^{-1})^k \Big]$. If $\Phi_{\delta, \gamma}[\Delta] \leq \frac{1}{2C'}$, then: $\Phi_{\delta, \gamma}[\mathcal{A}^{\dagger}R_{\Sigma^\star}(\Delta)] \leq 2m{\psi}C'^2 \Phi_{\delta,\gamma}[\Delta]^2$.
\end{lemma}
\begin{proof}
We begin by introducing a quantity that plays an important role in the proof of this lemma and was also employed in \citet{Chand2012}. Given a symmetric $p \times p$ matrix $M$, we define the quantity $\mu(\Omega(M))$ with respect to the tangent space $\Omega(M)$ of the variety of $p \times p$ matrices with at most $|\text{support}(M)|$ nonzero entries:
\begin{eqnarray}
\mu(\Omega(M)) \triangleq \max_{N \in \Omega(M), \|N\|_{\ell_\infty} = 1} \|N\|_{2}
\label{eqn:mu}
\end{eqnarray}

{\noindent}
One can show that a sparse matrix $M$ with ``bounded degree" (a small number of non zeros per row/column) has small $\mu(M)$. Specifically, for any $p \times p$ matrix $M$, we have $\mu(\Omega(M)) \leq \rm{deg}(M)$ where $\rm{deg}(M)$ is equal to the maximum number of non zeros in any column/row of $M$ \citep{Chand2012}. We now proceed with the proof:
\begin{eqnarray*}
\Phi_{\delta,\gamma}[\mathcal{A}^{\dagger}R_{\Sigma^\star}(\Delta)] &\leq& m\psi \Big[\sum_{k = 2}^{\infty} (\psi \|\Delta\|_2)^k \Big] \
 \leq m\psi \sum_{k = 2}^{\infty} \psi^k\Big(\delta\mu(\Omega(S_Y^\star)) \frac{\|{\Delta}S_Y\|_{\ell_{\infty}}}{\delta}  + \|{\Delta}{L}_Y\|_2 \\&+&\gamma\frac{\|{\Delta}\Theta_{YX}\|_2}{\gamma} + \|{\Delta}\Theta_{X}\|_2\Big)^k\\
 &\leq&m\psi \sum_{k = 2}^{\infty} \psi^k\Big(\delta\text{deg}(S_Y^\star) \frac{\|{\Delta}S_Y\|_{\ell_{\infty}}}{\delta}  + \|{\Delta}{L_Y}\|_2 \\ &+&\gamma\frac{\|{\Delta}\Theta_{YX}\|_2}{\gamma} + \|{\Delta}\Theta_{X}\|_2\Big)^k\\
&\leq& m\psi^3 \frac{(2+\delta\text{deg}(S_Y^\star)+\gamma)^2\Phi_{\delta,\gamma}[\Delta]^2}{1-(2+\delta\text{deg}(S_Y^\star)+\gamma)\Phi_{\delta,\gamma}[\Delta]\psi}
\leq 2m{\psi}C'^2 \Phi_{\delta,\gamma}[\Delta]^2
\end{eqnarray*}
Note that the second inequality employs the property that $(\tilde{S}_Y - S_Y^\star) \in \Omega(S_Y^\star)$ and the quantity defined in \eqref{eqn:mu}. 

\end{proof}
\begin{lemma}
\label{lemma:Brower}
Let $C_{T_{\mathcal{M}}} = \mathcal{P}_{{\mathbb{H}}_{\mathcal{M}}^{\perp}} (0,{L}_Y^\star, \Theta_{YX}^\star, 0)$. Furthermore, let $E_n = \Sigma_n - \Sigma^\star$, and $\Delta = (\tilde{S}_Y - S_Y^\star, \tilde{L}_Y - {L}_Y^\star, \tilde{\Theta}_{YX} - \Theta_{YX}^\star, \tilde{\Theta}_{X} - \Theta_{X}^\star)$. Finally, define:
\begin{eqnarray}
r = \max\Big\{\frac{4}{\alpha}\Big(\Phi_{\delta,\gamma}[\mathcal{A}^{\dagger}E_n + \mathcal{A}^{\dagger}\mathbb{I}^{\star}\mathcal{A}C_{T_{\mathcal{M}}}] +\lambda_n\Big),\hspace{.1in} \Phi_{\delta, \gamma}[C_{T_{\mathcal{M}}}]\Big\}
\label{eqn:rdef}
\end{eqnarray}

{\noindent}If $r \leq \min\{\frac{1}{4C'}, \frac{\alpha}{32m{\psi}C'^2}\}$, then $\Phi_{\delta, \gamma}[\Delta] \leq 2r$.  
\end{lemma}

\begin{proof}
{\noindent} The proof of this result uses Brouwer's fixed-point theorem, and is inspired by the proof of a similar result in \citep{RavWRY2008,Chand2012}. 
The optimality conditions of \eqref{eqn:ConvexRelaxed2_N} suggest that there exist Lagrange multipliers $Q_{\Omega} \in \Omega(S_Y^\star)^{\perp}$,  $Q_{T_{Y}} \in T(\tilde{L}_Y)^{\perp}$, and $Q_{T_{YX}} \in T(\tilde{\Theta}_{YX})^{\perp}$ such that
\begin{eqnarray*}
[\Sigma_n - {\tilde{\Theta}}^{-1}]_{Y} + Q_{\Omega} \in -\lambda_n\delta\partial \|\tilde{S}_Y\|_{\ell_1}; \hspace{.1in} [\Sigma_n - {\tilde{\Theta}}^{-1}]_{Y} + Q_{T_{Y}}  &\in& \lambda_n\partial\|\tilde{L}_Y\|_\star\\[.01in]
[\Sigma_n - {\tilde{\Theta}}^{-1}]_{YX} + Q_{T_{YX}}  \in -\lambda_n\gamma\partial\|\tilde{\Theta}_{YX}\|_{\star};  \hspace{.1in} [\Sigma_n - {\tilde{\Theta}}^{-1}]_{X}  = 0
\end{eqnarray*}

{\noindent}Letting the SVD decomposition of $\tilde{L}$ and $\tilde{\Theta}_{YX}$ be given by $\tilde{L}_Y = \bar{U}\bar{D}\bar{V}'$ and $\tilde{\Theta}_{YX} = \breve{U}\breve{D}{\breve{V}}'$ respectively, we can restrict the optimality conditions to the space $\mathbb{H}_{\mathcal{M}}$ to obtain:
\begin{equation}
\mathcal{P}_{{\mathbb{H}}_{\mathcal{M}}}\mathcal{A}^{\dagger}(\Sigma_n - \tilde{\Theta}^{-1}) = (-\lambda_n\delta\text{sign}(\tilde{S}_Y), \hspace{.1in} \lambda_n\bar{U}\bar{V}', \hspace{.1in}  -\lambda_n\gamma_{}{\breve{U}}{\breve{V}}',  \hspace{.1in} 0)  
\label{eqn:optcond}
\end{equation}
{\noindent} Based on the Fisher information Assumption \eqref{eqn:FirstFisherCond}, the optimum of \eqref{eqn:ConvexRelaxed2_N} is unique (this is because the Hessian of the negative log-likelihood term is positive definite restricted to the tangent space constraints). Moreover, using standard Lagrangian duality, one can show that the set of variables  $(\tilde{\Theta}, \tilde{S}_Y, \tilde{L}_Y)$ that satisfy \eqref{eqn:optcond} are unique. The matrix inversion lemma allows one to express $\tilde{\Theta}^{-1}$ equivalently by:
\begin{eqnarray*}
\tilde{\Theta}^{-1} = [\Theta^\star + \mathcal{A}(\Delta)]^{-1} = \Sigma^\star  - R_{\Sigma^\star}(\Delta) + \mathbb{I}^{\star}\mathcal{A}(\Delta)
\end{eqnarray*}
Setting $Z \triangleq  (-\lambda_n\delta\text{sign}(\tilde{S}), \hspace{.1in} \lambda_n\bar{U}\bar{V}', \hspace{.1in}  -\lambda_n\gamma_{}{\breve{U}}{\breve{V}}',  \hspace{.1in} 0)$, relation \eqref{eqn:optcond} can be restated as:
\begin{equation}
\mathcal{P}_{{\mathbb{H}}_{\mathcal{M}}}\mathcal{A}^{\dagger}(E_{n} - R_{\Sigma^\star}(\Delta) + \mathbb{I}^{\star}\mathcal{A}(\Delta)) = Z
\label{eqn:Optcond}
\end{equation}
{\noindent}Notice that $\Phi_{\delta, \gamma}(Z) = \lambda_n$. We now appeal to Brouwer's fixed-point theorem to bound $\Phi_{\delta, \gamma}[\Delta]$. Consider the following function $G(\underline{\delta})$ with $\underline{\delta} \in U(\omega_{Y}, \omega_{YX})$:
\begin{eqnarray*}
G(\underline{\delta}) &=& \underline{\delta} - (\mathcal{P}_{\mathbb{H}_{\mathcal{M}}}\mathcal{A}^{\dagger}\mathbb{I}^{\star}\mathcal{A}\mathcal{P}_{\mathbb{H}_{\mathcal{M}}})^{-1}\Big(\mathcal{P}_{\mathbb{H}_{\mathcal{M}}}\mathcal{A}^{\dagger} [E_{n} - R_{\Sigma^\star}\mathcal{A}(\underline{\delta} + C_{T_{\mathcal{M}}}) + \mathbb{I}^{\star}\mathcal{A}(\underline{\delta} + C_{T_{\mathcal{M}}})] - Z\Big) 
\end{eqnarray*}

{\noindent}Note that the function $G(\underline{\delta})$ is well-defined since the operator $\mathcal{P}_{\mathbb{H}_{\mathcal{M}}}\mathcal{A}^{\dagger}\mathbb{I}^{\star}\mathcal{A}\mathcal{P}_{\mathbb{H}_{\mathcal{M}}}$ is bijective due to Fisher information Assumption 1 in \eqref{eqn:FirstFisherCond}. As a result, $\underline{\delta}$ is a fixed point of $G(\underline{\delta})$ if and only if $\mathcal{P}_{\mathbb{H}_{\mathcal{M}}}\mathcal{A}^{\dagger} [E_{n} - R_{\Sigma^\star}\mathcal{A}(\underline{\delta} + C_{T_{\mathcal{M}}}) + \mathbb{I}^{\star}\mathcal{A}(\underline{\delta} + C_{T_{\mathcal{M}}})] =  Z$. 
Since the variables $(\tilde{\Theta}, \tilde{S}_Y, \tilde{L}_Y)$ are the unique solution to \eqref{eqn:ConvexRelaxed2_N}, the only fixed point of $G$ is $\mathcal{P}_{\mathbb{H}_{\mathcal{M}}}[\Delta]$. Next we show that this unique optimum lives inside the ball $\mathbb{B}_r = \{\underline{\delta} \hspace{.1in} | \hspace{.1in} \Phi_{\delta, \gamma}(\underline{\delta}) \leq r,  \hspace{.05in} \underline{\delta} \in \mathbb{H}_{\mathcal{M}}\}$. In particular, if we show that under the map $G$, the image of $\mathbb{B}_r$ lies in $\mathbb{B}_r$, we can appeal to Brouwer's fixed point theorem to conclude that $\mathcal{P}_{\mathbb{H}_{\mathcal{M}}}[\Delta] \in \mathbb{B}_r$. For $\underline{\delta} \in \mathbb{B}_r$, $\Phi_{\delta, \gamma}[G(\underline{\delta})]$ can be bounded as follows:
\begin{eqnarray*}
\Phi_{\delta, \gamma}[G(\underline{\delta})] &=& \Phi_{\delta, \gamma}\Big[ \underline{\delta} -(\mathcal{P}_{\mathbb{H}_{\mathcal{M}}}\mathcal{A}^{\dagger}\mathbb{I}^{\star}\mathcal{A}\mathcal{P}_{\mathbb{H}_{\mathcal{M}}})^{-1}\Big(\mathcal{P}_{\mathbb{H}_{\mathcal{M}}}\mathcal{A}^{\dagger} [E_{n} - R_{\Sigma^\star}\mathcal{A}(\underline{\delta} + C_{T_{\mathcal{M}}}) \\ &+&  \mathbb{I}^{\star}\mathcal{A}(\underline{\delta} + C_{T_{\mathcal{M}}})] - Z\Big)\Big] \\
 &=& \Phi_{\delta, \gamma}\Big[(\mathcal{P}_{\mathbb{H}_{\mathcal{M}}}\mathcal{A}^{\dagger}\mathbb{I}^{\star}\mathcal{A}\mathcal{P}_{\mathbb{H}_{\mathcal{M}}})^{-1}\Big(\mathcal{P}_{\mathbb{H}_{\mathcal{M}}}\mathcal{A}^{\dagger} [E_{n} - R_{\Sigma^\star}A(\underline{\delta} + C_{T_{\mathcal{M}}}) \\&+& \mathbb{I}^{\star}\mathcal{A}C_{T_{\mathcal{M}}}] - Z\Big)\Big] \\
&\leq& \frac{1}{\alpha} \Phi_{\delta, \gamma} \Big[\mathcal{P}_{{\mathbb{H}}_{\mathcal{M}}}\mathcal{A}^{\dagger}(E_n - R_{\Sigma^\star}\mathcal{A}(\underline{\delta} + C_{T_{\mathcal{M}}}) + \mathbb{I}^{\star}\mathcal{A}(C_{T_{\mathcal{M}}})) - Z\Big] \\
&\leq& \frac{2}{\alpha}\Big[\Phi_{\delta, \gamma}[\mathcal{A}^{\dagger}( E_n + \mathbb{I}^{\star}\mathcal{A}(C_{T_{\mathcal{M}}}))] +  \lambda_n\Big] + \frac{2}{\alpha}{\Phi_{\delta, \gamma}[\mathcal{A}^{\dagger}R_{\Sigma^\star}(\underline{\delta})}] \\
&\leq& \frac{r}{2} +  \frac{2}{\alpha}{\Phi_{\delta, \gamma}[\mathcal{A}^{\dagger}R_{\Sigma^\star}(\underline{\delta})}]
\end{eqnarray*}

{\noindent}The first inequality holds because of Fisher information Assumption 1 in \eqref{eqn:FirstFisherCond}. The second inequality uses the property $\Phi_{\delta, \gamma}[\mathcal{P}_{\mathbb{H}_{\mathcal{M}}}(.)] \leq 2\Phi_{\delta, \gamma}(.)$ and $\Phi_{\delta, \gamma}(Z) = \lambda_n$. Moreover, since $r \leq \frac{1}{4C'}$, we have $\Phi_{\delta, \gamma}(\underline{\delta} + C_{T_{\mathcal{M}}}) \leq \Phi_{\delta, \gamma}(\underline{\delta}) + \Phi_{\delta, \gamma}(C_{T_{\mathcal{M}}}) \leq 2r \leq \frac{1}{2C'}$. We can now appeal to Proposition 1 to obtain:
\begin{eqnarray*}
\frac{2}{\alpha}{\Phi_{\delta, \gamma} [\mathcal{A}^{\dagger}R_{\Sigma^\star}(\underline{\delta} + C_{T_{\mathcal{M}}})}] &\leq& \frac{4}{\alpha}m{\psi}C'^2 [\Phi_{\delta, \gamma}(\underline{\delta} + C_{T_{\mathcal{M}}})]^2 \\
 &\leq& \frac{16m{\psi}}{\alpha}C'^2r^2 = \Big[\frac{16m{\psi}}{\alpha}C'^2r_{}\Big]r_{} \leq \frac{r}{2}
\end{eqnarray*}
\par Thus, we conclude that  $\Phi_{\delta, \gamma}
[G(\underline{\delta})] \leq r_{}$ and by Brouwer's fixed-point theorem, $\Phi_{\delta, \gamma} [\mathcal{P}_{\mathbb{H}_{\mathcal{M}}}(\Delta)] \leq r$. Furthermore,
\begin{eqnarray*}
\Phi_{\delta, \gamma}[\Delta] \leq \Phi_{\delta, \gamma}[\mathcal{P}_{\mathbb{H}_{\mathcal{M}}}(\Delta)] + \Phi_{\delta, \gamma}(C_{T_{\mathcal{M}}}) \leq 2r
\end{eqnarray*} 
\end{proof}

\begin{lemma}
\label{prop:Enbound}
Suppose that the number of observed samples obeys $n \geq 4608 \beta^2m^2\psi^2C_{samp}'^2(p+q)$, and the regularization parameter $\lambda_n$ is chosen in the following range: $\lambda_n \in \Big[6\beta\sqrt{\frac{128(p+q)m^2\psi^2}{n}}, \frac{1}{C_{samp}'}\Big]$. Then, with probability greater than $1 - 2\text{exp}\Big\{-\frac{n\lambda_n^2}{4608\beta^2m^2\psi^2}\Big\}$, $\Phi_{\delta, \gamma}[\mathcal{A}^{\dagger}E_n] \leq \frac{\lambda_n}{6\beta}$. \end{lemma}
\begin{proof}
First, note that $\Phi_{\delta, \gamma}[\mathcal{A}^{\dagger}E_n] \leq m \|\Sigma_n - \Sigma^\star\|_2$. Using the results in \citep{DavidSzarek2001} and the fact that $\frac{\lambda_n}{6\beta} \leq 8\psi$ and $n \geq \frac{2304(p+q)m^2\psi^2}{\lambda_n^2}$, the following bound holds: $\text{Pr}[ m\|\Sigma_n - \Sigma^\star\|_2 \geq \frac{\lambda_n}{6\beta}] \leq 2\text{exp} \Big\{\frac{-n\lambda_n^2}{4608m^2\psi^2}\Big\}$. Thus, $\Phi_{\delta, \gamma}[\mathcal{A}^{\dagger}E_n] \leq \frac{\lambda_n}{6\beta}$ with probability greater than  $1 - 2\text{exp}\Big\{-\frac{n\lambda_n^2}{4608\beta^2m^2\psi^2}\Big\}$.
\end{proof}

{\noindent}\underline{\textbf{Proof of Proposition 2}}: We now proceed with completing the proof of this theorem. In particular, we show that $\Phi_{\delta, \gamma}[\mathcal{A}^{\dagger}\mathbb{I}^{\star}\mathcal{A}(\tilde{S}_Y - S_Y^\star, \tilde{L}_Y - {L}_Y^\star, \tilde{\Theta}_{YX} - \Theta_{YX}^\star, \tilde{\Theta}_{X} - \Theta_{X}^\star)] < 5\lambda_n$. Based on the optimality condition \eqref{eqn:Optcond} and the property that $\Phi_{\delta, \gamma}[\mathcal{P}_{\mathbb{H}_{\mathcal{M}}}( . )] \leq 2\Phi_{\delta, \gamma}(.)$, we have:

\begin{eqnarray}
\Phi_{\delta, \gamma}[\mathcal{P}_{\mathbb{H}_{\mathcal{M}}}\mathcal{A}^{\dagger}\mathbb{I}^{\star}\mathcal{A}\mathcal{P}_{\mathbb{H}_{\mathcal{M}}}(\Delta)] &\leq& 2\lambda_n + \Phi_{\delta, \gamma}[\mathcal{P}_{\mathbb{H}_{\mathcal{M}}}\mathcal{A}^{\dagger}R_{\Sigma^\star}(\Delta)] + \Phi_{\delta, \gamma} [\mathcal{P}_{\mathbb{H}_{\mathcal{M}}} \mathcal{A}^{\dagger}\mathbb{I}^{\star}\mathcal{A}C_{T_{\mathcal{M}}}]\nonumber\\ &+& \Phi_{\delta, \gamma}[\mathcal{P}_{\mathbb{H}_{\mathcal{M}}} \mathcal{A}^{\dagger}E_{n}] \nonumber\\
&\leq& 2\lambda_n + 2\Phi_{\delta, \gamma}[\mathcal{A}^{\dagger}R_{\Sigma^\star}(\Delta)] + 2\Phi_{\delta, \gamma}[\mathcal{A}^{\dagger}\mathbb{I}^{\star}\mathcal{A} C_{T_{\mathcal{M}}}] \nonumber\\ &+& 2\Phi_{\delta, \gamma}[\mathcal{A}^{\dagger}E_{n}] \label{eqn:OPTcondIneq} \\ \nonumber
\end{eqnarray}
{\noindent}Appealing to Corollary~\ref{eqn:Corollary1} and Proposition~\ref{prop:Enbound}, we have that $\Phi_{\delta, \gamma}[\mathcal{A}^{\dagger}\mathbb{I}^{\star}\mathcal{A}C_{T_{\mathcal{M}}}]\leq \frac{\lambda_n}{6\beta}$, $\Phi_{\delta, \gamma}[C_{T'}] \leq C_2\lambda_n$ and  (with high probability) $\Phi_{\delta, \gamma}[\mathcal{A}^{\dagger}E_{n}] \leq \frac{\lambda_n}{6\beta}$. Consequently, based on the bound on $\lambda_n$ in assumption of Theorem~\ref{proposition:lambdaBound}, it is straightforward to show that $r \leq \min\{\frac{1}{4C'}, \frac{\alpha}{32m{\psi}C'^2}\}$. Hence by Proposition~\ref{lemma:Brower}, $\Phi_{\delta, \gamma}[\Delta] \leq \frac{1}{2C'}$. Finally, we can appeal to Proposition~\ref{lemma:Remainder} to obtain: 
\begin{eqnarray*}
\Phi_{\delta, \gamma}[\mathcal{A}^{\dagger}R_{\Sigma^\star} (\Delta_{})] \leq {2m{\psi}C'^2 \Phi_{\delta, \gamma}[\Delta]^2} \leq {2m{\psi}C'^2C_1^2\lambda_n^2} 
\leq  \Big[12{\beta}m{\psi}C'^2C_1^2\lambda_n\Big] \frac{\lambda_n}{6\beta} \leq \frac{\lambda_n}{6\beta}
\end{eqnarray*}

{\noindent}where the second to last bound comes from the bound on $\lambda_n$. Thus, the expression in \eqref{eqn:OPTcondIneq} can be further simplified to:
\begin{eqnarray*}
\Phi_{\delta, \gamma}[\mathcal{P}_{\mathbb{H}_{\mathcal{M}}}\mathcal{A}^{\dagger}\mathbb{I}^{\star}\mathcal{A}\mathcal{P}_{\mathbb{H}_{\mathcal{M}}}(\Delta)] \leq 2\lambda_n + 2 \lambda_n\Big(\frac{1}{6\beta} + \frac{1}{6\beta} + \frac{1}{6\beta}\Big) \leq 2\lambda_n + \frac{\lambda_n}{\beta} \leq \frac{17\lambda_n^{}}{8}
\end{eqnarray*}
\vspace{.1in}
{\noindent}The last bound follows since $\beta \geq 8$. Furthermore, 
\begin{eqnarray*}
\Phi_{\delta, \gamma}[\mathcal{A}^{\dagger} \mathbb{I}^{\star}\mathcal{A}(\Delta)] &\leq& \Phi_{\delta, \gamma}[\mathcal{P}_{\mathbb{H}_{\mathcal{M}}}\mathcal{A}^{\dagger}\mathbb{I}^{\star}\mathcal{A}\mathcal{P}_{\mathbb{H}_{\mathcal{M}}}(\Delta)] + \Phi_{\delta, \gamma}[\mathcal{P}_{\mathbb{H}_{\mathcal{M}}^\perp}\mathcal{A}^{\dagger}\mathbb{I}^{\star}\mathcal{A}\mathcal{P}_{\mathbb{H}_{\mathcal{M}}}(\Delta)] \\ &+& \Phi_{\delta, \gamma}[\mathcal{A}^{\dagger}\mathbb{I}^{\star}\mathcal{A}\mathcal{P}_{\mathbb{H}_{\mathcal{M}}^\perp}(\Delta)]\\
&\leq& \Phi_{\delta, \gamma}[\mathcal{P}_{\mathbb{H}_{\mathcal{M}}}\mathcal{A}^{\dagger}\mathbb{I}^{\star}\mathcal{A}\mathcal{P}_{\mathbb{H}_{\mathcal{M}}}(\Delta)] + (1-\nu)\Phi_{\delta, \gamma}[\mathcal{P}_{\mathbb{H}_{\mathcal{M}}}\mathcal{A}^{\dagger}\mathbb{I}^{\star}\mathcal{A}\mathcal{P}_{\mathbb{H}_{\mathcal{M}}}(\Delta)]  \\ &+& \Phi_{\delta, \gamma}[\mathcal{A}^{\dagger}\mathbb{I}^\star\mathcal{A}C_{T_{\mathcal{M}}}] \\
&\leq& \frac{17\lambda_n}{8} +\frac{17\lambda_n}{8}(1-\nu) + \frac{\lambda_n}{6\beta} <  \frac{17\lambda_n}{8} +\frac{17\lambda_n}{8} + \frac{\lambda_n}{48} < 5\lambda_n
\end{eqnarray*}

{\noindent}Note that we appeal to Fisher information Assumption 2 in \eqref{eqn:SecondFisherCond} in the second inequality.
\end{proof}


\subsubsection{From Tangent Constraints to the Original Problem}
\label{sec:backOrig}
Finally, we show that the tangent-space constraints in \eqref{eqn:ConvexRelaxed2_N} can be removed without altering the optimum value. More concretely,
\begin{proposition}
\label{proposition:ConvertFinal}
Suppose that the number of observed samples obeys $n \geq 4608 \beta^2m^2\psi^2C_{samp}'^2(p+q)$, and the regularization parameter $\lambda_n$ is chosen in the following range: \\\allowbreak$\lambda_n \in \Big[6\beta\sqrt{\frac{128(p+q)m^2\psi^2}{n}}, \frac{1}{C_{samp}'}\Big]$. Then, with probability greater than $1 - 2\text{exp}\Big\{-\frac{n\lambda_n^2}{4608\beta^2m^2\psi^2}\Big\}$, the optimum of \eqref{eqn:ConvexRelaxed2_N} is the same as the optimum of \eqref{eqn:ConvexRelaxed4_N}. \\
\end{proposition}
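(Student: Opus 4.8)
\emph{Proof plan.} The plan is to show that the optimal solution $(\tilde{\Theta},\tilde{S}_Y,\tilde{L}_Y)$ of the tangent-constrained program \eqref{eqn:ConvexRelaxed2_N} already satisfies the first-order (KKT) optimality conditions of \eqref{eqn:ConvexRelaxed4_N}; since \eqref{eqn:ConvexRelaxed4_N} is convex and $(\tilde{\Theta},\tilde{S}_Y,\tilde{L}_Y)$ is feasible for it (it satisfies $\Theta\succ0$ and $\Theta_Y=S_Y-L_Y$), this makes it a global minimizer, and strictness of the certificate will additionally force it to be the \emph{unique} minimizer, so the two optima coincide. Throughout I work on the event of Proposition~\ref{proposition:lambdaBound}, of probability at least $1-2\exp\{-n\lambda_n^2/(4608\beta^2m^2\psi^2)\}$, on which $(\tilde{\Theta},\tilde{S}_Y,\tilde{L}_Y)=(\Theta^{\mathcal M},S_Y^{\mathcal M},L_Y^{\mathcal M})$; hence Corollary~\ref{eqn:Corollary1} applies and gives $\mathrm{sign}(\tilde{S}_Y)=\mathrm{sign}(S_Y^\star)$ (so $\Omega(\tilde{S}_Y)=\Omega(S_Y^\star)$), $\tilde{L}_Y\succeq0$ with $\mathrm{rank}(\tilde{L}_Y)=\mathrm{rank}(L_Y^\star)$ and $\mathrm{rank}(\tilde{\Theta}_{YX})=\mathrm{rank}(\Theta_{YX}^\star)$, and (part (3)) $\rho(T(\tilde{L}_Y),T(L_Y^\star))\le\omega_Y$, $\rho(T(\tilde{\Theta}_{YX}),T(\Theta_{YX}^\star))\le\omega_{YX}$, so that $\mathbb H_{\mathcal M}\in U(\omega_Y,\omega_{YX})$ and Assumptions~1--2 are available for $\mathbb H_{\mathcal M}$.

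First I would record the KKT conditions of \eqref{eqn:ConvexRelaxed4_N} in the parametrization $\Theta=\mathcal A(S_Y,L_Y,\Theta_{YX},\Theta_X)$: using $\nabla_\Theta(-\ell)=\Sigma_n-\Theta^{-1}$ and the chain rule, optimality at $(\tilde\Theta,\tilde S_Y,\tilde L_Y)$ is equivalent to the existence of subgradients $Z_S\in\partial\|\tilde S_Y\|_{\ell_1}$, $Z_L\in\partial\|\tilde L_Y\|_\star$, $Z_{YX}\in\partial\|\tilde\Theta_{YX}\|_\star$ with $-[\Sigma_n-\tilde\Theta^{-1}]_Y=\lambda_n\delta Z_S=-\lambda_n Z_L$, $-[\Sigma_n-\tilde\Theta^{-1}]_{YX}=\lambda_n\gamma Z_{YX}$, and $[\Sigma_n-\tilde\Theta^{-1}]_X=0$. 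The key observation is that the ``active'' parts of these conditions are already supplied by the optimality conditions of \eqref{eqn:ConvexRelaxed2_N} recorded in the proof of Proposition~\ref{proposition:lambdaBound} (equation~\eqref{eqn:optcond}), namely $\mathcal P_{\mathbb H_{\mathcal M}}\mathcal A^\dagger(\Sigma_n-\tilde\Theta^{-1})=(-\lambda_n\delta\,\mathrm{sign}(\tilde S_Y),\,\lambda_n\bar U\bar V',\,-\lambda_n\gamma\breve U\breve V',\,0)$, where $\tilde L_Y=\bar U\bar D\bar V'$ and $\tilde\Theta_{YX}=\breve U\breve D\breve V'$. In particular $[\Sigma_n-\tilde\Theta^{-1}]_X=0$ holds verbatim; on $\mathrm{support}(\tilde S_Y)=\mathrm{support}(S_Y^\star)$ the candidate $Z_S=-(\lambda_n\delta)^{-1}[\Sigma_n-\tilde\Theta^{-1}]_Y$ equals $\mathrm{sign}(\tilde S_Y)$; and the $T(\tilde L_Y)$- and $T(\tilde\Theta_{YX})$-components of $Z_L=\lambda_n^{-1}[\Sigma_n-\tilde\Theta^{-1}]_Y$ and $Z_{YX}=-(\lambda_n\gamma)^{-1}[\Sigma_n-\tilde\Theta^{-1}]_{YX}$ equal $\bar U\bar V'$ and $\breve U\breve V'$. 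So the only thing left to prove is that the complementary pieces are strictly sub-unit, i.e. $\Phi_{\delta,\gamma}\big[\mathcal P_{\mathbb H_{\mathcal M}^\perp}\mathcal A^\dagger(\Sigma_n-\tilde\Theta^{-1})\big]<\lambda_n$, which unpacks to $\|\mathcal P_{\Omega(S_Y^\star)^\perp}[\Sigma_n-\tilde\Theta^{-1}]_Y\|_{\ell_\infty}<\lambda_n\delta$, $\|\mathcal P_{T(\tilde L_Y)^\perp}[\Sigma_n-\tilde\Theta^{-1}]_Y\|_2<\lambda_n$, and $\|\mathcal P_{T(\tilde\Theta_{YX})^\perp}[\Sigma_n-\tilde\Theta^{-1}]_{YX}\|_2<\lambda_n\gamma$.

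To establish this strict bound I would set $\Delta=(\tilde S_Y-S_Y^\star,\tilde L_Y-L_Y^\star,\tilde\Theta_{YX}-\Theta_{YX}^\star,\tilde\Theta_X-\Theta_X^\star)$, expand $\Sigma_n-\tilde\Theta^{-1}$ to second order via the matrix-inversion lemma as $E_n$, the Fisher-information term $\mathbb I^\star\mathcal A(\Delta)$, and the quadratic remainder $R_{\Sigma^\star}(\Delta)$ (with signs as in the proof of Lemma~\ref{lemma:Brower}), and use the splitting $\mathcal A(\Delta)=\mathcal A\mathcal P_{\mathbb H_{\mathcal M}}\Delta-\mathcal A C_{T_{\mathcal M}}$ with $C_{T_{\mathcal M}}=\mathcal P_{\mathbb H_{\mathcal M}^\perp}(0,L_Y^\star,\Theta_{YX}^\star,0)$. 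Since $\mathcal P_{\mathbb H_{\mathcal M}}\mathcal A^\dagger\mathbb I^\star\mathcal A\mathcal P_{\mathbb H_{\mathcal M}}$ is invertible on $\mathbb H_{\mathcal M}$ (Assumption~1), writing $\mathcal P_{\mathbb H_{\mathcal M}}\Delta=(\mathcal P_{\mathbb H_{\mathcal M}}\mathcal A^\dagger\mathbb I^\star\mathcal A\mathcal P_{\mathbb H_{\mathcal M}})^{-1}(W)$ with $W=\mathcal P_{\mathbb H_{\mathcal M}}\mathcal A^\dagger\mathbb I^\star\mathcal A\mathcal P_{\mathbb H_{\mathcal M}}\Delta$ puts the dominant term $\mathcal P_{\mathbb H_{\mathcal M}^\perp}\mathcal A^\dagger\mathbb I^\star\mathcal A\mathcal P_{\mathbb H_{\mathcal M}}\Delta$ into exactly the form controlled by the irrepresentability condition, so $\Phi_{\delta,\gamma}[\cdot]\le(1-\nu)\Phi_{\delta,\gamma}[W]$ by Assumption~2. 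Feeding \eqref{eqn:optcond} back in shows $W$ equals (up to sign) $Z$, where $\Phi_{\delta,\gamma}[Z]=\lambda_n$, plus three remainder pieces that are $O(\lambda_n/\beta)$ by Lemma~\ref{prop:Enbound} ($\Phi_{\delta,\gamma}[\mathcal A^\dagger E_n]\le\lambda_n/(6\beta)$), by Lemma~\ref{lemma:Remainder} applied with $\Phi_{\delta,\gamma}[\Delta]\le C_1\lambda_n$ from Proposition~\ref{theorem:NonconvexImp} and $\lambda_n\le1/C_{samp}'$ ($\Phi_{\delta,\gamma}[\mathcal A^\dagger R_{\Sigma^\star}(\Delta)]\le\lambda_n/(6\beta)$), and by Corollary~\ref{eqn:Corollary1}, part (4) ($\Phi_{\delta,\gamma}[\mathcal A^\dagger\mathbb I^\star\mathcal A C_{T_{\mathcal M}}]\le\lambda_n/(6\beta)$). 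Combining these with $\Phi_{\delta,\gamma}[\mathcal P_{\mathbb H_{\mathcal M}}(\cdot)],\Phi_{\delta,\gamma}[\mathcal P_{\mathbb H_{\mathcal M}^\perp}(\cdot)]\le2\Phi_{\delta,\gamma}(\cdot)$ yields $\Phi_{\delta,\gamma}[W]\le\lambda_n(1+1/\beta)$ and hence $\Phi_{\delta,\gamma}\big[\mathcal P_{\mathbb H_{\mathcal M}^\perp}\mathcal A^\dagger(\Sigma_n-\tilde\Theta^{-1})\big]\le(1-\nu)\lambda_n(1+1/\beta)+O(\lambda_n/\beta)$. Substituting $\beta=(3-\nu)/\nu$ collapses the right-hand side to a constant $\times\,\lambda_n$ that is strictly below $\lambda_n$ for every $\nu\in(0,1/3)$ (the constants $1/(6\beta)$ appearing in those lemmas are calibrated precisely so that this margin is positive). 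Having verified all four KKT conditions, the strict inequalities, together with the positive-definiteness of the Hessian of $-\ell$ restricted to the tangent directions (Assumption~1) and the transversality $\Omega(S_Y^\star)\cap T(\tilde L_Y)=\{0\}$ it entails, force this minimizer of \eqref{eqn:ConvexRelaxed4_N} to be unique; hence it equals $(\bar\Theta,\bar S_Y,\bar L_Y)$, and, since $\tilde L_Y\succeq0$, also the optimum of \eqref{eqn:originalproblem_NS}.

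The step I expect to be the main obstacle is the last one: one must peel $E_n$ and $R_{\Sigma^\star}(\Delta)$ off $\Sigma_n-\tilde\Theta^{-1}$ so that what remains is genuinely of the shape $\mathcal P_{\mathbb H^\perp}\mathcal A^\dagger\mathbb I^\star\mathcal A\mathcal P_{\mathbb H}(\mathcal P_{\mathbb H}\mathcal A^\dagger\mathbb I^\star\mathcal A\mathcal P_{\mathbb H})^{-1}$ that Assumption~2 governs, and then track the $\beta$-dependence of every error term, because the irrepresentability slack $\nu$ is being spent against errors of order $1/\beta=\nu/(3-\nu)$; the inequality closes only because of the precise calibration of the constants, and obtaining a \emph{strict} bound (rather than merely $\le\lambda_n$) is what additionally buys uniqueness. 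A secondary point needing care is that $\mathbb H_{\mathcal M}$ is built from the tangent spaces at $\tilde L_Y,\tilde\Theta_{YX}$ rather than at the population quantities, so one must first invoke Corollary~\ref{eqn:Corollary1}, part (3), to place $\mathbb H_{\mathcal M}$ in $U(\omega_Y,\omega_{YX})$ before Assumptions~1--2 may legitimately be applied to it.
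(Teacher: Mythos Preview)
Your proposal is correct and follows essentially the same route as the paper: reduce to verifying the KKT conditions of \eqref{eqn:ConvexRelaxed4_N} at $(\tilde\Theta,\tilde S_Y,\tilde L_Y)$, note the on-subspace parts are supplied by \eqref{eqn:optcond}, and prove the strict dual-feasibility bound $\Phi_{\delta,\gamma}[\mathcal P_{\mathbb H_{\mathcal M}^\perp}\mathcal A^\dagger(\Sigma_n-\tilde\Theta^{-1})]<\lambda_n$ by expanding $\Sigma_n-\tilde\Theta^{-1}$ into $E_n$, $\mathbb I^\star\mathcal A(\Delta)$, $R_{\Sigma^\star}(\Delta)$, applying Assumption~2 to the main $\mathcal P_{\mathbb H_{\mathcal M}^\perp}\mathcal A^\dagger\mathbb I^\star\mathcal A\mathcal P_{\mathbb H_{\mathcal M}}$ piece, and controlling the three remainders via Lemma~\ref{prop:Enbound}, Lemma~\ref{lemma:Remainder}, and Corollary~\ref{eqn:Corollary1}(4). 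One tiny sharpening: the paper uses $\Phi_{\delta,\gamma}[\mathcal P_{\mathbb H_{\mathcal M}^\perp}(\cdot)]\le\Phi_{\delta,\gamma}(\cdot)$ (not $\le 2$) for the complement projection in the final step, though your looser bound still closes the inequality.
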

\begin{proof} 
Note that the bound on $n$ implies that the assumptions of Proposition~\ref{proposition:lambdaBound} are satisfied. Hence, we conclude that the solution of the tangent-space constrained program \eqref{eqn:ConvexRelaxed2_N}  is the same as the global optimum of the variety constrained program \eqref{eqn:NConveProblem_N} . Next, we show that the optimal variables of \eqref{eqn:ConvexRelaxed2_N} remain unchanged once the tangent-space constraints are removed. 
{\noindent}We proceed by proving that the optimum set of variables $(\tilde{\Theta}, \tilde{S}_Y, \tilde{L}_Y)$ satisfy the optimality conditions of \eqref{eqn:ConvexRelaxed4_N} given by:

\begin{eqnarray*}
[\Sigma_n - {\tilde{\Theta}}^{-1}]_{Y}  \in -\lambda_n\delta\partial \|\tilde{S}_Y\|_{\ell_1},  \hspace{.1in} [\Sigma_n - {\tilde{\Theta}}^{-1}]_{Y} \in \lambda_n\partial\|\tilde{L}_Y\|_\star\\[.05in]
[\Sigma_n - {\tilde{\Theta}}^{-1}]_{YX} \in -\lambda_n\partial\|\tilde{\Theta}_{YX}\|_\star, \hspace{.1in}[\Sigma_n - {\tilde{\Theta}}^{-1}]_{X}  = 0
\end{eqnarray*}

{\noindent}Equivalently, we show that $(\tilde{\Theta}, \tilde{S}_Y, \tilde{L}_Y)$ satisfy the following set of equations:

\begin{center}
\begin{enumerate}
\item $\mathcal{P}_{\mathbb{H}_{\mathcal{M}}} \mathcal{A}^{\dagger}(\Sigma_n - \tilde{\Theta}^{-1}) = (-\lambda_n\gamma_{1}\text{sign}({\tilde{S}_Y}), \hspace{.1in} \lambda_nUV' \hspace{.1in}  -\lambda_n\gamma_{2}{\breve{U}}{\breve{V}}',  \hspace{.1in} 0)$
\item $\Phi_{\delta, \gamma} [\mathcal{P}_{\mathbb{H}_\mathcal{M}^{\perp}} \mathcal{A}^{\dagger}(\Sigma_n - \tilde{\Theta}^{-1})] < \lambda_n$
\end{enumerate}
\end{center}

{\noindent}Here, $UDV'$ is the SVD decomposition of $\tilde{L}_Y$ and $\breve{U}\breve{D}\breve{V}'$ is the SVD decomposition of $\tilde{\Theta}_{YX}$. It is clear that the first condition is satisfied since the variables $(\tilde{\Theta}, \tilde{S}_Y, \tilde{L}_Y)$ are optimal with respect to \eqref{eqn:ConvexRelaxed2_N}. To prove that the second condition is met, it suffices to show that:
\begin{eqnarray}
\Phi_{\delta, \gamma}[\mathcal{P}_{\mathbb{H}_{\mathcal{M}}^{\perp}} \mathcal{A}^{\dagger}\mathbb{I}^{\star}\mathcal{A} \mathcal{P}_{\mathbb{H}_{\mathcal{M}}}(\Delta)]  &<& \lambda_n - \Phi_{\delta, \gamma}[\mathcal{P}_{\mathbb{H}_{\mathcal{M}}^{\perp}}\mathcal{A}^{\dagger} E_n] \\ &-& \Phi_{\delta, \gamma}[\mathcal{P}_{\mathbb{H}_{\mathcal{M}}^{\perp}} \mathcal{A}^{\dagger} R_{\Sigma^\star}(\Delta)] - \Phi_{\delta, \gamma}[\mathcal{P}_{\mathbb{H}_{\mathcal{M}}^{\perp}} \mathcal{A}^{\dagger} \mathbb{I}^{\star}\mathcal{A}C_{T_{\mathcal{M}}}] \nonumber
\label{eqn:FinalOpt}
\end{eqnarray}

{\noindent}We first note:
\begin{eqnarray*}
\Phi_{\delta, \gamma}[\mathcal{P}_{\mathbb{H}_{\mathcal{M}}}\mathcal{A}^{\dagger}\mathbb{I}^\star\mathcal{A}\mathcal{P}_{\mathbb{H}_{\mathcal{M}}}(\Delta)]  &\leq& \lambda_n+ 2\Phi_{\delta, \gamma}[\mathcal{A}^{\dagger}R_{\Sigma^\star}(\Delta)] + 2\Phi_{\delta, \gamma}[\mathcal{A}^{\dagger}\mathbb{I}^{\star} \mathcal{A}C_{T_{M}}]\\ &+& 2\Phi_{\delta, \gamma}[\mathcal{A}^{\dagger}E_{n}] \\
&\leq& \lambda_n^{} + \frac{\lambda_n}{\beta} = \frac{(\beta+1)\lambda_n}{\beta}
\end{eqnarray*}

{\noindent}Appealing to the Fisher information Assumption 1 in \eqref{eqn:FirstFisherCond}, we obtain:
\begin{eqnarray*}
\Phi_{\delta, \delta}[\mathcal{P}_{\mathbb{H}_{\mathcal{M}}^{\perp}} \mathcal{A}^{\dagger}\mathbb{I}^{\star}\mathcal{A} \mathcal{P}_{\mathbb{H}_{\mathcal{M}}}(\Delta)] &\leq& \frac{(\beta+1)\lambda_n}{\beta}\Big(1 - \frac{3}{\beta+1}\Big) = \lambda_n - \frac{2\lambda_n}{\beta}  < \lambda_n - \frac{\lambda_n}{2\beta}\\
&\leq& \lambda_n - \Phi_{\delta, \gamma}[\mathcal{A}^{\dagger}R_{\Sigma}(\Delta)] - \Phi_{\delta, \gamma} [\mathcal{A}^{\dagger}\mathbb{I}^{\star}\mathcal{A}C_{T_{\mathcal{M}}}] - \Phi_{\delta, \gamma} [\mathcal{A}^{\dagger}E_{n}] \\
&\leq& \lambda_n - \Phi_{\delta, \gamma}[\mathcal{P}_{\mathbb{H}_{\mathcal{M}}^{\perp}}\mathcal{A}^{\dagger}R_{\Sigma^\star}(\Delta)] -  \Phi_{\delta, \gamma}[\mathcal{P}_{\mathbb{H}_{\mathcal{M}}^{\perp}}\mathcal{A}^{\dagger}\mathbb{I}^{\star} \mathcal{A}C_{T_{\mathcal{M}}}]\\ &-&  \Phi_{\delta, \gamma}[\mathcal{P}_{\mathbb{H}_{\mathcal{M}}^{\perp}}\mathcal{A}^{\dagger}E_n]
\end{eqnarray*}

{\noindent}Here the last inequality holds since $\Phi_{\delta, \gamma}[\mathcal{P}_{\mathbb{H}_{\mathcal{M}}^\perp}( . )] \leq \Phi_{\delta, \gamma}(.)$  \end{proof}

{\noindent}\underline{\textbf{Proof of Theorem 1}}:  Letting $\bar{m} =  \max\{\delta, 1, \gamma\}$, we note that $C'$ in Proposition~\ref{proposition:lambdaBound} can be bounded as follows: $C' \leq 2\psi\text{deg}(S_Y^\star)\bar{m}$. Further, one can check that the constants in Proposition~\ref{proposition:lambdaBound} are related to the constants in Theorem~\ref{theorem:main} as follows: $C_{\sigma_{YX}}' \leq mC_{\sigma_{YX}}$, and subsequently $C_{samp}' \leq m\bar{m}^2\mathrm{deg}(S_Y^\star)C_{samp}$. Under the assumptions of Theorem~\ref{theorem:main}, we can appeal to Corollary~\ref{eqn:Corollary1}, Proposition~\ref{proposition:lambdaBound}, and Proposition~\ref{proposition:ConvertFinal} to conclude that with probability greater than $1 - 2\text{exp}\Big\{-\frac{n\lambda_n^2}{4608\beta^2m^2\psi^2}\Big\}$, the variables $(\hat{\Theta}, \hat{S}_Y, \hat{L}_Y)$ are structurally correct estimates of $({\Theta}^\star, {S}_Y^\star, {L}_Y^\star)$; that is: $\rm{sign}(\hat{S}_Y) = \rm{sign}(\hat{S}_Y^\star), \rm{rank}(\hat{L}_Y) = \rm{rank}(L_Y^\star), \rm{rank}(\hat{\Theta}_{YX}) = \rm{rank}(\Theta_{YX}^\star)$ and the estimates $(\hat{\Theta}, \hat{S}_Y, \hat{L}_Y)$ are ``close" to population quantities in appropriate norms: $ \Phi_{\delta, \gamma}[\hat{S}_Y - S_Y^\star, \hat{L}_Y - L_Y^\star, \hat{\Theta}_{YX} - \Theta_{YX}^\star, \hat{\Theta}_{X} - \Theta_{X}^\star] \leq 2C_1\lambda_n$.

\subsection{Choices of Parameters $\gamma, \delta$ in the estimator \eqref{eqn:originalproblem_NS}}
\label{sec:Parameterchoices}
In this section, we give conditions on the population Fisher information $\mathbb{I}^\star$ such that Assumptions 1 and 2 in \eqref{eqn:FirstFisherCond} and \eqref{eqn:SecondFisherCond} are satisfied by a non-empty set of values of $\delta, \gamma$. In the subsequent discussion in this section, we employ the following notation to denote restrictions of a subspace $\mathbb{H} = H_1 \times H_2 \times H_3 \times H_4  \subset \mathbb{S}^{ p} \times \mathbb{S}^{p} \times \R^{p \times q} \times \mathbb{S}^{q}$ (here $H_1,H_2,H_3,H_4$ are subspaces in $\mathbb{S}^{ p},\mathbb{S}^p,\R^{p \times q},\mathbb{S}^q$, respectively) to its individual components. The restriction to the first component of $\mathbb{H}$ is given by $\mathbb{H}[1] = H_1 \times \{0\} \times \{0\} \times \{0\} \subset \mathbb{S}^{ p} \times \mathbb{S}^{p} \times \R^{p \times q} \times \mathbb{S}^{q}$.  The restrictions $\mathbb{H}[2],\mathbb{H}[3],\mathbb{H}[4]$ to the other components of $\mathbb{H}$ are defined in an analogous manner.
 
As our first quantity, we consider the minimum gain of $\mathbb{I}^\star$ restricted to each of the tangent spaces $\Omega(S_Y^\star)$, and $T_Y^\star$, $T_{YX}^\star$ separately:
\begin{eqnarray}
\eta_1(\mathbb{H}^\star; \omega_Y, \omega_{YX}) = \min_{\substack{\mathbb{H} \in U(\omega_{Y},~ \omega_{YX})\\ i = {1,2,3,4}}} \min_{\substack{M \in \mathbb{H}[i]\\ \|M\|_{\Phi_{1,1}} = 1}} \|\mathcal{P}_{\mathbb{H}[i]}\A^\dagger\mathbb{I}^\star\A\mathcal{P}_{\mathbb{H}[i]}(M)\|_{\Phi_{1,1}}.
\label{eqn:eta1}
\end{eqnarray}
Here the set $U(\omega_{Y}, \omega_{YX})$ is defined in \eqref{eqn:Udef}. Recall that this set denotes the distortions around the population tangent spaces $T^\star_Y,T^\star_{YX}$). Notice also that there is no appearance of $\delta,\gamma$ in the norm $\Phi$.  The quantity $\eta_1(\mathbb{H}^\star; \omega_Y, \omega_{YX})$ being large ensures that $\mathbb{I}^\star$ is well-conditioned when restricted to each of the tangent spaces $\Omega(S^\star_Y), T'_Y, T'_{YX}$ separately.  The second quantity we consider is the maximal inner-product between elements in each of the tangent spaces $\Omega(S_Y^\star),T_Y^\star,T_{YX}^\star$ and those in their respective orthogonal complements (again, in the metric induced by $\mathbb{I}^\star$):
\begin{eqnarray}
\eta_2(\mathbb{H}^\star; \omega_Y, \omega_{YX}) = \max_{\substack{\mathbb{H} \in U(\omega_{Y},~ \omega_{YX})\\ i = {1,2,3,4}}} \max_{\substack{M \in \A(\mathbb{H}[i])\\ \|M\|_2 = 1}} \|\mathcal{P}_{\A(\mathbb{H}[i])^\perp}\mathbb{I}^\star\mathcal{P}_{\A(\mathbb{H}[i])}(M)\|_2
\end{eqnarray}
 
One additional aspect of Assumptions 1 and 2 (in \eqref{eqn:FirstFisherCond} and \eqref{eqn:SecondFisherCond}) that is not addressed via the quantities $\eta_1(\mathbb{H}^\star; \omega_Y, \omega_{YX}),\eta_2(\mathbb{H}^\star; \omega_Y, \omega_{YX})$ is the gain of the population Fisher information $\mathbb{I}^\star$ restricted to $\Omega_Y^\star \oplus T_Y^\star$.  Controlling this gain ensures that the tangent spaces $\Omega^\star_Y$ and $T^\star_Y$ have a transverse intersection in the metric induced by $\mathbb{I}^\star$; as discussed in previous work by \citet{Chand2012}, such a property is critical to ensure the accurate estimation of the latent-variable graphical model specifying the conditional distribution of $Y | f(X)$. Following the approach adopted in that work, we control the gain of $\mathbb{I}^\star$ restricted to $\Omega_Y^\star \oplus T_Y^\star$ via conditions involving three quantities. The first quantity $\rm{deg}(S_Y^\star)$ makes an appearance in Theorem~\ref{theorem:main} -- it is the maximum number of nonzeros per row/column of $S_{Y}^\star$, and it denotes the degree of the graphical model structure underlying the conditional distribution of $Y | f(X), \zeta$.  The degree of the sparse component $S^\star_Y$ being small ensures that the graphical model underlying $Y | f(X), \zeta$ is indeed a sparsely connected structure.  Bounds on the degree of a population graphical model play an important role in the literature in results on consistent graphical model selection \citep{MeiB2006,RavWRY2008}.  The second quantity is an \emph{incoherence} parameter, which played an important role in the literature on low-rank matrix completion \citep{CanR2009}.  Specifically, for a matrix $N \in \Sp^{p_1}$, the incoherence of the row-space / column-space of $N$ is given by:
\begin{equation}
\mathrm{inc}(N) \triangleq \max_{1 \leq i \leq p_1, 1 \leq j \leq p_1} \max\left\{ \|\mathcal{P}_{\mathrm{column}\text{-}\mathrm{space}(N)}(e_i)\|_{\ell_2}, \|\mathcal{P}_{\mathrm{row}\text{-}\mathrm{space}(N)}(e_j)\|_{\ell_2} \right\},
\label{eqref:incoherencedef}
\end{equation}
where $\mathcal{P}$ denotes the projection operation and $e_i \in \mathbb{R}^{p_1}$ denotes the $i$'th standard basis vector. The incoherence parameter of the low-rank matrix $L_Y^\star$ being small ensures that the latent variables $\zeta$ affect most of the observed responses $Y$.  As developed by \citet{Chand2012}, the quantities $\rm{deg}(S_Y^\star)$ and $\rm{inc}(L_Y^\star)$ being small simultaneously ensures that the tangent spaces $\Omega_Y^\star \triangleq \Omega(S_Y^\star)$ and $T_Y^\star \triangleq T(L_Y^\star)$ are sufficiently transverse in the standard Euclidean inner-product.  To further ensure that the minimum gain of $\mathbb{I}^\star$ restricted to $\Omega_Y^\star \oplus T_Y^\star$ is bounded below (i.e., to certify transversality of $\Omega_Y^\star$ and $T_Y^\star$ in the metric induced by $\mathbb{I}^\star$), \citet{Chand2012} introduce the following quantity for $\mathbb{W} = \Sp^p \times \{0\} \times \{0\} \times \{0\} \subset \Sp^p \times \Sp^p \times \R^{p \times q} \times \Sp^q$:
\begin{eqnarray}
\eta_3(\Omega_Y^\star, T_Y^\star; \omega_Y) = \max\Bigg\{ \max_{\rho(T_Y', T_Y^\star) \leq \omega_Y} \max_{\substack{M \in T_Y' \\ \|M\|_{\ell_\infty} = 1}}  \|\mathcal{P}_{\A(\mathbb{W})}\mathbb{I}^\star M\|_{\ell_\infty}, \max_{\substack{M \in \Omega_Y^\star \\ \|M\|_{2} = 1}}  \|\mathcal{P}_{\A(\mathbb{W})}\mathbb{I}^\star M\|_{2} \Bigg\} \nonumber\\
\label{eqn:eta_3}
\end{eqnarray}
The reason for the statement of this definition in terms of the $\ell_\infty$ and spectral norms is that these are the dual norms of the regularizers employed in \eqref{eqn:originalproblem_NS}(recall the discussion in Section~\ref{section:Fishercond}). As shown by \citet{Chand2012} and as described in the following proposition, suitably controlling the quantities $\rm{deg}(S_Y^\star), \rm{inc}(L_Y^\star), \eta_3(\Omega_Y^\star, T_Y^\star; \omega_Y)$ leads to lower bounds on the minimum gain of $\mathbb{I}^\star$ restricted to $\Omega_Y^\star \oplus T_Y^\star$, which enables the accurate estimation of the latent-variable graphical model underlying $Y | f(X)$. \\
 
In the following proposition, we describe a set of conditions on the quantities $\eta_1(\mathbb{H}^\star; [\omega_Y, \omega_{YX}])$, $\eta_2(\mathbb{H}^\star;  [\omega_Y, \omega_{YX}])$, $\eta_3(\Omega_Y^\star, T_Y^\star; \omega_Y)$, $\rm{deg}(S_Y^\star)$, and $\rm{inc}(L_Y^\star)$, which lead to Assumptions 1 and 2 (14) and (15) (main theorem) being satisfied for $(\delta,\gamma)$ inside a polyhedral set.  We explicitly characterize this set and show that it is non-empty. For notational convenience, we denote $\eta_1^\star \triangleq \eta_1^\star(\mathbb{H}^\star; \omega_Y, \omega_{YX})$, $\eta_2^\star \triangleq \eta_2^\star(\mathbb{H}^\star; \omega_Y, \omega_{YX})$, and $\eta_3^\star \triangleq \eta_3^\star(\Omega_Y^\star, T_Y^\star; \omega_Y)$.
 
\begin{proposition}
\label{proposition:1}
Fix $\alpha > 0, \nu \in (0,1/3), \omega_Y > 0, \omega_{YX} > 0$. Let $\beta \triangleq \tfrac{3-\nu}{\nu}$.  Suppose that $(i)~\eta_1^\star \geq 2\alpha$, $(ii)~\eta_2^\star \leq \min \Big\{ \alpha(1-\frac{3}{1+\beta}), \sqrt{\frac{\alpha}{\beta}}\frac{[2~\rm{inc}(L_Y^\star) + \omega_{Y}]}{4(1-\omega_{Y})}, \frac{\alpha}{\beta\sqrt{2~\rm{deg}(S_{Y}^\star)}}, \frac{1}{2}(\frac{\alpha}{\beta})^{3/2}\Big\}$, $(iii)~\eta_3^\star \leq \sqrt{\frac{\alpha}{\beta}}$, and $(iv)~\frac{2~\text{inc}(L_Y^\star) + \omega_Y}{1-\omega_Y}\rm{deg}(S_Y^\star) \leq \frac{8\alpha}{\beta}$. Then Assumptions 1 and 2 in \eqref{eqn:FirstFisherCond} and \eqref{eqn:SecondFisherCond} are satisfied for all $(\delta,\gamma)$ in the following non-empty polyhedral set:
\begin{eqnarray*}
V(\alpha,\nu,\omega_Y,\omega_{YX}) = \Bigg\{ (\delta, \gamma)~ \Big{|} ~ \frac{[2~\rm{inc}(L_Y^\star) + \omega_{Y}]}{4(1-\omega_{Y})}\sqrt{\frac{\beta}{\alpha}} &\leq& \delta \leq \frac{2}
{\rm{deg}(S_Y^\star)}\sqrt{\frac{\alpha}{\beta}}; \\
\max\Big\{1, \eta_2^\star~\rm{deg}(S_Y^\star)\delta\frac{2\beta}{\alpha}\Big\} &\leq& \gamma \leq \frac{\min\{\delta,1\}}{\eta_2^\star} \frac{\alpha}{\beta}\Bigg\}.
\end{eqnarray*}
\end{proposition}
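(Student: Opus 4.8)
The plan is to deduce Assumptions~1 and~2 from the block-separable quantities $\eta_1^\star,\eta_2^\star,\eta_3^\star$ by a perturbation (Neumann-series) argument, mirroring the strategy of \citet{Chand2012} while carrying along the two extra coordinate blocks ($T_{YX}'$ and $\Sp^q$) and the two scaling parameters $\delta,\gamma$. Fix $\mathbb{H}' = \Omega(S_Y^\star)\times T_Y'\times T_{YX}'\times\Sp^q \in U(\omega_Y,\omega_{YX})$ and let $\mathcal{M} := \mathcal{P}_{\mathbb{H}'}\A^\dagger\mathbb{I}^\star\A\mathcal{P}_{\mathbb{H}'}$. Group the four coordinate slots into $G_1 = \{1,2\}$ (the pieces $\Omega(S_Y^\star), T_Y'$, which $\A$ carries into the $\Theta_Y$-corner of $\Sp^{p+q}$), $G_2 = \{3\}$ ($T_{YX}'$, carried into the $\Theta_{YX}$-corner) and $G_3 = \{4\}$ ($\Sp^q$, carried into the $\Theta_X$-corner), and write $\mathcal{M} = \mathcal{M}_0 + \mathcal{E}$ with $\mathcal{M}_0$ block-diagonal in $G_1,G_2,G_3$ and $\mathcal{E}$ the inter-group remainder. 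Two structural facts drive the argument: the three corners $\Theta_Y,\Theta_{YX},\Theta_X$ are pairwise orthogonal in $\Sp^{p+q}$, so every block of $\mathcal{E}$ maps $\A(\mathbb{H}'[i])$ into $\A(\mathbb{H}'[i])^\perp$ and is hence controlled in spectral norm by $\eta_2^\star$; and $\A$ restricted to any single slot is norm-preserving into one corner, so the $G_1$-summand of $\mathcal{M}_0$ is precisely the operator analyzed by \citet{Chand2012} (the Fisher information $[\Sigma^\star]_{YY}\otimes[\Sigma^\star]_{YY}$ restricted to $\Omega(S_Y^\star)\oplus T_Y'$), while the $G_2$- and $G_3$-summands are single-corner restrictions of $\A^\dagger\mathbb{I}^\star\A$ whose minimum gains are by definition the relevant terms of $\eta_1^\star$.

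First I would bound $\|\mathcal{M}_0^{-1}\|$ in the norm $\Phi_{\delta,\gamma}$. For the $G_2$- and $G_3$-summands the weight ($1/\gamma$ on the $\Theta_{YX}$ slot, $1$ on the $\Theta_X$ slot) cancels between input and output, so the inverse has norm at most $1/\eta_1^\star \le \tfrac{1}{2\alpha}$ by hypothesis~(i). For the $G_1$-summand I would invoke the machinery of \citet{Chand2012}: conditions~(i) and~(iii) control the conditioning of $\mathbb{I}^\star$ on $\Omega(S_Y^\star)$ and on $T_Y'$ separately and the cross-gain between them, while condition~(iv) --- exactly the requirement that $\mathrm{deg}(S_Y^\star)\cdot\tfrac{2\,\mathrm{inc}(L_Y^\star)+\omega_Y}{1-\omega_Y}$ be small --- guarantees that $\Omega(S_Y^\star)$ and $T_Y'$ intersect transversely in the metric induced by $\mathbb{I}^\star$; together with $\delta$ lying in the prescribed window (the interval on which the $\delta$-weighting of the sparse/$\ell_\infty$ slot balances the weighting of the low-rank/spectral slot), these give an inverse of $\Phi_{\delta,1}$-norm calibrated to be at most $\tfrac{1}{2\alpha}$, uniformly over the $\rho$-distortions of $T_Y^\star$ allowed in $U(\omega_Y,\omega_{YX})$. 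Hence $\|\mathcal{M}_0^{-1}\|_{\Phi_{\delta,\gamma}} \le \tfrac{1}{2\alpha}$.

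Next I would estimate $\|\mathcal{M}_0^{-1}\mathcal{E}\|_{\Phi_{\delta,\gamma}}$. Each block of $\mathcal{E}$ goes between two distinct corners, so its spectral-to-spectral norm is at most $\eta_2^\star$; rescaling slot $i$ by its weight ($\delta$ on $S_Y$, $\gamma$ on $\Theta_{YX}$, $1$ otherwise) multiplies the $(i,j)$ block by the ratio of weights, and converting $\|\cdot\|_2 \leftrightarrow \|\cdot\|_{\ell_\infty}$ on the sparse slot costs a factor $\mathrm{deg}(S_Y^\star)$. Collecting terms, the surviving quantities are of the form $\eta_2^\star/\delta$, $\eta_2^\star\,\mathrm{deg}(S_Y^\star)\,\delta$, $\eta_2^\star\,\mathrm{deg}(S_Y^\star)\,\delta/\gamma$, $\eta_2^\star\,\gamma/\delta$, $\eta_2^\star\gamma$, $\eta_2^\star/\gamma$ and $\eta_2^\star$; the four bounds on $\eta_2^\star$ in~(ii), combined with $\delta$ in its window and $1 \le \gamma \le \min\{\delta,1\}\,\tfrac{\alpha}{\beta\eta_2^\star}$ together with $\gamma \ge \tfrac{2\beta}{\alpha}\eta_2^\star\,\mathrm{deg}(S_Y^\star)\,\delta$, render each at most a small multiple of $\alpha/\beta$, so $\|\mathcal{M}_0^{-1}\mathcal{E}\|_{\Phi_{\delta,\gamma}} \le \tfrac{1}{2}$. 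A Neumann series then gives $\|\mathcal{M}^{-1}\|_{\Phi_{\delta,\gamma}} \le 2\|\mathcal{M}_0^{-1}\|_{\Phi_{\delta,\gamma}} \le \tfrac{1}{\alpha}$, i.e.\ $\chi(\mathbb{H}',\Phi_{\delta,\gamma}) \ge \alpha$ (Assumption~1). For Assumption~2 I would bound $\|\mathcal{P}_{(\mathbb{H}')^\perp}\A^\dagger\mathbb{I}^\star\A\mathcal{P}_{\mathbb{H}'}\|_{\Phi_{\delta,\gamma}}$ by the same two-part split --- cross-corner contributions at most $\eta_2^\star$ up to the same rescalings, and the contribution landing in the $\Theta_Y$-corner but orthogonal to $\Omega(S_Y^\star)\oplus T_Y'$ handled by the transversality estimates of \citet{Chand2012} built from $\eta_3^\star,\mathrm{deg}(S_Y^\star),\mathrm{inc}(L_Y^\star)$ --- and compose with the bound on $\|\mathcal{M}^{-1}\|$; the term $\eta_2^\star \le \alpha\bigl(1-\tfrac{3}{1+\beta}\bigr) = \alpha(1-\nu)$ in~(ii) (using $1+\beta = 3/\nu$) then yields $\varphi(\mathbb{H}',\Phi_{\delta,\gamma}) \le 1-\nu$. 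Since every bound is expressed through $\eta_1^\star,\eta_2^\star,\eta_3^\star$, which are already the extrema over $U(\omega_Y,\omega_{YX})$, both conclusions hold for every $\mathbb{H}' \in U(\omega_Y,\omega_{YX})$.

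Finally, for non-emptiness of $V(\alpha,\nu,\omega_Y,\omega_{YX})$: the $\delta$-interval $\bigl[\tfrac{2\,\mathrm{inc}(L_Y^\star)+\omega_Y}{4(1-\omega_Y)}\sqrt{\beta/\alpha},\ \tfrac{2}{\mathrm{deg}(S_Y^\star)}\sqrt{\alpha/\beta}\bigr]$ is non-degenerate exactly when $\tfrac{2\,\mathrm{inc}(L_Y^\star)+\omega_Y}{1-\omega_Y}\,\mathrm{deg}(S_Y^\star) \le \tfrac{8\alpha}{\beta}$, which is~(iv); and for any admissible $\delta$ the $\gamma$-interval $\bigl[\max\{1,\ \tfrac{2\beta}{\alpha}\eta_2^\star\,\mathrm{deg}(S_Y^\star)\,\delta\},\ \min\{\delta,1\}\,\tfrac{\alpha}{\beta\eta_2^\star}\bigr]$ is non-degenerate because the bounds $\eta_2^\star \le \tfrac{\alpha}{\beta\sqrt{2\,\mathrm{deg}(S_Y^\star)}}$, $\eta_2^\star \le \sqrt{\alpha/\beta}\,\tfrac{2\,\mathrm{inc}(L_Y^\star)+\omega_Y}{4(1-\omega_Y)}$ and $\eta_2^\star \le \tfrac{1}{2}(\alpha/\beta)^{3/2}$ of~(ii) force both $\tfrac{2\beta}{\alpha}\eta_2^\star\,\mathrm{deg}(S_Y^\star)\,\delta \le \min\{\delta,1\}\,\tfrac{\alpha}{\beta\eta_2^\star}$ and $1 \le \min\{\delta,1\}\,\tfrac{\alpha}{\beta\eta_2^\star}$. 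The step I expect to be the main obstacle is the analysis of the $G_1$-summand of $\mathcal{M}_0$ and of its orthogonal leakage --- the sparse-plus-low-rank piece embedded in $\Sp^{p+q}$ --- together with arranging the two-parameter bookkeeping of the $\mathcal{E}$-estimate so that conditions~(i)--(iv) and membership in $V$ come out precisely as what is required; this is where the coupling of $\delta$ and $\gamma$, each a multiplicative weight on a distinct sub-block, makes the admissible region genuinely two-dimensional and the constants delicate.
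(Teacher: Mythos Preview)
Your overall plan is sound and would work, but it differs in architecture from the paper's own argument. You group the four coordinate slots into three corners of $\Sp^{p+q}$, treat the $\Theta_Y$-corner (slots $1$ and $2$) as a single block whose conditioning you outsource to the sparse-plus-low-rank machinery of \citet{Chand2012}, and then run a Neumann series on the block-diagonal remainder $\mathcal{M}_0$ perturbed by the inter-corner coupling $\mathcal{E}$. The paper instead keeps all four slots separate and proceeds by a direct case analysis: it fixes an element of $\mathbb{H}'$ with $\Phi_{\delta,\gamma}=1$, identifies the slot achieving the maximum, and for that slot bounds the projected output from below as (diagonal term $\geq \eta_1^\star \geq 2\alpha$) minus (cross-terms). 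The cross-term between the $S_Y$ and $L_Y$ slots is handled explicitly via $\eta_3^\star$ together with the elementary quantities $\mu(\Omega(S_Y^\star))\leq \mathrm{deg}(S_Y^\star)$ and $\xi(T_Y')\leq \tfrac{2\,\mathrm{inc}(L_Y^\star)+\omega_Y}{1-\omega_Y}$, rather than by appealing to \citet{Chand2012} as a black box; the remaining cross-terms use $\eta_2^\star$ with the appropriate weight ratios. Each case yields $\chi \geq 2\alpha - \tfrac{8\alpha}{\beta}\geq \alpha$ directly, with no operator inverse or series in sight, and an analogous decoupling gives $\varphi \leq 1-\nu$. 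Your approach buys modularity (the slot-$1$/slot-$2$ interaction is someone else's theorem) at the cost of a layer of indirection; the paper's approach is more elementary and makes the role of each piece of conditions~(i)--(iv) and each edge of the polytope $V$ transparent in a single chain of inequalities per slot. The step you flagged as the main obstacle --- matching the $G_1$-block bound to the \citet{Chand2012} hypotheses --- is precisely what the paper avoids by unpacking that argument inline.
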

{\noindent}
Conditions analogous to $(iii),(iv)$ appear in previous work on latent-variable graphical model selection \citep{Chand2012}, and in our context they are useful for ensuring structurally correct estimates of the latent-variable graphical model corresponding to the conditional distribution of $Y | f(X)$. Conditions $(i), (ii)$ are relevant for simultaneously obtaining structurally correct estimates of the smallest dimension reduction $f(X)$ and of the latent-variable graphical model specifying $Y | f(X)$ via the convex program \eqref{eqn:originalproblem_NS}. See Definition~\ref{definition:1} for more details.

\begin{proof}
The proof of this proposition relies on two quantities. Given a matrix $M \in \Sp^{p_1}$, we defined the first quantity $\mu(\Omega(M))$ in \eqref{eqn:mu} with respect to the tangent space $\Omega(M)$. Additionally we define the quantity $\xi(T(M))$ with respect to the the tangent space $T(M)$:
$$
\xi(T(M)) \triangleq \max_{N \in T(M), \|N\|_{2} = 1} \|N\|_{\infty}.  \hspace{.1in}
$$
For extensive discussion regarding the properties of these quantities, we refer the reader to \citep{Chand2012}. Here, we highlight a few important facts. In particular, one can check that $\xi(T(M)) \in [\text{inc}(M),  2\text{inc}(M)]$  and $\mu(\Omega(M)) \leq \rm{deg}(M)$ (recall that $\text{inc}(M)$ measures the incoherence of $M$ and $\rm{deg}(M)$ denotes the maximum number of nonzero elements in any column or row of $M$). Furthermore, given two linear subspaces $T_1$ and $T_2$, the quantity $\rho(T_1,T_2)$ that measures the distortion of tangent spaces (see the definition in \eqref{eqn:Udef}) allows us to bound the variation in $\xi(T_2)$ as follows \citep{Chand2012}:
\begin{eqnarray*}
\xi(T_2) \leq \frac{1}{1-\rho(T_1, T_2)}[\xi(T_1) + \rho(T_1, T_2)]
\end{eqnarray*}
Returning to the proof, we show that for any $(\delta, \gamma)$ inside the polyhedron set $V(\alpha, \nu, \omega_Y, \omega_{YX})$, Fisher Assumption 1 and 2 in \eqref{eqn:FirstFisherCond} and \eqref{eqn:SecondFisherCond} are satisfied. First, using conditions $(ii)$ and $(iv)$, one can check that the polyhedron set  $V(\alpha, \nu, \omega_Y, \omega_{YX})$ is non-empty. Now, let $\mathbb{H} = \Omega_Y^\star \times T_Y' \times T_{YX}' \times \Sp^q$ be any subspace inside $U(\omega_{Y}, \omega_{YX})$ and the tradeoff parameters be chosen so that $(\delta, \gamma) \in V(\alpha, \nu, \omega_Y, \omega_{YX})$. Further let $\mathbb{Z} = \Sp^p \times \Sp^p \times \R^{p \times q} \times \Sp^q$, and let $(S_Y, {L}_Y, \Theta_{YX}, \Theta_{X}) \in \mathbb{H}$ with $\|S_Y\|_{\ell_\infty} \leq \delta$,  $\|{L_Y} \|_{2} \leq 1$, $\|\Theta_{YX} \|_{2} \leq \gamma$, $\|\Theta_{X} \|_{2} \leq 1$. Suppose equality holds in at least one of these set of inequalities so that $\Phi_{\delta, \gamma}(S_Y, L_Y, \Theta_{YX}, \Theta_{X}) = 1$. Then, at least one of the following cases is active:
\begin{enumerate}
\item Suppose $\|S_Y\|_{\ell_{\infty}} = \delta$. Then using conditions $(i) - (iii)$ of Proposition~\ref{proposition:1}, we have:
\begin{eqnarray*}
\frac{1}{\delta}\|\mathcal{P}_{\mathbb{H}[1]}\mathcal{A}^{\dagger}\mathbb{I}^{\star}\mathcal{A}(S_Y, {L}_Y, \Theta_{YX}, \Theta_{X})]\|_{\Phi_{1,1}} &\geq& \frac{1}{\delta}\Big[\|\mathcal{P}_{\mathbb{H}[1]}\mathcal{A}^{\dagger}\mathbb{I}^{\star}\mathcal{A}(S_Y, 0, 0, 0)\|_{\Phi_{1,1}}\\ &-& \|\mathcal{P}_{\mathcal{A}(\mathbb{H}[1])}\mathbb{I}^{\star}\mathcal{A}(0, {L}_Y, 0, 0)\|_{\ell_{\infty}} \\ &-& \|\mathcal{P}_{\mathcal{A}(\mathbb{H}[1])}\mathbb{I}^{\star}\mathcal{A}(0, 0, \Theta_{YX}, \Theta_X)\|_{\ell_{\infty}}\Big] \\  &\geq& \frac{1}{\delta}\Big[\|\mathcal{P}_{\mathbb{H}[1]}\mathcal{A}^{\dagger}\mathbb{I}^{\star}\mathcal{A}(S_Y, 0, 0, 0)\|_{\Phi_{1,1}}\\ &-&  \|\mathcal{P}_{\mathcal{A}(\mathbb{Z}[1])}\mathbb{I}^{\star}\mathcal{A}(0, {L}_Y, 0, 0)\|_{\ell_{\infty}} \\ &-& \|\mathcal{P}_{\mathcal{A}(\mathbb{Z}[1])}\mathbb{I}^{\star}\mathcal{A}(0, 0, \Theta_{YX}, \Theta_X)\|_{\ell_{\infty}}\Big] \\ &\geq& 2\alpha-\frac{\eta_3^\star\xi(T_{Y}')}{\delta} - \frac{2\eta_2^\star{\max\{\gamma,1\}}}{\delta}\\
&\geq& 2\alpha - \frac{(2~\rm{inc}(L_Y^\star) + \omega_Y)\eta_3^\star}{(1-\omega_Y)\delta} - \frac{2{\max\{\gamma,1\}}\eta_2^\star}{\delta} \\
&\geq& 2\alpha - \frac{4\alpha}{\beta} - \frac{2\alpha}{\beta} \geq 2\alpha - \frac{8\alpha}{\beta}
\end{eqnarray*}

\item Suppose $\|L_Y\|_2 = 1$. Then using conditions $(i) - (iii)$ of Proposition~\ref{proposition:1}, we have:
\begin{eqnarray*}
\|\mathcal{P}_{\mathbb{H}[2]}\mathcal{A}^{\dagger}\mathbb{I}^{\star}\mathcal{A}(S_Y, {L}_Y, \Theta_{YX}, \Theta_{X})\|_{\Phi_{1,1}} &\geq& \|\mathcal{P}_{\mathbb{H}[2]}\mathcal{A}^{\dagger}\mathbb{I}^{\star}\mathcal{A}(0, {L}_Y, 0, 0)\|_{\Phi_{1,1}}\\ &-& \|\mathcal{P}_{\mathcal{A}(\mathbb{H}[2])}\mathbb{I}^\star\mathcal{A}(S_Y, 0, 0, 0)\|_{2} \\ &-& \|\mathcal{P}_{\mathcal{A}(\mathbb{H}[2])}\mathbb{I}^\star\mathcal{A}(0, 0, \Theta_{YX}, \Theta_X)\|_{2}\\ &\geq&\|\mathcal{P}_{\mathbb{H}[2]}\mathcal{A}^{\dagger}\mathbb{I}^{\star}\mathcal{A}(0, {L}_Y, 0, 0)\|_{\Phi_{1,1}}\\ &-& 2\|\mathcal{P}_{\mathcal{A}(\mathbb{Z}[2])}\mathbb{I}^\star\mathcal{A}(S_Y, 0, 0, 0)\|_{2} \\ &-& 2\|\mathcal{P}_{\mathcal{A}(\mathbb{Z}[2])}\mathbb{I}^\star\mathcal{A}(0, 0, \Theta_{YX}, \Theta_X)\|_{2}\\
&\geq&  2{\alpha_{}} - 2\eta_3^\star\mu(\Omega_Y^\star)\delta - 4\eta_2^\star\max\{\gamma,1\} \\
&\geq&  2{\alpha_{}} - 2\eta_3^\star\rm{deg}(S_Y^\star)\delta - 4\eta_2^\star\max\{\gamma,1\} \\
&\geq& 2\alpha - \frac{4\alpha}{\beta} - \frac{4\alpha}{\beta} \geq 2\alpha - \frac{8\alpha}{\beta}
\end{eqnarray*}

\item Suppose $\|\Theta_{YX}\|_2 = \gamma$. Then using conditions $(i)$ and $(ii)$ of Proposition~\ref{proposition:1}, we have:
\begin{eqnarray*}
\frac{1}{\gamma}\|\mathcal{P}_{\mathbb{H}[3]}\mathcal{A}^{\dagger}\mathbb{I}^{\star}\mathcal{A}(S_Y, {L}_Y, \Theta_{YX}, \Theta_{X})]\|_{\Phi_{1,1}} &\geq& \frac{1}{\gamma}\Big[\|\mathcal{P}_{\mathbb{H}[3]}\mathcal{A}^{\dagger}\mathbb{I}^{\star}\mathcal{A}(0, 0, \Theta_{YX}, 0)]\|_{\Phi_{1,1}}\\ &-& \|\mathcal{P}_{\A(\mathbb{H}[3])}\mathbb{I}^{\star}\mathcal{A}(S_Y, L_Y, 0, \Theta_{YX})]\|_{2} \\ &\geq& \frac{1}{\gamma}\Big[\|\mathcal{P}_{\mathbb{H}[3]}\mathcal{A}^{\dagger}\mathbb{I}^{\star}\mathcal{A}(0, 0, \Theta_{YX}, 0)]\|_{\Phi_{1,1}}\\ &-& 2\|\mathcal{P}_{\A(\mathbb{Z}[3])}\mathbb{I}^{\star}\mathcal{A}(S_Y, L_Y, 0, \Theta_{YX})]\|_{2} \Big]\\
&\geq&  2{\alpha_{}} - \frac{2\eta_2^\star\mu(\Omega_Y^\star)\delta}{\gamma} - \frac{4\eta_2^\star}{\gamma} \\
&\geq& 2{\alpha_{}} - \frac{2\eta_2^\star\rm{deg}(S_Y^\star)\delta}{\gamma} - {4\eta_2^\star}\\
&\geq& 2{\alpha_{}} - \frac{\alpha}{\beta} -  \frac{\alpha}{\beta} \geq 2\alpha - \frac{8\alpha}{\beta}
\end{eqnarray*}
\item Suppose $\|\Theta_{X}\|_2 = 1$. Then using conditions $(i)$ and $(ii)$ of Proposition~\ref{proposition:1}, we have:
\begin{eqnarray*}
\|\mathcal{P}_{\mathbb{H}[4]}\mathcal{A}^{\dagger}\mathbb{I}^{\star}\mathcal{A}(S_Y, {L}_Y, \Theta_{YX}, \Theta_{X})]\|_{\Phi_{1,1}} &\geq& \|\mathcal{P}_{\mathbb{H}[4]}\mathcal{A}^{\dagger}\mathbb{I}^{\star}\mathcal{A}(0, 0, 0, \Theta_{X})\|_{\Phi_{1,1}}\\ &-& \|\mathcal{P}_{\A(\mathbb{H}[4])}\mathbb{I}^{\star}\mathcal{A}(S_Y, L_Y, \Theta_{YX}, 0)\|_{2}  \\ &\geq& 2\alpha_{}-\eta_2^\star\mu(\Omega_Y^\star)\delta - 2\eta_2^\star\max\{\gamma,1\} \\ &\geq& 2\alpha_{}-\eta_2^\star\rm{deg}(S_Y^\star)\delta - 2\eta_2^\star\gamma \\
&\geq& 2\alpha - {\frac{4\alpha}{\beta}} - \frac{2\alpha}{\beta} \geq 2\alpha - \frac{8\alpha}{\beta}\\
\end{eqnarray*}
\end{enumerate}
From these results, we conclude that $\Phi_{\delta, \gamma}[\mathcal{P}_{\mathbb{H}}\A^{\dagger}\mathbb{I}^\star\A(S_Y, L_Y, \Theta_{YX}, \Theta_X)] \geq 2\alpha - \frac{8\alpha}{\beta}$. Further, we can bound the quantity $\chi(\mathbb{H}, \|.\|_{\Phi_{\delta, \gamma}})$ in \eqref{eqn:Chieq} as follows
\begin{eqnarray}
\chi(\mathbb{H}, \|.\|_{\Phi_{\delta, \gamma}}) \geq 2\alpha - \frac{8\alpha}{\beta} \geq \alpha
\label{eqn:Re}
\end{eqnarray}
Using a similar decoupling technique, one can show:
\begin{eqnarray*}
\Phi_{\delta,\gamma} \Big[ \mathcal{P}_{\mathbb{H}^{\perp}} [\mathcal{A}^{\dagger}\mathbb{I}^{\star}\mathcal{A}(S_Y, {L}_Y, \Theta_{YX}, \Theta_{X})] \Big] \leq \eta_2^\star + \frac{8\alpha}{\beta} \leq \alpha(1 - \frac{3}{1+\beta}) + \frac{8\alpha}{\beta}
\end{eqnarray*}

{\noindent}Using this bound and the bound on $\chi(\mathbb{H}, \|.\|_{\Phi_{\delta, \gamma}})$, we control the quantity $\varphi(\mathbb{H}, \|.\|_{\Phi_{\delta, \gamma}})$ in \eqref{eqn:varphidef}:
\begin{equation}
\varphi(\mathbb{H}, \|.\|_{\Phi_{\delta, \gamma}}) \leq \frac{\Big(1-\frac{3}{1+\beta}\Big)\alpha + \frac{8\alpha_{}}{\beta}}{2\alpha_{} - \frac{8\alpha_{}}{\beta}} \leq 1-\frac{3}{1+\beta}
\label{eqn:perpRe}
\end{equation}

{\noindent}Since the bounds  \eqref{eqn:Re} and \eqref{eqn:perpRe} are valid for all $\mathbb{H} \in U(\omega_{Y}, \omega_{YX})$, Fisher information Assumptions 1 and 2 (in \eqref{eqn:FirstFisherCond} and \eqref{eqn:SecondFisherCond}) are satisfied for ($\delta , \gamma$) inside the polyhedron set $V(\alpha,\nu, \omega_Y, \omega_{YX})$. 
\end{proof}

{\subsection{High-Dimensional Consistency of the Estimator \eqref{eqn:CovariateSelection}}
\label{section:Covariate}

In this section, we discuss the consistency properties of the estimator \eqref{eqn:CovariateSelection} in a high-dimensional scaling regime. Specifically, suppose we observe samples $\{Y^{(i)}, X^{(i)}\}_{i = 1}^n \subset \R^{p+q}$ of a collection of jointly Gaussian responses and covariates $(Y, X)$ with joint population precision matrix $\Theta^\star = \begin{pmatrix} S_Y^\star-L_Y^\star & \Theta_{YX}^\star \\ {\Theta_{YX}^\star}' & \Theta_{X}^\star \end{pmatrix} \subset \mathbb{S}^{p+q}$, where $S_Y^\star$ is sparse, $L_Y^\star$ is low-rank, and $\Theta_{YX}^\star$ is column-sparse.  Supplying these observations into the program \eqref{eqn:CovariateSelection} and obtaining estimates $(\hat{\Theta}, \hat{S}_Y, \hat{L}_Y) \subset \Sp^{p + q} \times \Sp^p \times \Sp^p$, we prove in Theorem~\ref{theorem:2} that (under certain conditions on $\Theta^\star$ and with high probability) $(a)$ the column support of $\hat{\Theta}_{YX}$ is equal to the column support of ${\Theta}_{YX}^\star$, $(b)$ $\rm{rank}(\hat{L}_Y) = \rm{rank}(L_Y^\star)$, and $(c)$ $\rm{sign}(\hat{S}_Y) = \rm{sign}(S_Y^\star)$.  Thus, the subset of covariates that are sufficient for predicting the responses and the latent-variable graphical model specifying the conditional distribution of the responses given the covariates are both correctly identified.

Proceeding in a similar manner as in Section~\ref{section:Fishercond}, we prove that the estimator \eqref{eqn:CovariateSelection} is consistent under assumptions on the conditioning of the population Fisher information $\mathbb{I}^\star$.  These assumptions are stated in terms of tangent spaces of the algebraic variety of \emph{column-sparse} matrices.  Letting $M \in \mathbb{R}^{p{\times}q}$ be a matrix with $k$ nonzero columns, the tangent space at $M$ with respect to the variety of $p \times q$ matrices with at most $k$ nonzero columns is given by:
\begin{eqnarray*}
{F}(M) \triangleq \{J \in \mathbb{R}^{p \times q} ~|~ \text{columnsupport}(J) \subseteq \text{columnsupport}(M)\}.
\end{eqnarray*}
{\noindent}Here `columnsupport' denotes the indices of the nonzero columns. As in Section~\ref{section:Fishercond}, we control the conditioning of $\mathbb{I}(\Theta^\star)$ for all subspaces $\mathbb{H}'$ in the following set: \footnote{The variety of column-sparse matrices is locally flat around $\Theta_{YX}^\star$ so that the tangent spaces at all points in a neighborhood of $\Theta_{YX}^\star$ are all equal to $F(\Theta_{YX}^\star)$.}
\begin{equation}
\begin{aligned}
\tilde{U}{(\omega_{Y})} \triangleq \Big\{\Omega(S_Y^\star) \times T'_Y \times F(\Theta_{YX}^\star) \times \mathbb{S}^{q} ~|~ &\rho({{T_{Y}'}}, T({L_{Y}^\star})) \leq \omega_{Y} \Big\}.
\end{aligned}
\end{equation}
We control the quantities $\chi({\mathbb{H}}', \tilde{\Phi}_{\delta,\gamma})$ and $\varphi(\mathbb{H}', \tilde{\Phi}_{\delta,\gamma})$ (defined in \eqref{eqn:Chieq} and \eqref{eqn:varphidef}) for all $\mathbb{H} \in \tilde{U}{(\omega_{Y})}$ and for $\tilde{\Phi}_{\delta,\gamma}(S_Y, L_Y, \Theta_{YX}, \Theta_{X})$ defined as:
\begin{equation}\tilde{\Phi}_{\delta,\gamma}(S_Y, L_Y, \Theta_{YX}, \Theta_{X}) \triangleq \max\left\{\frac{\|S_Y\|_{\ell_\infty}}{\delta}, \|L_Y\|_{2}, \frac{\|\Theta_{YX}\|_{2,\infty}}{\gamma}, \|\Theta_{X}\|_2 \right\}.
\end{equation}
As with $\Phi_{\delta,\gamma}$ in Section~\ref{section:Fishercond}, the norm $\tilde{\Phi}_{\delta,\gamma}$ is a slight variant of the dual norm of the regularizer $\delta \|S_Y\|_{\ell_1} + \mathrm{trace}(L_Y) + \gamma \|\Theta_{YX}\|_{1,2}$ in \eqref{eqn:CovariateSelection}.

{\noindent}In summary, given $(\delta, \gamma ,\omega_Y) \in \mathbb{R}_+ \times \mathbb{R}_+ \times (0,1)$ we assume that the population Fisher information $\mathbb{I}^\star$ satisfies the following conditions:
 \begin{eqnarray*}
\mathrm{Assumption~3}&:& \inf_{\mathbb{H}' \in {\tilde{U}}{(\omega_{Y})}}\chi({\mathbb{H}}', {{\tilde{\Phi}}_{\delta,\gamma}})  \geq \alpha, ~~~ \mathrm{for~some~} \alpha > 0 \\
\mathrm{Assumption~4}&:& \sup_{\mathbb{H}' \in {\tilde{U}}{(\omega_{Y})}}\varphi({\mathbb{H}}', {\tilde{\Phi}_{\delta,\gamma}})  \leq 1-\nu, ~~~ \mathrm{for~some~} \nu \in (0,1/3).
\end{eqnarray*}

{\noindent}As with the notation preceding the statement of Theorem ~\ref{theorem:main}, let $\tau_{Y}$ denote the minimum nonzero entry in magnitude of ${S_{Y}^{\star}}$, let $\sigma_{Y}$ denote the minimum nonzero singular value of ${L_{Y}^{\star}}$, and let $\mathrm{deg}(S_{Y}^\star)$ denote the maximal number of nonzeros per row/column of $S_{Y}^\star$.  Further, let $\zeta_{YX}$ denote the minimum $\ell_2$ norm over nonzero columns of ${\Theta_{YX}^{\star}}$ and let $\kappa$ be the number of nonzero columns of $\Theta_{YX}^\star$.

\begin{theorem}
Suppose we are given i.i.d observations $\{Y^{(i)}, X^{(i)}\}_{i = 1}^n \subset \mathbb{R}^{p+q}$ of a collection of jointly Gaussian covariates/responses with population precision matrix $\Theta^\star \in \mathbb{S}^{p+q}_{++}$.  Fix $\alpha > 0, \nu \in (0,1/3), \omega_{Y} \in (0,1)$.  Suppose the trade-off parameters $\delta$ and $\gamma$ are chosen such that the population Fisher information $\mathbb{I}(\Theta^\star)$ satisfies Assumptions 3 and 4.\\

Let $m_{} \triangleq \max\{\frac{1}{\delta{}}, 1, \frac{1}{\gamma{}}\}$, $\bar{m} \triangleq \max\{\delta, 1, \gamma\}$, $\beta \triangleq \frac{3-\nu}{\nu}$, and $\psi \triangleq \|(\Theta^\star)^{-1}\|_2$. Further, let$C_1{} = \frac{24}{\alpha_{}} + \frac{1}{\psi^2}$, $C_2 = \frac{8}{\alpha_{}} (\frac{1}{3\beta} + 1)$, $C_{\sigma} = C_1^2\psi^2\max\{ 12\beta + 1, \frac{1}{{C_2}\psi^2}+1\}$, $C_{samp} = \max\{\frac{1}{48\psi\beta},{48{\beta}{\psi^3}C_1^2}, 8\psi{C_2}, \frac{128{\psi^3}C_2}{\alpha}\}$, and $\lambda_{\text{upper}} = \frac{1}{m\bar{m}^2\max\{\mathrm{deg}(S_Y^\star), \kappa\}C_{samp}}$. Suppose that the following conditions hold:
\begin{enumerate}
\item $n \geq \frac{4608\psi^2\beta^2m^2(p+q)}{\lambda_{\text{upper}}^2}$; that is $n \gtrsim \Big[\frac{\beta^4}{\alpha^2} m^4\bar{m}^4\max\{\mathrm{deg}(S_{Y}^\star)^2, \kappa^2\}\Big] (p+q) $
\item $\lambda_n \in \Big[\sqrt{\frac{4608\psi^2\beta^2m^2(p+q)}{n}}, \lambda_{\text{upper}}\Big]$; \hspace{.1in} e.g. $\lambda_n \sim \beta{m}\sqrt{\frac{p+q}{n}}$
\item $\tau{}_{Y} \geq 2\gamma_{1}{}C_1{}\lambda_{n}{}$; that is  $\tau{}_{Y}  \gtrsim \frac{\beta}{\alpha}{}m\bar{m}\sqrt{\frac{p+q}{n_{}}}$ \hspace{.05in} if \hspace{.05in} $\lambda_n \sim \beta{m}\sqrt{\frac{p+q}{n}}$
\item $\sigma_{Y}{} \geq {m_{}} C_\sigma\lambda_n{}$; that is $\sigma_{Y}{} \gtrsim \frac{\beta^2}{\alpha{w_Y}}m^2\sqrt{\frac{p+q}{n_{}}}$ \hspace{.05in} if \hspace{.05in} $\lambda_n \sim \beta{m}\sqrt{\frac{p+q}{n}}$
\item $\zeta_{YX}{} \geq 2\gamma{}C_1{}\lambda_{n}{}$; that is $\zeta_{YX}{} \gtrsim \frac{\beta}{\alpha}m\bar{m}\sqrt{\frac{p+q}{n_{}}}$ \hspace{.05in} if \hspace{.05in} $\lambda_n \sim \beta{m}\sqrt{\frac{p+q}{n}}$
\end{enumerate}
\vspace{.2in}

{\noindent}Then with probability greater than $1-2\exp\{-\frac{n\lambda_n^2}{4608 \beta^2 m^2 \psi^2}\}$, the optimal solution $(\hat{\Theta},\hat{S}_Y,\hat{L}_Y)$ of \eqref{eqn:CovariateSelection} with the observations $\{Y^{(i)}, X^{(i)}\}_{i = 1}^n$ satisfies the following properties:

\begin{enumerate}
\item sign($\hat{S}_Y$) = sign(${{S}^\star_{Y}}{}$), rank($\hat{L}_Y$) = rank(${{L}_Y^\star}{}$), and $\text{columnsupport}(\hat{\Theta}_{YX}) = \text{columnsupport}(\Theta_{YX}^\star)$. \\[.005in]
\item $\Phi_{\delta,\gamma}(\hat{S}_Y - {S_{Y}^\star}{}, \hat{L}_Y - {L_{Y}^\star}{}, \hat{\Theta}_{YX}{} - {\Theta_{YX}^\star}{}, \hat{\Theta}_{X}{} - {\Theta_{X}^\star}{}) \leq C_1{}\lambda_{n}{}$;  that is $\|\hat{S}_Y - S_Y^\star\|_{\ell_\infty} \lesssim \frac{\beta}{\alpha}{m}\delta\sqrt{\frac{p+q}{n}}$,  $\|\hat{L}_Y - L_Y^\star\|_{2} \lesssim \frac{\beta}{\alpha}{m}\sqrt{\frac{p+q}{n}}$, $\|\hat{\Theta}_{YX} - \Theta_{YX}^\star\|_{2,\infty} \lesssim \frac{\beta}{\alpha}\gamma{m}\sqrt{\frac{p+q}{n}}$, $\|\hat{\Theta}_{X} - \Theta_{X}^\star\|_{2} \lesssim \frac{\beta}{\alpha}{m}\sqrt{\frac{p+q}{n}}$ \hspace{.05in} if \hspace{.05in} $\lambda_n \sim \beta{m}\sqrt{\frac{p+q}{n}}$.
\end{enumerate}
\label{theorem:2}
\end{theorem}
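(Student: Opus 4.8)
The plan is to follow, essentially verbatim, the primal--dual witness argument used to establish Theorem~\ref{theorem:main}, replacing everywhere the low-rank variety of $\Theta_{YX}^\star$ (its tangent space $T(\Theta_{YX}^\star)$, the nuclear/spectral norm pair, and the distortion parameter $\omega_{YX}$) by the column-sparse variety, its tangent space $F(\Theta_{YX}^\star)$, the $\|\cdot\|_{2,1}/\|\cdot\|_{2,\infty}$ norm pair, and the norm $\tilde\Phi_{\delta,\gamma}$. The point that actually \emph{simplifies} the argument is that the variety of column-sparse matrices is locally flat around $\Theta_{YX}^\star$ (the footnote accompanying the definition of $\tilde U(\omega_Y)$): there is no curvature to track for the $\Theta_{YX}$ block, so it behaves exactly like the sparse block $\Omega(S_Y^\star)$ rather than like a low-rank block.

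Concretely, I would first introduce a variety-constrained program analogous to \eqref{eqn:NConveProblem_N}: minimize the regularized negative log-likelihood with $\|L_Y\|_\star$ in place of $\mathrm{trace}(L_Y)$ (so $\hat L_Y \succeq 0$ can later be recovered from an inertia argument), over the set $\mathcal{M}$ on which $S_Y \in \Omega(S_Y^\star)$, $\Theta_{YX}\in F(\Theta_{YX}^\star)$, $\mathrm{rank}(L_Y)\le\mathrm{rank}(L_Y^\star)$, together with the single closeness constraint $\|\mathcal{P}_{T(L_Y^\star)^\perp}(L_Y - L_Y^\star)\|_2 \le \omega_Y\lambda_n/(2m\psi^2)$ for the low-rank block and the error-control constraint $\tilde\Phi_{\delta,\gamma}[\mathcal{A}^\dagger\mathbb{I}^\star\mathcal{A}\Delta]\le 5\lambda_n$. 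Using Assumption~3 exactly as Assumption~1 is used in Proposition~\ref{prop:FirstResultCor} gives $\tilde\Phi_{\delta,\gamma}[\Delta]\le C_1\lambda_n$ for every feasible point. The analogue of Corollary~\ref{eqn:Corollary1} then follows: conditions 3 and 4 of the theorem, via the perturbation bounds of \citet{Chand2012}, force $\mathrm{sign}(S_Y)=\mathrm{sign}(S_Y^\star)$ and $\mathrm{rank}(L_Y)=\mathrm{rank}(L_Y^\star)$ with $L_Y$ of the same inertia as $L_Y^\star$ and the $T(L_Y^\star)^\perp$ constraint locally inactive, while condition 5 --- together with $\|\Theta_{YX}-\Theta_{YX}^\star\|_{2,\infty}\le\gamma C_1\lambda_n$ and the containment $\Theta_{YX}\in F(\Theta_{YX}^\star)$ --- forces $\mathrm{columnsupport}(\Theta_{YX})=\mathrm{columnsupport}(\Theta_{YX}^\star)$ by the flat-variety argument, with no curvature considerations needed for that block.

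With smoothness established, I would linearize only the rank constraint on $L_Y$ to the tangent-space constraint $L_Y\in T(L_Y^{\mathcal{M}})$ (the column-support constraint is already a subspace constraint and needs no linearization), and then rerun the Brouwer fixed-point argument of Lemmas~\ref{lemma:Remainder}, \ref{lemma:Brower}, \ref{prop:Enbound} and Proposition~\ref{proposition:lambdaBound} with $\tilde\Phi_{\delta,\gamma}$ and the group-norm subdifferential in the optimality conditions. The one substantive change is in the remainder estimate of Lemma~\ref{lemma:Remainder}: when bounding $\|\mathcal{A}(\Delta)\|_2$ in terms of $\tilde\Phi_{\delta,\gamma}[\Delta]$, one replaces $\|\Delta\Theta_{YX}\|_2\le\gamma\cdot\tfrac{\|\Delta\Theta_{YX}\|_2}{\gamma}$ (valid for low-rank $\Delta\Theta_{YX}$) by $\|\Delta\Theta_{YX}\|_2\le\|\Delta\Theta_{YX}\|_{2,1}\le\kappa\,\|\Delta\Theta_{YX}\|_{2,\infty}=\gamma\kappa\cdot\tfrac{\|\Delta\Theta_{YX}\|_{2,\infty}}{\gamma}$, valid since $\Delta\Theta_{YX}$ has at most $\kappa$ nonzero columns --- this is the column-sparse analogue of the bound $\mu(\Omega(M))\le\mathrm{deg}(M)$ used at \eqref{eqn:mu}. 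Consequently the constant $C'$ becomes of order $\psi\bigl(2+\delta\,\mathrm{deg}(S_Y^\star)+\gamma\kappa\bigr)\le 2\psi\bar m\max\{\mathrm{deg}(S_Y^\star),\kappa\}$, which is exactly why $\max\{\mathrm{deg}(S_Y^\star),\kappa\}$ enters $\lambda_{\mathrm{upper}}$ and the sample complexity, and the constants $C_\sigma,C_2,C_{samp}$ are adjusted to absorb the change.

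Finally, mirroring Proposition~\ref{proposition:ConvertFinal} and invoking Assumption~4, I would show the tangent-space constraints are inactive at the optimum, recover $\hat L_Y\succeq0$ from the inertia conclusion, identify the optimum of the relaxed program with that of \eqref{eqn:CovariateSelection}, and read off conclusions 1 and 2. The main obstacle is not conceptual but careful bookkeeping: verifying that the group-norm duality $(\|\cdot\|_{2,1},\|\cdot\|_{2,\infty})$ and the tangent space $F(\Theta_{YX}^\star)$ slot correctly into every optimality condition, subgradient, and norm-inequality chain that in the proof of Theorem~\ref{theorem:main} used the nuclear/spectral pair and $T(\Theta_{YX}^\star)$, and re-checking the constants --- while everything concerning the blocks $S_Y^\star$ and $L_Y^\star$ carries over unchanged. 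An analogue of the parameter-choice discussion of Proposition~\ref{proposition:1} can also be given (now with $\tilde U(\omega_Y)$ and $\tilde\Phi_{\delta,\gamma}$), but it is not needed for the theorem, which simply assumes Assumptions 3 and 4.
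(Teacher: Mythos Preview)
Your proposal is correct and follows exactly the approach the paper itself takes: the paper's entire proof of Theorem~\ref{theorem:2} is the single sentence ``The strategy for the proof of Theorem~\ref{theorem:2} is analogous to that of Theorem~\ref{theorem:main},'' and you have accurately filled in the analogous details, including the key observation that the column-sparse variety is locally flat (so no curvature tracking is needed for the $\Theta_{YX}$ block) and that the remainder estimate picks up a factor of $\kappa$ via $\|\Delta\Theta_{YX}\|_2 \le \kappa\,\|\Delta\Theta_{YX}\|_{2,\infty}$, which accounts for the appearance of $\max\{\mathrm{deg}(S_Y^\star),\kappa\}$ in $\lambda_{\mathrm{upper}}$.
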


{\noindent}The strategy for the proof of Theorem~\ref{theorem:2} is analogous to that of  Theorem ~\ref{theorem:main}. 

\end{document}